\tikzset{
  photon/.style={decorate, decoration={snake}, draw=black},
  fermion/.style={draw=black, postaction={decorate},decoration={markings,mark=at position .55 with {\arrow{>}}}},
  vertex/.style={draw,shape=circle,fill=black,minimum size=5pt,inner sep=0pt},
particle/.style={thick,draw=black},
particle2/.style={thick,draw=blue},
avector/.style={thick,draw=black, postaction={decorate},
    decoration={markings,mark=at position 1 with {\arrow[black]{triangle 45}}}},
gluon/.style={decorate, draw=black,
    decoration={coil,aspect=0}}
 }
\NewDocumentCommand\semiloop{O{black}mmmO{}O{above}}
{%
\draw[#1] let \p1 = ($(#3)-(#2)$) in (#3) arc (#4:({#4+180}):({0.5*veclen(\x1,\y1)})node[midway, #6] {#5};)
}
\theoremstyle{plain}
\newtheorem{thm}{Theorem}[section]
\newtheorem{lem}[thm]{Lemma}
\newtheorem{prop}[thm]{Proposition}
\theoremstyle{definition}
\newtheorem{defn}[thm]{Definition}
\newtheorem*{thm*}{Theorem}
\newtheorem*{lem*}{Lemma}
\newtheorem*{prop*}{Proposition}
\newtheorem*{cor*}{Corollary}
\newtheorem*{exe*}{Exercise}
\newtheorem*{defn*}{Definition}
\theoremstyle{remark}
\newtheorem{rem}[thm]{Remark}
\newtheorem{ex}[thm]{Example}
\newtheorem{ass}[thm]{Assumption}
\newcommand{\R}{\mathbb{R}}
\newcommand{\Z}{\mathbb{Z}}
\newcommand{\E}{\mathbb{E}} 
\newcommand{\bbE}{\mathbb{E}} 
\newcommand{\bbX}{\mathbb{X}}
\newcommand{\calY}{\mathcal{Y}}
\newcommand{\ii}{{\mathrm{i}}}
\newcommand{\dd}{{\mathrm{d}}} 
\newcommand{\ddt}{\frac{\dd}{\dd t}
\Big\vert_{{t}=0}}
\newcommand{\C}{\mathbb{C}}
\newcommand{\Der}{{\mathrm{Der}}}
\DeclareMathOperator{\Aut}{Aut}
\DeclareMathOperator{\End}{End}
\newcommand{\T}{\textsf{T}}
\newcommand{\de}{\partial}
\newcommand{\calB}{\mathcal{B}}
\newcommand{\calH}{\mathcal{H}}
\newcommand{\calS}{\mathcal{S}}
\newcommand{\calL}{\mathcal{L}}
\newcommand{\calM}{\mathcal{M}}
\newcommand{\calP}{\mathcal{P}}
\newcommand{\calF}{\mathcal{F}}
\newcommand{\btpsi}{\boldsymbol{\widetilde{\psi}}}
\newcommand{\sfe}{{\mathsf{e}}}
\newcommand{\sfX}{{\mathsf{X}}}
\newcommand{\sfx}{{\mathsf{x}}}
\newcommand{\qtconn}{\nabla_\mathsf{G}}
\def\gpd{\,\lower1pt\hbox{$\longrightarrow$}\hskip-.24in\raise2pt
               \hbox{$\longrightarrow$}\,}       
\newcommand{\gm}{\Gamma }
\let\Tilde=\widetilde
\let\Bar=\overline
\let\Hat=\widehat
\newcommand{\boldeta}{\boldsymbol\eta}
\newcommand{\hateta}{\widehat{\boldsymbol\eta}}
\newcommand{\hatX}{\widehat{\mathsf{X}}}
\DeclareMathOperator{\dr}{d}
\DeclareMathOperator{\Map}{Map}
\newcommand{\I}{\mathrm{i}}
\newcommand{\ee}{\textnormal{e}}
\newcommand{\calV}{\mathcal{V}}
\newcommand{\romI}{\textnormal{I}}
\newcommand{\romII}{\textnormal{II}}
\begin{document}

\title[Glob. for Pert. Quant. of Nonl. Split AKSZ Sigma Models on Manifolds with Boundary]{Globalization for Perturbative Quantization of Nonlinear Split AKSZ Sigma Models on Manifolds with Boundary}
\author[A. S. Cattaneo]{Alberto S. Cattaneo}
\author[N. Moshayedi]{Nima Moshayedi}
\author[K. Wernli]{Konstantin Wernli}
\address{Institut f\"ur Mathematik\\ Universit\"at Z\"urich\\ 
Winterthurerstrasse 190
CH-8057 Z\"urich}
\email[A. S.~Cattaneo]{cattaneo@math.uzh.ch}
\address{Institut f\"ur Mathematik\\ Universit\"at Z\"urich\\ 
Winterthurerstrasse 190
CH-8057 Z\"urich}
\email[N.~Moshayedi]{nima.moshayedi@math.uzh.ch}
\address{Institut f\"ur Mathematik\\ Universit\"at Z\"urich\\ 
Winterthurerstrasse 190
CH-8057 Z\"urich}
\email[K.~Wernli]{konstantin.wernli@math.uzh.ch}
\thanks{
This research was (partly) supported by the NCCR SwissMAP, funded by the Swiss National Science Foundation, and by the
COST Action MP1405 QSPACE, supported by COST (European Cooperation in Science and Technology).  We acknowledge partial support of SNF grant No. 200020\_172498/1. K. W. acknowledges partial support by the Forschungskredit of the University of Zurich, grant no. FK-16-093.
}
\maketitle

\begin{abstract} 
We describe a covariant framework to construct a globalized version for the perturbative quantization of nonlinear split AKSZ Sigma Models on manifolds with and without boundary, and show that it captures the change of the quantum state as one changes the constant map around which one perturbs. This is done by using concepts of formal geometry.
Moreover, we show that the globalized quantum state can be interpreted as a closed section with respect to an operator that squares to zero. This condition is a generalization of the modified Quantum Master Equation as in the BV-BFV formalism, which we call the modified ``differential''  Quantum Master Equation.
\end{abstract}

\tableofcontents

\section{Introduction}

\subsection{Motivation}
The goal of this paper is to construct perturbative partition functions of certain AKSZ theories - on manifolds with and without boundary - that vary in a ``covariant fashion'' as one changes the point of expansion. This is achieved combining the BV-BFV formalism (Batalin--Vilkovisky and Batalin--Fradkin--Vilkovisky) (\cite{CMR2}) with methods of formal geometry (\cite{GK,B,CFT,BCM}). 
The globalization method in the case of a field theory on manifolds with boundary has been considered so far only in \cite{CMW2}, of which the current paper is a far reaching generalization. In \cite{CMW2} we performed this task for a particular example of an AKSZ theory, the Poisson Sigma model \cite{I,SS1,SS2} with constant Poisson structure. We briefly introduce the main players. \\ 

The \textsf{BV-BFV formalism} - briefly recalled in Section \ref{sec:BVBFV} - is a method for the perturbative quantization of gauge theories on manifolds with boundary compatible with cutting and gluing. It is named after Batalin, Fradkin and Vilkovisky, who introduced what are now known as the BV and the BFV formalisms in \cite{BV1,BV2,BF1,BF2,FV1,FV2}; see also \cite{S,Henneaux1994,Cost} and references therein. The classical framework for the BV-BFV formalism\footnote{The BV-BFV formalism is a consistent combination of the BV formalism for the bulk with the BFV formalism on the boundary.} was introduced in \cite{CMR1}. A classical BV-BFV theory associates to every manifold $\Sigma$ of a fixed dimension - possibly with boundary - the data of a ``BV-BFV manifold'' (\cite{CMR1}), the space of fields $\calF_\Sigma$ (plus extra data). Classical BV-BFV theories can be quantized by the construction in \cite{CMR2}. This procedure associates to $\Sigma$ a bi-complex $\calH_{\Sigma}$ with two commuting coboundary operators $\Delta_{\calV_\Sigma}$ (the BV Laplacian) and $\Omega_{\de\Sigma}$ (the BFV boundary operator). The \textsf{modified Quantum Master Equation} (mQME) is the statement that the partition function $\psi_\Sigma$ is closed with respect to the coboundary operator $\hbar^2\Delta_{\calV_\Sigma} + \Omega_{\de\Sigma}$, i.e.
\begin{equation}
(\hbar^2\Delta_{\calV_\Sigma} + \Omega_{\de\Sigma})\psi_{\Sigma} = 0.
\end{equation}  However, this construction works only if the space of fields is linear, i.e. a vector space. If the space of fields is nonlinear one has to linearize it, which amounts to working with a formal neighbourhood of a classical solution in the space of fields. In this paper we show how this can be done consistently for a large set of solutions at once for AKSZ theories. \\ 

\textsf{AKSZ theories} were introduced by Alexandrov, Kontsevich, Schwarz and Zaboronsky in \cite{AKSZ}. They form a large class of topological BV theories that naturally admit BV-BFV extensions, as was shown in \cite{CMR1} and is recalled in the present paper in Section \ref{AKSZ}. In AKSZ theories the space of fields $\calF_{\Sigma}$ is a space of graded maps with target a fixed graded manifold $\calM$. If the target is a vector space, then also the space of fields has a vector space structure, but in many examples one is interested in the case where the target is nonlinear (a prominent one being the Poisson Sigma Model, see \cite{CF4}). In this case, the quantization is constructed by linearizing around constant maps. \\

In this paper, we use methods of formal geometry, reviewed in Appendix \ref{app:formal_geometry} (see also \cite{CF3,CFT,BCM}, and \cite{LS} for the case where the moduli space of solutions is graded) to define a ``covariant partition function'' $\btpsi_\Sigma$. It is an inhomogenoeous differential form with values in the vector bundle $\Hat{\calH}_{\Sigma,tot}$ over (the body of) the target with fiber over $x$ the space of states of the BV-BFV quantization around $x$. In Section \ref{sec_mdQME} we show that it satisfies the following generalization of the mQME that we call ``mdQME'' (for \textsf{modified differential Quantum Master Equation}):
\begin{equation}
\left(\dd_x -\I\hbar \Delta_{\calV_\Sigma} + \frac{\I}{\hbar}\Omega_{\de\Sigma}\right)\btpsi_{\Sigma} = 0. \label{eq:intro_mdQMe}
\end{equation}
We also show that the \textsf{quantum Grothendieck BFV (GBFV) operator} $\qtconn := \dd_x -\I\hbar \Delta_{\calV_\Sigma} + \frac{\I}{\hbar}\Omega_{\de\Sigma}$ squares to zero. The operator $\nabla_\mathsf{G}$ can be thought of as a ``connection'' on the total space $\Hat{\calH}_{\Sigma,tot}$ and hence we can think of it as a flat connection on $\Hat{\calH}_{\Sigma,tot}$ (see Subsection \ref{flat}). If one interprets $\qtconn$ as a quantum version of the Grothendieck connection \eqref{AppA:Aconnection_G}, Equation \eqref{eq:intro_mdQMe} says that $\btpsi_\Sigma$ corresponds to the Taylor expansion of a globally defined object on $M$. \\

One of the goals of this construction is to go further towards the deformation quantization of the relational symplectic groupoid \cite{C,CC1,CC2}. The next step will be an extension of the results obtained here to the Poisson Sigma Model with alternating boundary conditions \cite{CMW3}. However, we also hope to deepen the understanding of how perturbative partition functions depend on the point of expansion. In AKSZ Sigma Models, there is a nice smooth part of the moduli space of classical solutions given by constant maps. But e.g. in Chern--Simons theory the body of the target is a point, and one is interested in expanding around points representing equivalence classes of flat connections. This will be the subject of further investigation.

\subsection{Main results}
Let us summarize the main results of the paper. One of the main theorems of this paper is the modified differential Quantum Master Equation for anomaly free, unimodular AKSZ theories:
\begin{thm*}[\ref{thm:mdQME}]
Consider the full covariant perturbative state $\btpsi_{\Sigma,x}$ as a quantization of an anomaly free and unimodular split AKSZ theory with target $T^*[d-1]M$, where $M$ is a graded manifold. Then
\begin{equation}
\left(\dr_x -\I\hbar \Delta_{\calV_{\Sigma,x}} + \frac{\I}{\hbar} \boldsymbol{\Omega}_{\de \Sigma}\right) \btpsi_{\Sigma,x}=0,
\end{equation}
where we denote by $\dr_x$ the de Rham differential on $\Bar{M}$, the body of the graded manifold $M$.
\end{thm*}
Another main result is that the quantum GBFV operator is a coboundary operator:
\begin{thm*}[\ref{thm:flatness}]
The operator $\nabla_\mathsf{G}$ squares to zero, i.e.
\begin{equation}
(\nabla_\mathsf{G})^2\equiv 0.
\end{equation}
\end{thm*}
We also show how the state and the BFV operator transform under change of data. This is captured in the following theorem:
\begin{thm}[\ref{thm:dep_choices}]
Let $\boldsymbol{\Omega}_t$ be defined as in Definition \ref{full_BFV} and let $\btpsi_t$ be defined as in \ref{full_state_2} for all $t\in[0,1]$. Then we have 
\begin{align}
\frac{\dd}{\dd t}\Big\vert_{t=0}\boldsymbol{\Omega}_t &= \dd_x\tau+[\boldsymbol{\Omega}_{t=0},\tau] \\
\frac{\dd}{\dd t}\Big\vert_{t=0}\btpsi_t &= \qtconn (\btpsi_{t=0} \bullet \varrho) - \tau\btpsi_{t=0} 
\end{align}
for some operator $\tau\in\Gamma(\End(\calH_{tot}))$ and a section $\varrho \in \gm(\calH_{tot})$. 
\end{thm}

\subsection{Summary}
Let us give a brief overview of the paper.\\

\begin{itemize}
\item{In Section \ref{sec:BVBFV} we review the main concepts of the BV-BFV formalism as in \cite{CMR2}. 
}\\
\item{In Section \ref{sec:AKSZ_Quantization} we recall the notion of an AKSZ Sigma Model and describe the split version. Moreover, we linearize the AKSZ Model using methods of formal geometry and explain the globalization construction by adding a \textsf{formal globalization} part to the action. We formulate everything according to the BV-BFV formalism at the classical as well as the quantum level, where we also introduce the \textsf{full covariant state}. 
}\\
\item{In Section \ref{sec_mdQME} we introduce the \textsf{quantum GBFV operator} and formulate the \textsf{modified differential Quantum Master Equation}. One of the main result there is the proof of the mdQME. Moreover, we prove that the quantum GBFV operator is a coboundary operator, such that we have a well-defined cohomology theory.
}\\
\item{In Section \ref{sec:dep_choices} we show how the state and the BFV boundary operator transform under change of propagator, residual fields and exponential maps. 
}
\end{itemize}
\vspace{0.3cm}

Various details are discussed in the appendices: 

\begin{itemize}
\item{In Appendix \ref{app:Conf} we recall the compactification of various configuration spaces and their boundary strata.}\\
\item{In Appendix \ref{app:formal_geometry} we recall some notions of fomal geometry and its extension to graded manifolds.}
\end{itemize}
\subsection*{Acknowledgements}
We thank I. Contreras for helpful comments. Moreover, we want to thank the referee for pointing out important and helpful comments.

\tikzset{internal/.style={draw, shape=circle, fill,black,inner sep=2pt}}
\tikzset{residual/.style={draw, shape=circle, black,inner sep=1pt}}
\tikzset{propagator/.style={thick, -latex}}

\section{The BV-BFV formalism}
\label{sec:BVBFV}

The BV-BFV formalism is a gauge fixing formalism for gauge theories on manifolds with boundary, both at the classical 
(\cite{CMR1}) and quantum (\cite{CMR2}) level. We briefly  recall the most important ideas. 
Readers already familiar with the BV-BFV formalism  as in \cite{CMR2} can skip this section. Another reference for learning about this formalism is \cite{CattMosh1}.

\subsection{Field theory}
We start with the following definition of a classical field theory. 
\begin{defn}[Classical field theory]
A $d$-dimensional \textsf{classical field theory} associates to every $d$-dimensional manifold $M$ a space of fields $F_M$ and an action functional $S_M\colon F_M \to \R$.
\end{defn}
Field theories are usually required to be \textsf{local}. For the purpose of the present paper, the following definition will suffice. 
\begin{defn}[Local field theory]
We say that a field theory $(F_M,S_M)$ is \textsf{local} if there is a fiber bundle $E \to M$ such that $F_M = \Gamma(E)$ and there is an integer $k$ such that 
\begin{equation}
S_M(\phi) = \int_M L[j^k(\phi)],\label{eq:Lagrangian_field_theory}
\end{equation}
where $j^k$ denotes $k$-th jet prolongation and $L \colon J^kE \to \mathrm{Dens}(M)$ is a function on the $k$-th jet bundle of $E$ with values in densities of $M$. $L$ is called the \textsf{Lagrangian} of the theory. 
\end{defn}
Let $(F_M,S_M)$ be a local  field theory. If $M\not=\varnothing$ and we don't fix any boundary conditions, there is a $1$-form $\alpha_{\de M}\in\Omega^1(F_{\de M})$ (the \textsf{Noether} $1$-form) such that the variation of the action $S_M$ is given by 
$$\delta S_M=\text{EL}_M+\pi^*_M\alpha_{\de M},$$
where $\pi_M\colon F_M\to F_{\de M}$ is the natural surjective submersion from the space of fields $F_M$ onto the space of fields $F_{\de M}$ on the boundary $\de M$. $F_{\de M}$ is given by restrictions of bulk fields and their normal jets to the boundary. We denote by $\text{EL}_M$ the $1$-form\footnote{$\text{EL}_M$ is the term that depends only on the variations of the fields but not on higher jets.} coming from the \textsf{Euler-Lagrange equations} (EL equations). The classical solutions are given by the critical points of $S_M$, i.e. by solutions of $\delta S_M=0$.
One can define a presymplectic form $\omega_{\de M}$ on $F_{\de M}$ by setting $\omega_{\de M}:=\delta\alpha_{\de M}$ (we think of $\delta$ as the de Rham differential on the space of fields). By techniques of symplectic geometry, such as \textsf{symplectic reduction}, one can obtain a symplectic manifold $(F^\de_{\de M},\omega^\de_{\de M})$.
Moreover, this construction is compatible with cutting and gluing (\cite{CMR1,CMR3}). This construction leads to a nice quantum formulation in the guise of path integrals after choosing a suitable polarization (\cite{CMR2}). We will discuss these issues in this section. 

\begin{rem}
Note that if $\de M=\varnothing$ we get the usual Euler-Lagrange equations from $\delta S_M=0$.
\end{rem}

\subsection{Finite dimensional BV theory}

Let $M$ be a closed manifold and let $F_M$ denote the space of fields associated to $M$. If we consider a regular\footnote{This means that the Hessian of the Lagrangian is weakly non degenerate.} local field theory $S_M\colon F_M\to\R$ the partition function in the path integral approach is
\begin{equation}
\label{state}
\psi_M=\int_{\phi\in F_M}\ee^{\frac{\I}{\hbar}S_M(\phi)}\mathscr{D}\phi.
\end{equation}
Usually, $F_M$ is infinite-dimensional, and one cannot define\footnote{Only in special situations, i.e. $\dim M = 1$, and some examples discussed in \cite{GJ}.} $\mathscr{D}\phi$. The way out is usually to translate the formal asymptotics as $\hbar \to 0$ of finite-dimensional integrals to the infinite-dimensional case. The terms in the asymptotic expansion are convenienetly labeled by Feynman diagrams \cite{Feynman1949,Feynman1950,P}. If the critical points of the action functional $S_M$ are degenerate, one needs to gauge-fix the theory before one can use the formal asymptotics .The most powerful gauge fixing formalism is the BV formalism. We briefly review its finite-dimensional version. Further references for gauge theories, different gauge fixing formalisms (including BV) and their perturbative quantization are
\cite{Mn,Mn2,R}. \\
The start is the following definition: 

\begin{defn}[BV manifold]
A \textsf{BV manifold} is a triple  $(\calF,\omega,\calS)$, where $\calF$ is a supermanifold with $\mathbb{Z}$-grading, $\omega$ an odd symplectic form of degree $-1$ on $\calF$, and $\calS$ is an even function of degree zero on $\calF$, such that 
\begin{equation}
(\calS,\calS) = 0.
\end{equation}
\end{defn}
Here, following Batalin and Vilkovisky (\cite{BV1,BV2}), we denote the Poisson bracket induced by the odd symplectic form with round brackets $(\enspace,\enspace)$.  
\begin{rem}[Grading on $\calF$]
Note that we have two different gradings on $\calF$, the $\Z_2$-grading from the supermanifold structure and an additional $\Z$-grading. In phyics, the $\Z$-grading is referred to as \textsf{ghost number} and the parity corresponds to bosonic and fermionic particles. Since we consider only bosonic theories, the $\Z_2$-grading coincides with the reduction of the $\Z$-grading.
\end{rem}
In a Darboux chart $(q^i,p_i)$, we can define the \textsf{BV Laplacian} by
\[
\Delta^{\text{loc}}=\sum_i(-1)^{\vert q^{i}\vert}\frac{\partial^2}{\partial q^{i}\partial p_i}.
\]
Then we get that $(\Delta^{\text{loc}})^2=0$ and for two functions $f,g$, $\Delta^{\text{loc}}(fg)=\Delta^{\text{loc}} fg\pm f\Delta^{\text{loc}} g\pm (f,g)$. 
This extends to a well-defined global operator $\Delta$ on half-densities (see \cite{Khudaverdian2004, Severa2006}). 

Moreover, given a half-density $f$ and a Lagrangian submanifold $\calL \subset \calF$, we can define a \textsf{BV integral} $
\int_{\mathcal{L}}f $
by restricting the half-density to the Lagrangian where it becomes a density and can be integrated. The main result in the Batalin--Vilkovisky formalism is the following Theorem. 
\begin{thm}[Batalin--Vilkovisky \cite{BV1}]
\label{thm:BV}
If we assume that the integrals converge, then
\begin{itemize} 
\item{If $f=\Delta g$, then $\int_{\mathcal{L}}f=0$,
}
\item{If $\Delta f=0$ and $(\calL_t)$ is a smoothly varying family of Lagrangians, then $\frac{\dd}{\dd t}\int_{\mathcal{L}_t}f = 0$.
}
\end{itemize}
\end{thm}

\begin{rem}
The second point of Theorem \ref{thm:BV} tells us that if we would have an ill-defined integral $\int_{\calL_0}f$ for some Lagrangian submanifold $\calL_0$, but we know that $\Delta f=0$, then we can define the value of the integral by a well-defined one $\int_{\calL_1}f$ for some Lagrangian submanifold $\calL_1$, and this does not depend on the choice of $\calL_1$ as long as we deform it continuously.
\end{rem}

This procedure is called \textsf{gauge-fixing}. This construction can be extended to any (super)manifold. Moreover, considering $f=\ee^{\frac{\I}{\hbar}S}$, two other conditions arise, which are the Master Equations for the classical and quantum level:
\begin{align}
(S,S)&=0,\tag{Classical Master Equation (CME)}\\
(S,S)-2\I\hbar\Delta S&=0.\tag{Quantum Master Equation (QME)}
\end{align}
The latter is equivalent to $\Delta \ee^{\frac{\I}{\hbar}S} = 0$. The former is the classical limit of the latter for $\hbar\to 0$, and motivates the definition of BV manifold as given above. 

\subsection{Classical BV-BFV formalism}
\label{classicalBV-BFV}
We now turn to the infinite-dimensional case and review the main definitions of references \cite{CMR1}. 
We first consider the classical BV formalism in field theory and its extension to manifolds with boundary.
\begin{defn}[BV theory]
A $d$-dimensional \textsf{BV theory} is the association of a BV manifold  $M\mapsto (\calF_M,\omega_M,\calS_M)$ to every closed $d$-manifold $M$. 
\end{defn}
\begin{rem}
These BV manifolds are typically infinite-dimensional. This means that neither the BV Laplacian nor the BV integral are defined (at least not without further work). 
\end{rem}
\begin{defn}[BV extension]
We say that a BV theory $(\calF_M,\omega_M,\calS_M)$ is a \textsf{BV extension} of a local field theory $M \mapsto (F_M,S_M)$ if for all closed $d$-manifolds $M$, we have that the degree 0 part $(\calF_M)_0$ of $\calF_M$ satisfies $(\calF_M)_0 = F_M$ and $\calS_M\big|_{(\calF_M)_0} = S_M$. Moreover, we want $\calF_M,\calS_M$ and $\omega_M$ to be local.
\end{defn}
To extend the BV formalism to manifolds with boundary one needs its Hamiltonian counterpart, the BFV formalism \cite{BF1,BF2, FV1, FV2}.

\begin{defn}[BFV manifold]
A \textsf{BFV manifold} is a triple \begin{equation}\calF^\partial=(\mathcal{F}^\partial,\omega^\partial,Q^\partial) \end{equation} where $\mathcal{F}^\partial$ is a graded manifold, $\omega$ an even symplectic form of degree $0$, and $Q^\partial$ a degree $1$ cohomological, symplectic vector field on $\calF^\partial$. If $\omega^\partial=\delta\alpha^\partial$ is exact, 
the BFV manifold is called \textsf{exact}.
\end{defn}
%
 Again, we denote by $\delta$ the de Rham differential on the space of fields. 
The notion of BV theory can be extended to manifolds with boundary as was shown in \cite{CMR1,CMR2}. On the boundary we will use the BFV formalism. The compatibility between the BV formalism and the BFV formalism is captured in the following definition. \begin{defn}[BV-BFV manifold]
 A \textsf{BV-BFV manifold} over a given exact BFV manifold $\calF^\partial=(\mathcal{F}^\partial,\omega^\partial=\delta \alpha^\partial,Q^\partial)$ is a quintuple \begin{equation}
 \calF = (\calF,\omega,\calS,Q,\pi),\end{equation} where \begin{itemize} 
 \item  $\calF$ is a graded manifold, 
 \item $\omega$ is an even symplectic form of degree $0$, 
 \item $\calS$ is an even function of degree $0$, 
 \item $Q$ is a degree $1$ cohomological vector field, 
 \item  $\pi\colon \calF \to \calF^\partial$ is a surjective submersion
 \end{itemize} such that \begin{equation}
 \label{mCME0}
 \iota_{Q}\omega=\delta \calS+\pi^*\alpha^\partial
 \end{equation}and $Q^\partial=\delta\pi Q$ where $\delta\pi$ denotes the differential of $\pi$. \end{defn}
 \begin{rem} 
\label{point} 
If $\calF^\partial$ is a point, we get that $(\calF_M,\omega_M,\calS_M)$ is a BV manifold. The shorthand notation for a BV-BFV manifold is $\pi\colon\calF \to \calF^\de$
\end{rem}

Note that by Remark \ref{point}, the following notion generalizes the one of a BV theory.
\begin{defn}[BV-BFV theory]
\label{defn:BVBFV_theory}
A $d$-dimensional \textsf{BV-BFV theory} associates to every closed $(d-1)$-dimensional manifold $\Sigma$ a BFV manifold $\calF^\partial_\Sigma$, and to a $d$-dimensional manifold $M$ with boundary $\de M$ a BV-BFV manifold $\pi_M\colon \calF_M \to \calF^\de_{\de M}$.
\end{defn}

\begin{rem}
Formally, for the Hamiltonian vector field $Q$ of $\calS$, one can write $(\calS,\calS)=\iota_Q\iota_Q\omega=Q(\calS)$. If we consider a BV-BFV theory for a manifold $M$ with boundary $\de M$, we get that 
$$Q(\calS)=\pi^*(2\calS^\de-\iota_{Q^\de}\alpha^\de).$$
Equivalently, we get
\begin{equation}
\label{mCME}
\iota_Q\iota_Q\omega=2\pi^*\calS^\de.
\end{equation}
We call \eqref{mCME} the \textsf{modified Classical Master Equation (mCME)}.

\end{rem}

It was shown in \cite{CMR1} that \textsf{abelian $BF$ theory} is an example of a BV-BFV theory. 
\begin{ex}[Abelian $BF$ theory]
\textsf{Abelian $BF$ theory} is given by the following data: 
\begin{align*}
\calF_M &= \Omega^\bullet(M)[1]\oplus \Omega^\bullet(M)[d-2]\ni \mathsf{X}\oplus \boldsymbol{\eta}\\
\omega_M&=\int_M\delta\mathsf{X}\land \delta\boldsymbol{\eta}\\
\calS_M&=\int_M \boldsymbol{\eta}\land\dd\mathsf{X}\\
Q_M&=(-1)^d\int_M\left(\dd\boldsymbol{\eta}\land\frac{\delta}{\delta\boldsymbol{\eta}}+\dd\mathsf{X}\land\frac{\delta}{\delta\mathsf{X}}\right)
\end{align*}
\end{ex}

\begin{defn}[$BF$-like theories]
We say that a BV-BFV theory is \textsf{$BF$-like} if
\begin{align}
\calF_M &= (\Omega^\bullet(M)\otimes V[1])\oplus (\Omega^\bullet (M)\otimes V^*[d-2])\\
\calS_M &= \int_M\left(\langle\boldsymbol{\eta},\dd\mathsf{X}\rangle+\calV(\mathsf{X},\boldsymbol{\eta})\right),
\end{align}
where $V$ is a graded vector space, $\langle\enspace,\enspace\rangle$ denotes the pairing between $V^*$ and $V$, and $\calV$ denotes some density-valued function of the fields $\mathsf{X}$ and $\boldsymbol{\eta}$,
such that $\calS_M$ satisfies the Classical Master Equation for $M$ without boundary.
\end{defn}

\begin{ex}[Quantum mechanics]
Consider $M$ to be a $1$-dimensional manifold, i.e. $d=1$ and $V=W[-1]$ with $W$ concentrated in degree zero. Denote by $P$ and $Q$ the degree-zero form components of $\mathsf{X}$ and $\boldsymbol{\eta}$, respectively. Choose a volume form $\dd t$ on $M$ and a function $H$ on $T^*W$. Set $\calV(\mathsf{X},\boldsymbol{\eta}):=H(\mathsf{X},\boldsymbol{\eta})\dd t=H(Q,P)\dd t$. Then 
\begin{equation}
\calS_M=\int_M\left(\sum_i P_i\dot{Q}^{i}+H(Q,P)\right)\dd t,
\end{equation}
is the action of classical 
mechanics in the Hamiltonian formalism.
\end{ex}

\begin{ex}[$BF$-like AKSZ theories \cite{AKSZ}]
Assume we are given a function $\Theta$ on $T^*[d-1](V[1])=V[1]\oplus V^*[d-2]$ that is of degree $d$ such that $\{\Theta,\Theta\}=0$, where $\{\enspace,\enspace\}$ is the canonical Poisson structure on the shifted cotangent bundle. Set $\calV(\mathsf{X},\boldsymbol{\eta})$ to be the top degree part of $\Theta(\mathsf{X},\boldsymbol{\eta})$. 
\end{ex}


\subsection{Quantum BV-BFV formalism}
In \cite{CMR2} the notion of a quantum BV-BFV theory was given and it was shown how to perturbatively quantize a classical BV-BFV theory\footnote{We have to assume certain condtions which are in particular satisfied for $BF$-like theories}. Let us briefly review this\footnote{We slighty changed the definition of quantum BV-BFV theory so that in principle it does not depend on a classical BV-BFV theory.}. 
\begin{defn}[Quantum BV-BFV theory]
A $d$-dimensional \textsf{quantum BV-BFV theory} associates 
\begin{itemize}
\item To every closed $(d-1)$-dimensional manifold $\Sigma$ a graded $\C[[\hbar]]$-module $\mathcal{H}_{\Sigma}$, 
\item To every $d$-dimensional manifold (possibly with boundary) $M$ a finite-dimensional BV manifold $\mathcal{V}_M$, a degree 1 coboundary operator $\Omega_{\de M}$ on $\mathcal{H}_{\de M}$  and a homogeneous element\footnote{Typically, $\psi$ will have degree 0. This is the case when the gauge-fixing Lagrangian (see below) has degree zero, in the sense that its Berezinian bundle has degree zero. This is the case in all examples we consider.} $$\psi_M \in \widehat{\mathcal{H}}_M := \mathrm{Dens}^{\frac12}(\mathcal{V}_M) \otimes \mathcal{H}_{\de M},$$ where $\mathrm{Dens}^\frac{1}{2}(\calV_M)$ denotes the space of half-densities on $\calV_M$,
\end{itemize}
such that 
\begin{equation}(\hbar^2\Delta_{\mathcal{V}_M} + \Omega_{\de M})\psi_M = 0.
\label{mQME}
\end{equation}
\end{defn}
The shorthand notation for a quantum BV-BFV theory is $M \mapsto (\widehat{\mathcal{H}}_M,\psi_M,\Delta_{\mathcal{V}_M},\Omega_{\de M})$.  Let us introduce some terminology: We call $\mathcal{V}_M$ the \textsf{space of residual fields}, $\mathcal{H}_{\de M}$ the \textsf{space of boundary states} and $\psi_M$ the \textsf{state}. $\Delta_{\mathcal{V}_M}$ denotes the canonical BV Laplacian on half-densities on the BV manifold $\mathcal{V}_M$. Recall that $\Delta_{\mathcal{V}_M}^2 =0$.
Hence, $\widehat{\mathcal{H}}_M$ carries the two commuting differentials $\Delta_{\mathcal{V}_M}$ and $\Omega_{\de M}$ which gives it the structure of a bicomplex.  We call $\Omega_{\de M}$ the quantum BFV boundary operator.
The condition \eqref{mQME} is called the \textsf{modified Quantum Master Equation}.
\begin{defn}[Equivalence]
We say that two quantum BV-BFV theories $(\widehat{\mathcal{H}}_M,\Delta_{\mathcal{V}_M},\Omega_{\de M},\psi_M)$ and $(\widehat{\mathcal{H}}_M',\Delta_{\mathcal{V}'_M},\Omega'_{\de M},\psi'M)$ are equivalent if  for every manifold $M$ with boundary $\partial M$ there is a quasi-isomorphism of bicomplexes \begin{equation}I_M\colon (\widehat{\mathcal{H}}_M,\Delta_{\mathcal{V}_M},\Omega_{\de M}) \to  (\widehat{\mathcal{H}}_M',\Delta_{\mathcal{V}'_M},\Omega'_{\de M})\end{equation}
such that $I_M(\psi_M)= \psi'_M$.
\end{defn}
\begin{defn}[Change of data]
\label{change_of_data}
We say that two quantum BV-BFV theories $(\widehat{\mathcal{H}}_M,\Delta_{\mathcal{V}_M},\Omega_{\de M},\psi_M)$ and $(\widehat{\mathcal{H}}_M,\Delta_{\mathcal{V}'_M},\Omega'_{\de M},\psi'_M)$
are related by \textsf{change of data} if there is an operator $\tau$ of degree 0 on $\mathcal{H}_{\de M}$ and an element $\chi \in \widehat{\mathcal{H}}_M$ with $\deg(\chi)=\deg(\psi)-1$ such that
\begin{align}
\begin{split}
\Omega'_{\de M} &= [\Omega_{\de M},\tau] \\
\psi'_M &=  (\hbar^2\Delta_{\calV_M} + \Omega_{\de M})\chi_M - \tau \psi_M
\end{split}
\end{align} 
\end{defn}
Let us now explain how to produce a quantum BV-BFV theory by perturbative quantization of a classical BV-BFV theory. Fix a classical BV-BFV theory $\pi\colon \calF \to \calF^\de$. For simplicity we shall assume that $\calF$ and $\calF^\de$ are always \emph{vector spaces}, which is sufficient for the present paper. For a general discussion see \cite{CMR2}.
\subsubsection{The space of states}
Consider a $(d-1)$-dimensional manifold $\Sigma$. Then the BV-BFV theory associates a symplectic vector space $(\calF^\de_\Sigma,\omega^\de_\Sigma,Q_\Sigma^\de)$.   Morally, we want to construct $\mathcal{H}_{\Sigma}$ and $\Omega_{\Sigma}$ as a geometric quantization of this symplectic vector space. More precisely, the construction proceeds as follows.
We require the data of a polarization\footnote{We have only considered the case of real polarizations so far.} $\calP$ of this symplectic vector space. For our purposes, a splitting 
\begin{equation}
\calF^\de_\Sigma = \mathcal{B}^\calP_{\Sigma} \oplus \mathcal{K}^\calP_{\Sigma}
\end{equation}
of $\calF^\de_\Sigma$ into Lagrangian subspaces is sufficient. Here $\mathcal{K}^\calP_{\Sigma}$ is thought of as the Lagrangian distribution on $\calF^\de_\Sigma$ and $\mathcal{B}^\calP_\Sigma$ is identified with the leaf space of the polarization. Given a polarization $\mathcal{P}$ the associated space of states $\calH_{\de M}$ is a certain space of functionals  on $\mathcal{B}^\calP_\Sigma$. We will discuss the space of states for $BF$-like theories in \ref{sec:BF_like}.

\subsubsection{Splitting the space of fields} To define the quantum state we proceed with the following constructions. Consider a $d$-manifold $M$ (possibly with boundary) and the associated BV-BFV manifold $(\calF_M,\omega_M,\calS_M,Q_M,\pi_M)$ over the exact BFV manifold $(\calF^\de_{\de M},\omega^\de_{\de M}=\delta\alpha^\de_{\de M},Q^\de_{\de M})$. Then, choosing a polarization $\calP$ on $\de M$, we choose a splitting
\begin{equation}
\calF_M\cong\calB_{\de M}^\calP\oplus\calY,
\end{equation}
where $\calY$ denotes some complement. This splitting is subject to the following assumption\footnote{This assumption forces one to choose singular extensions of boundary fields}. 
\begin{ass}[\cite{CMR2}]\label{ass:singularsplitting}
There is a weakly symplectic form $\omega_\calY$ on $\calY$ such that $\omega_M$ is the extension of $\omega_\calY$ to $\calF_M$. 
\end{ass} Formally, we can think of $\calB^\calP_{\de M}$ as the space of \textsf{boundary fields} and $\calY$ the space of \textsf{bulk fields}. Depending on the boundary polarization, we split $\calY$ into residual fields and some complement, i.e. we choose a splitting 
\begin{equation} 
\calY=\calV^\calP_M\oplus \calY'
\end{equation}
subject to the following assumption\footnote{This assumption is rather strong but can be slightly relaxed to the notion of \emph{hedgehog fibration}.} 
\begin{ass}\label{ass:symplecticproduct}
We assume the following hold:
\begin{enumerate}
\item $\calV^\calP_M, \calY'$ are BV manifolds, 
\item $\calV^\calP_M$ is finite-dimensional
\item $\omega_\calY = \omega_{\calV^\calP_M} + \omega_{\calY'}$.
\end{enumerate}
\end{ass}
We call the complement $\calY'$ the space of \textsf{fluctuation fields}. 
Residual fields are also called \textsf{low energy fields} or \textsf{slow fields} and fluctuation fields are also called \textsf{high energy fields} or \textsf{fast fields}.  Typically we choose $\calV^\calP_M$ as the solutions of $\delta \calS_M^{0}=0$ modulo gauge transformations, where $\calS_M^0$ denotes the quadratic part of the action $\calS_M$. This is the minimal choice, and is typically called the space of \textsf{zero modes}. Other choices are related by the equivalence relations above. 
\begin{defn}
A splitting 
\begin{equation}
\label{split}
\calF_M \cong \calB_{\de M}^\calP \oplus \calV^\calP_M\oplus \calY'
\end{equation}
is called good if it satisfies Assumptions \ref{ass:singularsplitting} and \ref{ass:symplecticproduct}
\end{defn}
\begin{rem}[Connection to Atiyah's TQFT formulation]
From the point of view of topological quantum field theories (TQFTs) as functors $\textbf{Cob}_n\to \textbf{Vect}_\C$ from the $n$-cobordism category (objects are $(n-1)$-manifolds bounding an $n$-manifold and morphisms are exactly the bounding $n$-manifolds connecting the objects) to the category of vector spaces over the complex numbers, it is clear that the quantum state should depend on the bulk. This can be seen by using the fact that the state represents exactly the bounding manifold between the objects and thus a morphism of the cobordism category. This also makes sense for manifolds without boundary, in which case the state is given by a partition function $Z\colon \C\to\C$, where as a morphism in $\textbf{Cob}_n$ it represents any closed $n$-manifold, seen as a bounding manifold connecting the empty $(n-1)$-manifold, i.e. as a morphism $\varnothing\to\varnothing$.
\end{rem}

\subsubsection{The quantum state in $BF$-like theories}
\label{sec:BF_like}
The quantum state in $BF$-like theories is defined perturbatively in terms of Feynman graphs by considering integrals defined on the configuration space of these graphs. In $BF$-like theories there are two preferred polarizations, namely the $\frac{\delta}{\delta \mathsf{X}}$- and $\frac{\delta}{\delta\boldsymbol{\eta}}$-polarization. We specify a polarization by splitting the boundary $\de M$ of the manifold $M$ into two parts $\de_1M$ and $\de_2M$, where we choose the $\frac{\delta}{\delta\boldsymbol{\eta}}$-polarization on $\de_1M$ and the $\frac{\delta}{\delta\mathsf{X}}$-polarization on $\de_2M$. We denote the $\mathsf{X}$-leaf by $\mathbb{X}\in\calB^{\frac{\delta}{\delta\boldsymbol{\eta}}}_{\de M}$ and the $\boldsymbol{\eta}$-leaf by $\E\in \calB^\frac{\delta}{\delta\mathsf{X}}_{\de M}$.

For $BF$-like theories, the polarization determines the first splitting as 
\begin{align*}
\calB^\calP_{\de M} &= (\Omega^\bullet(\de_1M)\otimes V[1]) \oplus (\Omega^\bullet(\de_2M) \otimes V^*[d-2]) \\
\calY &= (\Omega^{\bullet}(M,\de_1M)\otimes V[1]) \oplus (\Omega^\bullet(M,\de_2M)\otimes V^*[d-2])
\end{align*} 
The minimal space of residual fields is isomorphic to
\begin{equation}
\calV^\calP_M\cong(H^\bullet(M,\de_1M)\otimes V[1] )\oplus (H^\bullet(M,\de_2M)\otimes V^*[d-2]),
\end{equation}
for some graded vector space $V$. A good splitting is then determined by an splitting of the complex of de Rham forms with relative boundary conditions into a subspace $\calV^\calP_M$ isomorphic to cohomology and a complement $\calY'$ in a way compatible with the symplectic structure. One possibility to do so is to use a Riemannian metric and embed the cohomology as harmonic forms. \\
Before we can introduce the quantum state we need to introduce 
the concept of \textsf{composite fields}, which we denote by square brackets $[\enspace]$, e.g. for a boundary field $\mathbb{A}$ we will write $[\mathbb{A}^{i_1}\dotsm \mathbb{A}^{i_k}]$. They can be understood as a \textsf{regularization} of higher functional derivatives: the higher functional derivative $\frac{\delta^k}{\delta\mathbb{A}^{i_1}\dotsm \delta\mathbb{A}^{i_k}}$ gets replaced by a first order functional derivative $\frac{\delta}{\delta[\mathbb{A}^{i_1}\dotsm \mathbb{A}^{i_k}]}$. Concretely, this corresponds to introducing additional boundary vertices as in Figure \ref{fig:composite_field_vertices}. 

\begin{rem}
In fact, this concept will not be needed for the definition of the principal part of the quantum state.
We will use this concept to define the full part of the quantum state where we need to make sure that it will be compatible with the quantum BFV boundary operator, where higher functional derivatives do indeed appear as we will see.
\end{rem}

\begin{defn}[Regular functional]
\label{regular_fun}
A \textsf{regular functional} on the space of base boundary fields is a linear combination of expressions of the form 
\begin{equation}
\label{regular_functional}
\int_{\mathsf{C}_{m_1}(\de_1M)\times \mathsf{C}_{m_2}(\de_2M)}L^{J_1^1...J_1^{\ell_1}J_2...J_2^{\ell_2}...}_{I_1^1....I_1^{r_1}I_2^1...I_2^{r_2}...}\land \pi_1^*\prod_{j=1}^{r_1}\left[\mathbb{X}^{I_1^j}\right]\land\dotsm \land \pi_{m_1}^*\prod_{j=1}^{r_{m_1}}\left[\mathbb{X}^{I_{m_1}^j}\right]\land\pi_1^*\prod_{j=1}^{\ell_1}\left[\mathbb{E}_{J_1^j}\right]\land\dotsm\land \pi_{m_1}^*\prod_{j=1}^{\ell_{m_2}}\left[\mathbb{E}_{J_{m_2}^j}\right],
\end{equation}
where $I_i^j$ and $J_i^j$ are (target) multi-indices and $L^{J_1^1...J_1^{\ell_1}J_2...J_2^{\ell_2}...}_{I_1^1....I_1^{r_1}I_2^1...I_2^{r_2}...}$ is a smooth differential form on the direct product of compactified configuration spaces (see Appendix \ref{app:Conf}) $\mathsf{C}_{m_1}(\de_1M)\times \mathsf{C}_{m_2}(\de_2M)$ depending on residual fields. A regular functional is called \textsf{principal} if all multi-indices have length one.
\end{defn}

\begin{defn}[Full space of boundary states]
\label{space_of_boundary_states}
The \textsf{full space of boundary states} $\calH^\calP_{\de M}$ is given by the linear combinations of regular functionals of the form \eqref{regular_functional}.
\end{defn}

\begin{defn}[Principal space of boundary states]
We define the \textsf{principal space of boundary states} $\calH^{\calP,\textnormal{princ}}_{\de M}$ as the subspace of $\calH^\calP_{\de M}$, where we only consider principal regular functionals.
\end{defn}
The state is defined in terms of Feynman graphs and rules. We briefly explain what these terms mean in the BV-BFV context (for perturbations of abelian $BF$ theory).
\begin{defn}[($BF$) Feynman graph]
A \emph{($BF$) Feynman graph} is an oriented graph with three types of vertices $V(\Gamma) = V_{bulk}(\Gamma) \sqcup V_{\de_1} \sqcup V_{\de_2}$, called bulk vertices and type 1 and 2 boundary vertices, such that 
\begin{itemize}
\item bulk vertices can have any valence, 
\item type 1 boundary vertices carry any number of incoming half-edges (and no outgoing half-edges), 
\item type 2 boundary vertices carry any number of outgoing half-edges (and no incoming half-edges),
\item multiple edges and loose half-edges (leaves) are allowed but not short loops (tadpoles).
\end{itemize}
A \emph{labeling} of a Feynman graph is a function from the set of half-edges to $\{1,\ldots,\dim V\}$.
\end{defn}
\begin{defn}[Principal graph]
A Feynman graph is called \emph{principal} if all boundary vertices (type 1 and type 2) are univalent or zero valent.  
\end{defn}
For a set $S$ and a manifold $M$, the open configuration space of $S$ in $M$ is 
$$\mathsf{Conf}_S(M) := \{\iota\colon S\hookrightarrow M|\iota \text{ injection}\}.$$ 
Let $\Gamma$ be a Feynman graph and $M$ a manifold with boundary $\de M = \de_1 M \sqcup \de_2M$ and denote 
\begin{equation}
\mathsf{Conf}_{\Gamma}(M) := \mathsf{Conf}_{V_{bulk}}(M) \times \mathsf{Conf}_{V_{\de_1}}(\de_1M) \times \mathsf{Conf}_{V_{\de_2}}(\de_2M)
\end{equation}
The Feynman rules are a map that associate to a Feynman graph $\Gamma$ a differential form $\omega_{\Gamma} \in \Omega^\bullet(\mathsf{Conf}_\Gamma(M))$. 
\begin{defn}[($BF$) Feynman rules]
Let $\Gamma$ be a labeled Feynman graph, and choose a configuration $\iota\colon V(\Gamma) \to \mathsf{Conf}(\Gamma)$ (that respects the decompositions). We decorate the graph according to the following rules (called \emph{Feynman rules}):
\begin{itemize}
\item{Bulk vertices in $M$ decorated by ``vertex tensors'' $$\calV^{j_1\ldots j_t}_{i_1 \ldots i_s}:=\frac{\de^{s+t}}{\de \mathsf{X}_{i_1}\dotsm \de \mathsf{X}_{i_s}\de\boldsymbol{\eta}^{j_1}\dotsm \de\boldsymbol{\eta}^{j_t}}\big\vert_{\mathsf{X}=\boldsymbol{\eta}=0}\calV(\mathsf{X},\boldsymbol{\eta}),$$ where $s,t$ are the out- and in- valencies of the vertex and $i_1,\ldots,i_s$ and $j_1,\ldots,j_t$ are the labels of the out (resp. in-)oriented half-edges.
}
\item{ Boundary vertices $v \in V_{\de_1}(\Gamma)$ with  incoming half-edges labeled $i_1,\ldots,i_k$ and no out-going half-edges are decorated by a composite field $[\mathbb{X}_{i_1}\ldots\mathbb{X}_{i_k}]$ evaluated at the point (vertex location) $\iota(v)$ on $\partial_1M$.
}
\item{Boundary vertices $v \in V_{\de_2}(\Gamma)$ on $\partial_2M$ with  outgoing half-edges labeled $j_1,\ldots,j_l$ are decorated by $[\E^{j_1}\ldots\E^{j_l}]$ evaluated at the point on $\de_2M$.
}
\item{Edges between vertices $v_1,v_2$ are decorated with the propagator $\zeta(\iota(v_1),\iota(v_2))\cdot \delta_j^{i}$, where $\zeta$ is the propagator induced by $\calL\subset\calY'$, the chosen gauge-fixing Lagrangian.
}
\item{Loose half-edges (leaves) attached to a vertex $v$ and labeled $i$ are decorated with the residual fields $\mathsf{x}_i$ (for out-orientation), $\mathsf{e}^{i}$ (for in-orientation) evaluated at the point $\iota(v)$. 
}
\end{itemize}
We denote the differential forms given by the decorations collectively by $\omega_d$. The differential form $\omega_{\Gamma}$ at $\iota$ is then defined by multiplying all decorations and summing over all labelings: 
\begin{equation}
\omega_{\Gamma} = \sum_{\text{labelings}\atop \text{of }\Gamma}\prod_{\text{decorations}\atop \text{$d$ of } \Gamma} \omega_d
\end{equation}
\end{defn}
The Feynman rules are summarized in Figures \ref{fig:FeynmanRules0} and \ref{fig:composite_field_vertices}.
\begin{rem}[Configuration spaces]
We will work with the Fulton--MacPherson/Axelrod--Singer compactification of configuration spaces on manifolds with boundary and corners (FMAS compactification, see Appendix \ref{app:Conf}). It is a non-trivial analytic statement (proven first by Axelrod and Singer \cite{AS2}) that the propagator, \emph{a priori} defined only on the open configuration space $\mathsf{Conf}_2(M)$, extends to the  compactification $\mathsf{C}_2(M)$. It follows that also $\omega_\Gamma$, for all Feynman graphs $\Gamma$, extends to the compactification $\mathsf{C_{\Gamma}}(M)$ of $\mathsf{Conf}_\Gamma(M)$. Since integrals remain unchanged by adding strata of lower codimension, this immediately proves that all integrals in Equation \eqref{eq:def_state_pert} below are finite. Moreover, the combinatorics of the stratification can be used for various computations using Stokes' theorem. 
\end{rem}
\begin{defn}[Principal quantum state]
\label{principal_part}
Let $M$ be a manifold, possibly with boundary. 
Given a $BF$-like BV-BFV theory $\pi_M\colon\calF_M \to \calF^\de_{\de M}$, a polarization $\calP$ on $\calF^\de_{\de M}$, a good splitting $\calF_M = \mathcal{B}^\calP_{\de M}\oplus \calV^\calP_M \oplus \calY'$, and a gauge-fixing Lagrangian $\calL \subset \calY'$, we define the \textsf{principal part of the quantum state} by the  formal power series 
\begin{equation}
\psi_M(\mathbb{X},\E;\mathsf{x},\mathsf{e}):=T_M\exp\left(\frac{\I}{\hbar}\sum_{\Gamma}\frac{(-\I\hbar)^{\textnormal{loops}(\Gamma)}}{\vert\textnormal{Aut}(\Gamma)\vert}\int_{\mathsf{C}_\Gamma(M)}\omega_\Gamma(\mathbb{X},\E;\mathsf{x},\mathsf{e})\right),\label{eq:def_state_pert}
\end{equation}
where we denote for an element $\mathsf{X}\oplus \boldsymbol{\eta}\in \calF_M$ the split by 
\begin{align}
\mathsf{X}&= \mathbb{X}\oplus \mathsf{x}\oplus \mathscr{X},\\
\boldsymbol{\eta}&=\E\oplus \mathsf{e}\oplus \mathscr{E}.
\end{align}
Here the sum is taken over all \text{connected}, oriented, \emph{principal} BF Feynman graphs $\Gamma$, $\textnormal{Aut}(\Gamma)$ denotes the set of all automorphisms of $\Gamma$, and $\textnormal{loops}(\Gamma)$ denotes the number of all loops of $\Gamma$. 
\end{defn}
  The coefficient $T_M$ is related to the Reidemeister torsion of $M$, but its precise nature is irrelevant for the purpose of a present paper. For a definition see \cite{CMR1}.
  \begin{rem} The formal power series \eqref{eq:def_state_pert} is our definition of the  
  formal perturbative expansion of the BV integral
\begin{equation}
\psi_M = \int_{\calL \subset \calY'}\ee^{\frac{\I}{\hbar}\calS_M[(\mathsf{X},\boldsymbol{\eta})]} \in \Hat{\calH}_M^\calP:=\Hat{\calH}_{\de M}^\calP\otimes \textnormal{Dens}^\frac{1}{2}(\calV_M^\calP).\end{equation}
It was observed in \cite{CMR2} that, given a good splitting of the form \eqref{split}, one can decompose the action as  
\begin{equation}
\calS^\calP_M = \widehat{\calS}_{M,0} + \widehat{\calS}_{M,\textnormal{pert}} + \calS^\textnormal{res} + \calS^{\textnormal{source}}
\end{equation} 
with 
\begin{align*}
\widehat{\calS}_{M,0} &= \int_M \langle\mathscr{E}, \dr \mathscr{X}\rangle\\
 \widehat{\calS}_{M,\textnormal{pert}} &= \int_M \calV(\mathscr{X},\mathscr{E}) \\
 \calS^\textnormal{res} &= (-1)^{d-1}\left(\int_{\de_1 M}\langle \E, \mathsf{x}\rangle +\int_{\de_2 M}\langle \mathbb{X}, \mathsf{e}\rangle \right) \\
 \calS^{\textnormal{source}} &= (-1)^{d-1}\left(\int_{\de_1 M}\langle \E, \mathscr{X}\rangle +\int_{\de_2 M}\langle \mathbb{X}, \mathscr{E}\rangle \right)
\end{align*}
In that way we can rewrite 
$$\psi_M = T_M\left \langle \ee^{\frac{\ii}{\hbar}(\calS^\textnormal{res} + \calS^{\textnormal{source}})}\right\rangle$$ 
where $\langle \enspace \rangle$ denotes the expectation value with respect to the bulk theory ($\widehat{\calS}_{M,0} + \widehat{\calS}_{M,\textnormal{pert}}$). 
\end{rem}

\begin{rem}
Note that we sum over connected graphs, such that the sum is given by the \textsf{effective action}. 
\end{rem}

%

\begin{figure}
\centering
\subfigure[Interaction vertex]{
\centering
\begin{tikzpicture}
\node[vertex] (o) at (0,0) {};
\node[coordinate, label=below:{$i_1$}] at (30:1) {$i_1$}
edge[fermion] (o);
\node[coordinate, label=above:{$i_2$}] at (60:1) {$i_2$}
edge[fermion] (o);
\node[coordinate, label=below:{$i_s$}] at (145:1) {$i_s$}
edge[fermion] (o);
\draw[dotted] (90:0.5) arc (90:130:0.5);
\node[coordinate, label=below:{$j_1$}] (j1) at (-30:1) {};
\node[coordinate, label=below:{$j_2$}] (j2) at (-60:1) {};
\node[coordinate, label=below:{$j_t$}] (j3) at (-145:1) {};
\draw[dotted] (-90:0.5) arc (-90:-145:0.5); 
\draw[fermion] (o) -- (j1);
\draw[fermion] (o) -- (j2);
\draw[fermion] (o) -- (j3);
\node[coordinate,label=right:{$\leadsto\quad\calV^{j_1\ldots j_t}_{i_1 \ldots i_s}$}] at (2,0) {};
\end{tikzpicture}
}
%
\hspace{2cm}
\subfigure[Residual fields]{
\centering
\begin{tikzpicture}
\node[circle,draw,inner sep=1pt] (x) at (0,1) {$\mathsf{x}^i$};
\node[coordinate, label=below:{$i$}] (x2) at (30:2) {$i_1$}
edge[fermion] (x);
\node[coordinate, label=below:{$j$}] (e2)at (-30:2) {$i_1$};
\node[circle,draw,inner sep=1pt] (e) at (0,-1) {$\mathsf{e}_j$}
edge[fermion] (e2);
\node[coordinate, label={${}$}] at (4,0) {};
\end{tikzpicture}
}
\hspace{3cm}
\subfigure[Boundary vertices]{
\begin{tikzpicture}
\draw (-1,1) -- (1,1);
\node[vertex, label=above:{$\mathbb{X}$}] (x) at (0,1) {};
\node[coordinate] (b1) at (0,0) {}; 
\draw[fermion] (x) -- (b1);
\draw (1,-1) -- (3,-1);
\node[vertex, label=below:{$\mathbb{E}$}] (e) at (2,-1) {};
\node[coordinate] (b2) at (2,0) {}; 
\draw[fermion] (b2) -- (e);

\end{tikzpicture}
\hspace{2cm}
}
\caption{Summary of Feynman graphs and rules}\label{fig:FeynmanRules0}
\end{figure}

\begin{figure}[h!]
\subfigure[Boundary vertex on $\de_1\Sigma$] {
\begin{tikzpicture}
\draw (-3,0) -- (3,0); 
\node[vertex] (o) at (0,0) {};
\node[coordinate, label=below:{$[\mathbb{X}^{i_1}\cdots\mathbb{X}^{i_k}]$}] at (o.south) {};
\node[coordinate, label=below:{$i_1$}] (b1) at (30:1) {$i_1$};
\node[coordinate, label=above:{$i_2$}] (b2) at (60:1) {$i_2$};
\node[coordinate, label=below:{$i_k$}] (b3) at (145:1) {$i_k$};
\draw[dotted] (90:0.5) arc (90:130:0.5); 
\draw[fermion] (o) -- (b1);
\draw[fermion] (o) -- (b2);
\draw[fermion] (o) -- (b3);
\end{tikzpicture}
}
\hspace{2cm}
\subfigure[Boundary vertex on $\de_2\Sigma$] {
\begin{tikzpicture}
\draw (-3,0) -- (3,0); 
\node[vertex] (o) at (0,0) {};
\node[coordinate, label=below:{$[\mathbb{E}_{i_1}\cdots\mathbb{E}_{i_k}]$}] at (o.south) {};
\node[coordinate, label=below:{$i_1$}] at (30:1) {$i_1$}
edge[fermion] (o);
\node[coordinate, label=above:{$i_2$}] at (60:1) {$i_2$}
edge[fermion] (o);
\node[coordinate, label=below:{$i_k$}] at (145:1) {$i_k$}
edge[fermion] (o);
\draw[dotted] (90:0.5) arc (90:130:0.5);
\end{tikzpicture}
}
\caption{Composite field vertices.}\label{fig:composite_field_vertices}
\end{figure}
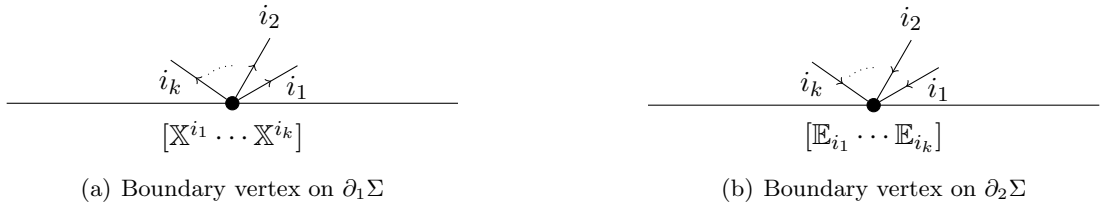 

We want to construct a product on the full state space using composite fields as in \cite{CMR2}. We construct the \textsf{bullet product} by 
\begin{multline}
\label{bullet_prod}
\int_{\de_1M}u_i\land \mathbb{X}^{i}\bullet \int_{\de_1M}v_j\land\mathbb{X}^j:=\\
(-1)^{\vert \mathbb{X}^{i}\vert (d-1+\vert v_j\vert)+\vert u_i\vert(d-1)}\left(\int_{\mathsf{C}_2(\de_1M)}\pi_1^*u_i\land \pi_2^*v_j\land \pi_1^*\mathbb{X}^{i}\land \pi_2^*\mathbb{X}^j+\int_{\de_1M}u_i\land v_j\land [\mathbb{X}^{i}\mathbb{X}^j]\right),
\end{multline}
where $u$ and $v$ are smooth differential forms depending on the bulk and residual fields.

\begin{rem}
Consider an operator $\int_{\de_1M}F^{ij}\frac{\delta^2}{\delta\mathbb{X}^{i}\delta\mathbb{X}^j}$. Such an operator can be interprated as $\int_{\de_1M}F^{ij}\frac{\delta}{\delta[\mathbb{X}^{i}\mathbb{X}^j]}$, so one gets 
\begin{equation}
\int_{\de_1M}F^{ij}\frac{\delta^2}{\delta\mathbb{X}^{i}\delta\mathbb{X}^j}\left(\int_{\de_1M}u_{i}\land\mathbb{X}^{i}\bullet \int_{\de_1M}v_j\land \mathbb{X}^j\right)=\int_{\de_1M}u_iv_jF^{ij},
\end{equation}
in accordance with our naive expectation.
\end{rem}

\begin{defn}[Full quantum state]
\label{full_covariant_state1}
Let $M$ be a manifold, possibly with boundary. 
Given a $BF$-like BV-BFV theory $\pi_M\colon\calF_M \to \calF^\de_{\de M}$, a polarization $\calP$ on $\calF^\de_{\de M}$, a good splitting $\calF_M = \mathcal{B}^\calP_{\de M}\oplus \calV^\calP_M \oplus \calY'$, and a gauge-fixing Lagrangian $\calL \subset \calY'$, we define the \textsf{full quantum state} by the formal
power series 
\begin{equation}
\boldsymbol{\psi}_M(\mathbb{X},\E;\mathsf{x},\mathsf{e})=T_M\exp\left(\frac{\I}{\hbar}\sum_{\Gamma}\frac{(-\I\hbar)^{\textnormal{loops}(\Gamma)}}{\vert\textnormal{Aut}(\Gamma)\vert}\int_{\mathsf{C}_\Gamma(M)}\omega_\Gamma(\mathbb{X},\E;\mathsf{x},\mathsf{e})\right),\label{eq:def_state_pert_full}
\end{equation}
\end{defn}
\begin{rem}
The full state can be interpreted as an expectation value with help of the bullet product: 
\begin{equation}
\boldsymbol{\psi}_M = T_M \left\langle \ee_{\bullet}^{\frac{\ii}{\hbar}(\calS^\textnormal{res} + \calS^{\textnormal{source}})}\right\rangle
\end{equation}
where $\ee_\bullet$ denotes the exponential with respect to the bullet product. 
\end{rem}

\subsubsection{The BFV boundary operator}
We want to define the quantum BFV boundary operator for $BF$-like theories according to \cite{CMR2}. Similarly to the state, we will express at first its principal part and then extend it to a regularization using the notion of composite fields. The quantum BFV boundary operator is constructed as a quantization of the BFV action such that Theorem \ref{thm:CMR2} below holds.

\begin{defn}[Principal part of the BFV boundary operator]
The \textsf{principal part} of the BFV boundary operator is given by 
\begin{equation}
\Omega^{\textnormal{princ}}=\underbrace{\Omega_0^\mathbb{X}+\Omega_0^\E}_{=:\Omega_0}+\underbrace{\Omega_{\textnormal{pert}}^\mathbb{X}+\Omega_{\textnormal{pert}}^\E}_{=:\Omega_{\textnormal{pert}}^{\textnormal{princ}}},
\end{equation}
where 
\begin{align}
\Omega_0^\mathbb{X}&:=(-1)^d\I\hbar\int_{\de_1M}\left(\dd \mathbb{X}\frac{\delta}{\delta\mathbb{X}}\right),\\
\Omega_0^\E &:=(-1)^d\I\hbar\int_{\de_2 M}\left(\dd\mathbb{E}\frac{\delta}{\delta\mathbb{E}}\right),\\
\Omega_{\textnormal{pert}}^\mathbb{X}&:=\sum_{n,k\geq 0}\sum_{\Gamma_1'}\frac{(\I\hbar)^{\textnormal{loops}(\Gamma_1')}}{\vert \textnormal{Aut}(\Gamma'_1)\vert}\int_{\de_1M}\left(\sigma_{\Gamma_1'}\right)_{i_1....i_n}^{j_1...j_k}\land\mathbb{X}^{i_1}\land \dotsm \land\mathbb{X}^{i_n} \left((-1)^d\I\hbar\frac{\delta}{\delta\mathbb{X}^{j_1}}\right)\dotsm \left((-1)^d\I\hbar\frac{\delta}{\delta\mathbb{X}^{j_k}}\right),\\
\Omega_{\textnormal{pert}}^\mathbb{E}&:=\sum_{n,k\geq 0}\sum_{\Gamma_2'}\frac{(\I\hbar)^{\textnormal{loops}(\Gamma_2')}}{\vert \textnormal{Aut}(\Gamma'_2)\vert}\int_{\de_2M}\left(\sigma_{\Gamma_2'}\right)_{i_1....i_n}^{j_1...j_k}\land \mathbb{E}^{i_1}\land\dotsm\land \mathbb{E}^{i_n}\left((-1)^d\I\hbar\frac{\delta}{\delta\mathbb{E}^{j_1}}\right)\dotsm \left((-1)^d\I\hbar\frac{\delta}{\delta\mathbb{E}^{j_k}}\right),
\end{align}
where, for $\mathbb{F}_1=\mathbb{X}$ and $\mathbb{F}_2=\E$ and $\ell\in\{1,2\}$, $\Gamma_\ell'$ runs over graphs with 
\begin{itemize}
\item{$n$ vertices on $\de_\ell M$ of valence 1 with adjacent half-edges oriented inwards and decorated with boundary fields $\mathbb{F}^{i_1}_\ell,...,\mathbb{F}^{i_n}_\ell$ all evaluated at the point of collapse $p\in \de_\ell M$,}
\item{$k$ outward leaves if $\ell=1$ and $k$ inward leaves if $\ell=2$, decorated with variational derivatives in boundary fields
$$(-1)^d\I\hbar\frac{\delta}{\delta\mathbb{F}^{j_1}_\ell},...,(-1)^d\I\hbar\frac{\delta}{\delta\mathbb{F}^{j_k}_\ell}$$
at the point of collapse,
}
\item{
no outward leaves if $\ell=2$ and no inward leaves if $\ell=1$  (graphs with them do not contribute).}
\end{itemize}
The form $\sigma_{\Gamma_\ell'}$ is obtained as the integral over the compactified configuration space $\Tilde{\mathsf{C}}_{\Gamma_\ell'}(\mathbb{H}^d)$, where $\mathbb{H}^d$ denotes the $d$-dimensional upper half plane, given by
\begin{equation}
\sigma_{\Gamma_\ell'}=\int_{\Tilde{\mathsf{C}}_{\Gamma_\ell'}(\mathbb{H}^d)}\omega_{\Gamma_\ell'},
\end{equation}
with $\omega_{\Gamma_\ell'}$ being the product of limiting propagators at the point $p$ of collapse and vertex tensors. 
\end{defn}

We want to roughly describe the construction of the BFV boundary operator with composite fields (see \cite{CMR2} for a more detailed discussion). First, we need to define the following notion.

On a regular functional as in \eqref{regular_functional}, we get a term $L$ replaced by $\dd L$ plus all the terms corresponding to  the boundary of the configuration space. As $L$ is smooth, its restriction to the boundary is also smooth and can be integrated on the fibers yielding a smooth form on the base configuration space; for example
\[
\Omega_0\int_{\de_1M}L_{IJ}\land[\mathbb{X}^{I}]\land [\mathbb{X}^J]=\pm \I\hbar\int_{\de_1M}\dd L_{IJ}\land [\mathbb{X}^{I}]\land [\mathbb{X}^J],
\]
\begin{multline*}
\Omega_0\int_{\mathsf{C}_2(\de_1M)}L_{IJK}\land \pi_1^*([\mathbb{X}^{I}]\land[\mathbb{X}^J])\land\pi_2^*[\mathbb{X}^K]\\=\pm\I\hbar \int_{\mathsf{C}_2(\de_1M)}\dd L_{IJK}\land \pi_1^*([\mathbb{X}^{I}]\land [\mathbb{X}^J])\land \pi_2^*[\mathbb{X}^K]\pm\I\hbar \int_{\de_1M}\underline{L_{IJK}}\land [\mathbb{X}^{I}]\land [\mathbb{X}^J]\land[\mathbb{X}^K],
\end{multline*}
with $\underline{L_{IJK}}=\pi^\de_*L_{IJK}$, where $\pi^\de\colon \de\mathsf{C}_2(\de_1M)\to \de_1M$ is the canonical projection.


Notice that for any two regular functionals $S_1$ and $S_2$ we have $$\Omega_0(S_1\bullet S_2)=\Omega_0(S_1)\bullet S_2\pm  S_1\bullet \Omega_0(S_2).$$ The other generators that we allow are products of expressions of the form 
\begin{align}
\int_{\de_1M}&L^J_{I^1...I^r}\left[\mathbb{X}^{I_1}\right]\land\dotsm \land\left[\mathbb{X}^{I_{r}}\right]\frac{\delta^{\vert J\vert}}{\delta [\mathbb{X}^J]}\\
\int_{\de_2M}&L_I^{J^1...J^\ell}\left[\mathbb{E}_{J_1}\right]\land\dotsm \land\left[\mathbb{E}_{J_{\ell}}\right]\frac{\delta^{\vert I\vert}}{\delta [\mathbb{E}_I]}.
\end{align}

\begin{defn}[Full BFV boundary operator]
\label{full_BFV}
The \textsf{full BFV boundary operator} is given by 
\begin{equation}
\boldsymbol{\Omega}_{\de M}:=\Omega_0+\underbrace{\boldsymbol{\Omega}_{\textnormal{pert}}^\mathbb{X}+\boldsymbol{\Omega}_{\textnormal{pert}}^\E}_{\boldsymbol{\Omega}_{\textnormal{pert}}},
\end{equation}
where 
\begin{equation}
\boldsymbol{\Omega}_{\textnormal{pert}}^\mathbb{X}:=\sum_{n,k\geq 0}\sum_{\Gamma_1'}\frac{(\I\hbar)^{\textnormal{loops}(\Gamma_1')}}{\vert \textnormal{Aut}(\Gamma'_1)\vert}\int_{\de_1M}\left(\sigma_{\Gamma_1'}\right)_{I_1....I_n}^{J_1...J_k}\land\left[\mathbb{X}^{I_1}\right]\land \dotsm \land\left[\mathbb{X}^{I_n}\right] \left((-1)^{kd}(\I\hbar)^k\frac{\delta^{\vert J_1\vert+\dotsm +\vert J_k\vert}}{\delta\left[\mathbb{X}^{J_1}\dotsm \mathbb{X}^{J_k}\right]}\right),
\end{equation}
\begin{equation}
\boldsymbol{\Omega}_{\textnormal{pert}}^\mathbb{E}:=\sum_{n,k\geq 0}\sum_{\Gamma_2'}\frac{(\I\hbar)^{\textnormal{loops}(\Gamma_2')}}{\vert \textnormal{Aut}(\Gamma'_2)\vert}\int_{\de_2M}\left(\sigma_{\Gamma_2'}\right)^{I_1....I_n}_{J_1...J_k}\land \left[\mathbb{E}_{I_1}\right]\land\dotsm\land \left[\mathbb{E}_{I_n}\right] \left((-1)^{kd}(\I\hbar)^k\frac{\delta^{\vert J_1\vert+\dotsm +\vert J_k\vert}}{\delta\left[\mathbb{E}_{J_1}\dotsm \mathbb{E}_{J_k}\right]}\right)\end{equation}
where, for $\mathbb{F}_1=\mathbb{X}$ and $\mathbb{F}_2=\E$ and $\ell\in\{1,2\}$, $\Gamma_\ell'$ runs over graphs with 
\begin{itemize}
\item{$n$ vertices on $\de_\ell M$, where vertex $s$ has valence $\vert I_s\vert\geq 1$, with adjacent half-edges oriented inwards and decorated with boundary fields $[\mathbb{F}^{I_1}_\ell],...,[\mathbb{F}^{I_n}_\ell]$ all evaluated at the point of collapse $p\in \de_\ell M$,}
\item{$\vert J_1\vert+\dotsm +\vert J_k\vert$ outward leaves if $\ell=1$ and $\vert J_1\vert+\dotsm +\vert J_k\vert$ inward leaves if $\ell=2$, decorated with variational derivatives in boundary fields
$$(-1)^d\I\hbar\frac{\delta}{\delta[\mathbb{F}^{J_1}_\ell]},...,(-1)^d\I\hbar\frac{\delta}{\delta[\mathbb{F}^{J_k}_\ell]}$$
at the point of collapse,
}
\item{
no outward leaves if $\ell=2$ and no inward leaves if $\ell=1$ (graphs with them do not contribute).}
\end{itemize}
The form $\sigma_{\Gamma_\ell'}$ is obtained as the integral over the compactified configuration space $\Tilde{\mathsf{C}}_{\Gamma_\ell'}(\mathbb{H}^d)$, where $\mathbb{H}^d$ denotes the $d$-dimensional upper half plane, given by
\begin{equation}
\sigma_{\Gamma_\ell'}=\int_{\Tilde{\mathsf{C}}_{\Gamma_\ell'}(\mathbb{H}^d)}\omega_{\Gamma_\ell'},
\end{equation}
with $\omega_{\Gamma_\ell'}$ being the product of limiting propagators at the point $p$ of collapse and vertex tensors. 
\end{defn}

\begin{figure}[h!]
\begin{tikzpicture}

\draw[thick] (-6,0) -- (6,0); 
\node[vertex] (bdry1) at (-1,0) {};
\node[coordinate, label=below:{}] at (bdry1.south) {};
\node[vertex] (bdry2) at (1,0) {};
\node[coordinate, label=below:{}] at (bdry2.south) {};


\node[vertex] (bulk1) at (-0.5,1) {}; 
\node[] (pi1) at (-0.8,1){};
\node[vertex] (bulk2) at (0.8,1.3) {};
\node[] (pi2) at (1.1,1.3){};
\node[vertex] (bulk3) at (0,0.5) {};
\node[] (pi3) at (0.3,0.5){};


\draw[] (0:2) arc (0:180:2); 
\draw[fermion] (-1,3.5) -- (bulk1);
\draw[fermion] (-0.5,3.5) -- (bulk2);
\draw[fermion] (0,3.5) -- (bulk2);
\draw[fermion] (bulk2) -- (bdry2);
\draw[fermion] (bulk1) -- (bdry1);
\draw[fermion] (bulk1) -- (bulk2);
\draw[fermion] (bulk2) -- (bulk3);
\draw[fermion] (bulk3) -- (bdry1);
\draw[fermion] (bulk3) -- (bulk1);
\draw[fermion] (5,3) -- (bdry2);
\draw[fermion] (5,2) -- (bdry2);
\draw[fermion] (5,1) -- (bdry2);
\draw[fermion] (-5,3) -- (bdry1);
\draw[fermion] (-5,2) -- (bdry1);
\draw[fermion] (-5,1) -- (bdry1);
\end{tikzpicture} 
\caption{Example of a graph collapsing to the boundary with three bulk and two boundary vertices. The semicircle represents the collapsing of the graph.}
\label{fig:collapsing_boundary_op}
\end{figure}
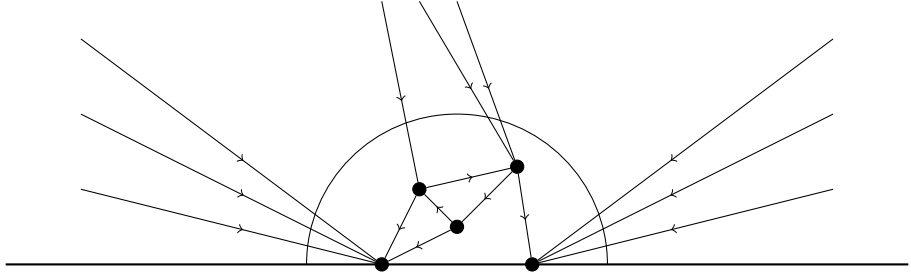

\begin{thm}[\cite{CMR2}]
\label{thm:CMR2}
Let $M$ be a smooth manifold (possibly with boundary). Then the following hold:
\begin{enumerate}
\item{The full covariant state $\boldsymbol{\psi}_M$ satisfies the mQME:
\begin{equation}
\label{full_mQME}
(\hbar^2\Delta_{\calV_M}+\boldsymbol{\Omega}_{\de M})\boldsymbol{\psi}_M=0.
\end{equation}
}
\item{The full BFV boundary operator $\boldsymbol{\Omega}_{\de M}$ squares to zero: 
\begin{equation}
\label{flatness_full_boundary_op}
(\boldsymbol{\Omega}_{\de M})^2=0.
\end{equation}
}
\item{A change of propagator or residual fields leads to a theory related by change of data as in \ref{change_of_data}.
}
\end{enumerate}
\end{thm}

\FloatBarrier
\section{Perturbative Quantization of AKSZ Sigma Models}\label{sec:AKSZ_Quantization}
\label{AKSZ}

\subsection{Review of AKSZ Sigma Models in the BV-BFV formalism}
We will begin with a brief review of AKSZ Sigma Models (\cite{AKSZ}) as described in \cite{CMR1}. 

\begin{defn}[Differential graded symplectic manifold]
A \textsf{dg symplectic manifold} of degree $d$ is a graded manifold $\calM$ endowed with a symplectic form $\omega=\dd \alpha$ of degree $d$ and a Hamiltonian function $\Theta$ of degree $d+1$ satisfying $\{\Theta,\Theta\}=0$, where $\{\enspace,\enspace\}$ is the Poisson bracket induced by $\omega$. 
\end{defn}

\begin{rem}
This is also called a \textsf{Hamiltonian manifold}.
\end{rem}

\begin{defn}[AKSZ Sigma Model]
The \textsf{AKSZ Sigma Model} with target a Hamiltonian manifold $(\calM,\omega=\dd\alpha,\Theta)$ of degree $d-1$ is the BV theory, which associates to a $d$-manifold $\Sigma$ the BV manifold $(\calF_\Sigma,\omega_\Sigma,\calS_\Sigma)$, where $\calF_\Sigma=\Map(T[1]\Sigma,\calM)$, $\omega_\Sigma$ is of the form $\omega_\Sigma=\int_\Sigma\omega_{\mu\nu}\delta \textsf{A}^\mu\land \delta\textsf{A}^\nu$,
and $\calS_\Sigma[\textsf{A}]=\int_\Sigma\left(\alpha_\mu(\textsf{A}^\mu)\dd \textsf{A}^\mu+\Theta(\textsf{A})\right)$, where $\textsf{A}\in\calF_\Sigma$,
$\omega_{\mu\nu}$ are the components of the symplectic form $\omega$, $\alpha_\mu$ are the components of $\alpha$ and $\textsf{A}^\mu$ are the components of $\textsf{A}$ in local coordinates. 
\end{defn}

Here ``$\Map$'' denotes the right adjoint functor to the Cartesian product of graded manifolds (with a fixed factor). On objects we have $\text{Hom}(X,\Map(Y,Z))=\text{Hom}(X\times Y,Z)$, where $\text{Hom}$ denotes the set of graded manifold morphisms.

\subsection{Split AKSZ Sigma Models} 
In this paper we will especially be interested in the case where $\mathcal{M} = T^*[d-1]M$, with $M$ a graded manifold, such that the symplectic form is given by the canonical symplectic structure $\omega_0$. 

\begin{defn}[Split AKSZ Sigma Model]
We call an AKSZ Sigma Model \textsf{split}, if the target is of the form 
\begin{equation}
\calM=T^*[d-1]M
\end{equation}
with canonical symplectic structure, where $M$ is a graded manifold.
\end{defn}

Coordinates on the space of fields can be considered as a pair $(\sfX,\boldeta)$, where $\sfX$ and $\boldeta$ are the base and fiber components of the map respectively. The action can be written as 
\begin{equation}
\calS_{\Sigma}[(\sfX,\boldeta)] = \calS^{\text{kin}}_\Sigma[(\sfX,\boldeta)] + \calS^{\text{int}}_\Sigma[(\sfX,\boldeta)]
\end{equation}
where the kinetic and interaction terms are given by 
\begin{align*}
\calS^{\text{kin}}_\Sigma[(\sfX,\boldeta)] &:= \int_\Sigma \langle \boldeta, \dr \sfX\rangle, \\
\calS^{\text{int}}_\Sigma[(\sfX,\boldeta)] &:= \int_\Sigma \Theta(\sfX,\boldeta),
\end{align*}
where $\langle\enspace , \enspace \rangle $ denotes the canonical pairing between tangent and cotangent bundle of $M$ and we think of elements of $C^{\infty}(T[1]\Sigma)$ as differential forms in the usual way, i.e. of elements in $\Omega^\bullet(\Sigma)$. In \cite{CMR1} it was shown that these data define a BV-BFV theory as in Definition \ref{defn:BVBFV_theory} in Section \ref{sec:BVBFV}. 

\subsection{Coordinatization of split AKSZ theories}
In this paper we want to quantize split AKSZ theories as perturbations of abelian $BF$ theory. This can be done by ``coordinatizing the target'', i.e. replacing the space of fields with the formal neighbourhood of a constant field. Using methods of \textsf{formal geometry} as in \cite{GK,B,CF3,BCM,CMW2} one can do this consistently for all constant solutions at once. In Appendix \ref{app:formal_geometry} we recall this procedure and its extension to graded manifolds, which is discussed in \cite{LS}.  For more details we refer to \cite{CF3,BCM}. 
\subsubsection{Coordinatizing the AKSZ construction} 
The idea now is to expand the theory around critical points of the kinetic part of the action. Denote by $\Bar{M}$ the body of the graded manifold $M$, and let $x \in \Bar{M}$. We will work in formal neighbourhoods of constant maps $$x = (\sfX , \boldsymbol\eta) \equiv (x,0) \in \Map(T[1]\Sigma,\calM)$$ 
   Let $\varphi$ be a formal exponential map (see Appendix \ref{App:formal_exp}) on $M$. This induces a map 
$$\varphi_x \colon T_xM \to M$$ which lifts to a map   
\begin{align*}
\widetilde{\varphi}_x \colon \mathcal{F}_{\Sigma,x}:=\Map(T[1]\Sigma,T^*[d-1]T_xM) & \to \Map(T[1]\Sigma,\calM) \\ 
(\hatX,\hateta) &\mapsto (\sfX,\boldeta)
\end{align*}
by taking post-composition with the cotangent lift. Notice that $\widetilde{\varphi}$ is a local symplectomorphism and that 
\begin{equation}
\calS^{\textnormal{kin}}_{\Sigma,x}:=\T\widetilde{\varphi}_x^*\calS^{\text{kin}}_\Sigma = \int_\Sigma \langle \hateta, \dr \hatX\rangle,
\end{equation}
where $\textsf{T}$ denotes the Taylor expansion as in \ref{App:formal_exp}. If we define
\begin{equation}
\calS^{\textnormal{int}}_{\Sigma,x}:=\T\widetilde{\varphi}_x^*\calS^{\text{int}}_\Sigma= \int_\Sigma\T\widetilde{\varphi}^*_x\Theta(\sfX,\boldeta)
\end{equation} 
and 
\begin{equation}
\calS_{\Sigma,x} := \calS^{\textnormal{kin}}_{\Sigma,x} + \calS^{\textnormal{int}}_{\Sigma,x}=\int_\Sigma\left(\langle \hateta,\dd\hatX\rangle+\mathsf{T}\Tilde{\varphi}^*_x\Theta(\mathsf{X},\boldsymbol{\eta})\right)
\end{equation}
then the pair $(\calF_{\Sigma,x},\calS_{\Sigma,x})$ is a $BF$-like theory in the sense of \cite{CMR2}, i.e. the kinetic part of the action is a sum of copies of the kinetic part of abelian $BF$-theory and for every $x\in\overline{M}$ it satisfies the mCME (see Equation \eqref{mCME} in Section \ref{sec:BVBFV}) $$\iota_{Q_{\Sigma,x}}\omega_{\Sigma,x} - \delta \calS_{\Sigma,x} = \pi_\Sigma^*\alpha^{\de}_{\de \Sigma,x}.$$ 
where $Q_{\Sigma,x}$ is the Hamiltonian vector field of $\calS_{\Sigma,x}$. Moreover, $\omega_{\Sigma,x}$ and $\alpha^\de_{\de\Sigma,x}$ are the corresponding symplectic form and boundary 1-form of the BV-BFV manifold associated to the space of fields $\calF_{\Sigma,x}$.
Notice that it could be obtained from the AKSZ construction with target $T^*[d-1]T_xM$ and Hamiltonian function $\Theta_x := \T\Tilde{\varphi}^*_x\Theta$. We regard $\Theta_x$ as a formal function on $T^*[d-1]T_xM$ and we will write 

\begin{equation}
\Theta_x(y^1,\ldots,y^r,\xi_1,\ldots\xi_r) = \sum_{k,l = 0}^{\infty}\Theta^{j_1\ldots j_l}_{i_1\ldots i_k}(x)y^{i_1}\cdots y^{i_k}\xi_{j_1}\cdots\xi_{j_l}\label{eq:Theta_expansion} 
\end{equation}
where $r = \dim M$ and the $\xi_i$ are the cotangent coordinates of the coordinates $y^i$.

\subsubsection{Varying the classical background}
We now define the map $\Hat{\calS}_\Sigma$ to be given by $\Hat{\calS}_\Sigma\colon x \mapsto \calS_{\Sigma,x}$. In local coordinates $(x^i)$ on $\Bar{M}$, we  define 
\begin{equation}
\calS_{\Sigma,x,R} := \int_\Sigma Y^j_i(x;\hatX)\hateta_j\wedge \dd x^i,
\end{equation}
where $Y \in \Gamma(\Bar{M},\Hat{S}T^*M)$ is defined in Appendix \ref{app:formal_geometry}, which is also a formal power series in the second argument, hence we can express $Y$ as 
\begin{equation}
Y^j_i(x;y) = \sum_{k=0}^{\infty} Y^j_{i;i_1,\ldots,i_k}(x)y^{i_1}\cdots y^{i_k}.\label{eq:Yexpansion}
\end{equation}  Notice that here we pull back to the body $\Bar{M}$ of $M$ via the zero section of $M \to \Bar{M}$. Moreover, on a closed manifold we have $$\dd_x\Hat{\calS}_\Sigma = (\calS_{\Sigma,x,R},\Hat{\calS}_\Sigma).$$ 
\begin{defn}[Formal global action]
The \textsf{formal global action} for a split AKSZ theory is defined by 
\begin{equation}
\label{formal_global_action}
\Tilde{\calS}_{\Sigma,x}:=\int_\Sigma\left(\hateta_{i}\land\dd\hatX^{i}+\mathsf{T}\Tilde{\varphi}_x^*\Theta(\mathsf{X},\boldsymbol{\eta})+Y^{j}_i(x;\hatX)\hateta_j\land \dd x^{i}\right)=\calS_{\Sigma,x}+\calS_{\Sigma,x,R}.
\end{equation}
\end{defn}

Using the formal global action, we get
\begin{equation}
\dd_x\Tilde{\calS}_{\Sigma,x} + \frac{1}{2}(\Tilde{\calS}_{\Sigma,x},\Tilde{\calS}_{\Sigma,x}) = 0. \label{dCME}
\end{equation}
This condition is called the \textsf{differential Classical Master Equation} (dCME) (see \cite{BCM,CMR1,CMR2,CMW2}). On a manifold with boundary, we get the cohomological vector field $\Tilde{Q}_{\Sigma,x}$ from the BV-BFV theory on $\mathcal{F}_{\Sigma,x}$. Recall the construction of a vector field $R$ in the setting of formal geometry as in Appendix \ref{sec_GrConn}. For a section $\sigma\in\Gamma(\Hat{S}T^*M)$ we have 
$$R(\sigma)=-\dd_y\sigma\circ (\dd_y\varphi)^{-1}\circ \dd_x\varphi.$$
Indeed, we can lift the vector field $R$ to a vector field $\Tilde{R}$ on $\mathcal{F}_{\Sigma}$ and define $\Tilde{Q}_{\Sigma,x} = \Hat{Q}_\Sigma + \Tilde{R}$, where $\Hat{Q}_\Sigma$ is the Hamiltonian vector field for $\Hat{\calS}_\Sigma$. Then we have 
\begin{equation}
\iota_{\Tilde{Q}_{\Sigma,x}}\omega_{\Sigma,x} - \delta \Tilde{\calS}_{\Sigma,x} = \pi_\Sigma^*\alpha^{\de}_{\de \Sigma}, 
\end{equation}
the \textsf{modified differential Classical Master Equation (mdCME)}. 

\begin{rem}
A similar approach to globalization for closed manifolds was done by Grady--Gwilliam, Costello, Grady--Li--Li (\cite{GG,Cost1,GLL}).  Their construction is based on the idea that one can replace the target by an $L_\infty$ equivalent one, whereas the one introduced in \cite{BCM} before was based on the idea of using formal geometry to define a symplectomorphism on a neighborhood of each solution in the space of fields to start the perturbation theory. The two approaches are essentially equivalent. However, in \cite{GG,Cost1,GLL} they only get $BF_\infty$ theories since they start with theories of a particular simple type.
We consider more general theories that do not fit into this. Here $BF_\infty$ means that one of the two fields appears at most linearly, but this is not the case in our setting (e.g., in the Poisson Sigma Model for a nonlinear Poisson structure). Moreover, in principle one should work around more general solutions than just the constant ones. In principle, one should do formal geometry on the moduli space of solutions. Note also that this construction can in principle be generalized to non AKSZ models. 
\end{rem}

\subsection{Quantization}
We now have a bundle of $BF$-like theories over the body $\Bar{M}$ of $M$. In every fiber we can apply a perturbative BV-BFV quantization as in \cite{CMR2}. That is, we define a splitting of the space of fields 
$$\mathcal{F}_{\Sigma,x} = \mathcal{B}^{\calP}_{\de\Sigma} \oplus \mathcal{V}_{\Sigma,x} \oplus  \mathcal{Y}' $$ as in \eqref{split} and split the fields accordingly as 
\begin{align*}
\hatX &= \mathbb{X} \oplus \sfx \oplus \mathscr{X}, \\
\hateta &= \E \oplus \sfe \oplus \mathscr{E},
\end{align*}
where $\calB^{\calP}_{\de\Sigma}$ is the base space of a polarization $\mathcal{P}$ of boundary fields, $$\calV_{\Sigma,x} = H^{\bullet}(\Sigma,\de_1\Sigma) \otimes T_xM \oplus H^{\bullet}(\Sigma,\de_2\Sigma) \otimes T_x^*M$$ is the space of residual fields and $\mathcal{Y}$ is a symplectic complement of $\calB^\calP_{\de\Sigma} \oplus \calV_{\Sigma,x}$. The polarizations that we consider are defined by splitting the boundary $\de \Sigma = \de_1\Sigma \sqcup \de_2\Sigma$ and choosing the $\mathbb{X}$-representation on $\de_1\Sigma$ and the $\mathbb{E}$-representation on\footnote{This simply means that we choose the $\frac{\delta}{\delta \mathbb{X}}$-polarization on $\de_2\Sigma$ and the $\frac{\delta}{\delta\E}$-polarization on $\de_1\Sigma$} $\de_2\Sigma$. Let us denote by $\mathcal{H}^{\mathcal{P}}_{\partial\Sigma,x}$ the boundary state space as in Definition \ref{space_of_boundary_states}. 
Using the definition of the formal global action $\Tilde{\calS}_{\Sigma,x}$ and Definition \ref{principal_part}, we can define a covariant version of the principal part of the quantum state.

\begin{defn}[Principal covariant quantum state]
The \textsf{principal covariant quantum state} $\Tilde{\psi}_{\Sigma,x}$ is defined as in Definition \ref{principal_part},
using the Feynman rules given in Figure \ref{fig:FeynmanRules} coming from the formal global action $\Tilde{\calS}_{\Sigma,x}$.
\end{defn}

\subsubsection{Feynman graphs and rules}\label{sec:FeynmanRules}
The Feynman graphs and rules are the same as in \cite{CMR2,CMW,CMW2}, but there are additional interaction vertices given by $\calS_{\Sigma,x,R}$. Namely, to an interaction vertex with $k$ incoming and $l$ outgoing half-edges labeled by $i_1,\ldots, i_k$ and $j_1,\ldots ,j_l$ respectively we associate 
$\Theta_{i_1 \ldots i_k}^{j_1 \ldots j_l}(x)$ as defined in \eqref{eq:Theta_expansion}. To a vertex labeled by $R$, with $k$ incoming half-edges labeled $i_1,\ldots,i_k$ and one outgoing edge labeled $j$, we associate $Y^j_{i;i_1,\ldots,i_k}(x)$ as in  \eqref{eq:Yexpansion}. Half-edges can start at $\mathsf{e}$ zero modes and boundary vertices on $\partial_1\Sigma$ and end at $\mathsf{x}$ zero modes or boundary vertices on $\partial_2 \Sigma$. See Figure \ref{fig:FeynmanRules}.  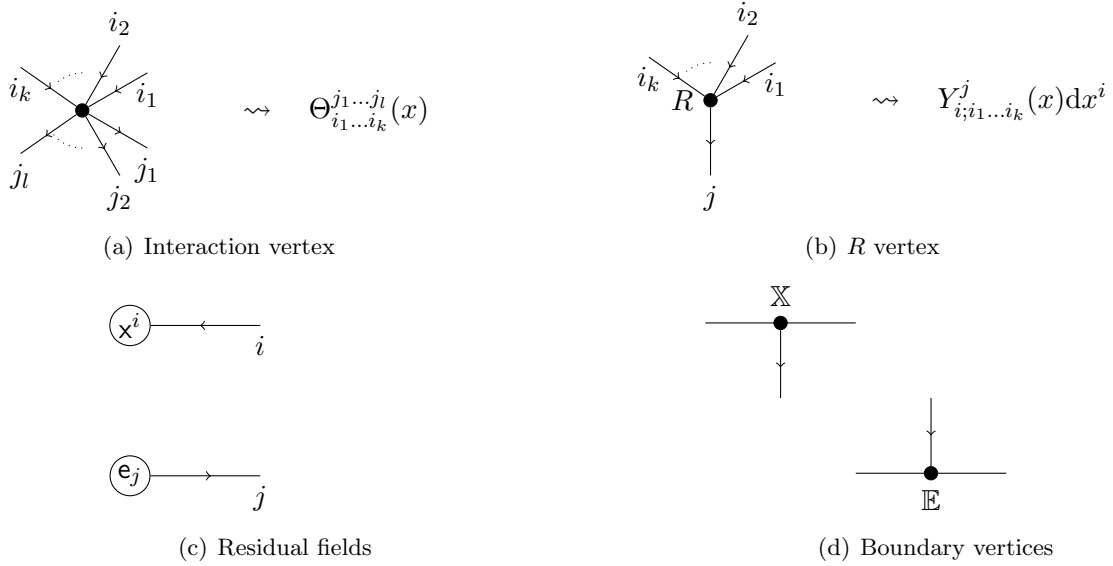
\begin{figure}
\centering
\subfigure[Interaction vertex]{
\centering
\begin{tikzpicture}
\node[vertex] (o) at (0,0) {};
\node[coordinate, label=below:{$i_1$}] at (30:1) {$i_1$}
edge[fermion] (o);
\node[coordinate, label=above:{$i_2$}] at (60:1) {$i_2$}
edge[fermion] (o);
\node[coordinate, label=below:{$i_k$}] at (145:1) {$i_k$}
edge[fermion] (o);
\draw[dotted] (90:0.5) arc (90:130:0.5);
\node[coordinate, label=below:{$j_1$}] (j1) at (-30:1) {};
\node[coordinate, label=below:{$j_2$}] (j2) at (-60:1) {};
\node[coordinate, label=below:{$j_l$}] (j3) at (-145:1) {};
\draw[dotted] (-90:0.5) arc (-90:-145:0.5); 
\draw[fermion] (o) -- (j1);
\draw[fermion] (o) -- (j2);
\draw[fermion] (o) -- (j3);
\node[coordinate,label=right:{$\leadsto\quad\Theta^{j_1\ldots j_l}_{i_1 \ldots i_k}(x)$}] at (2,0) {};
\end{tikzpicture}
}
\hspace{2cm}
\subfigure[$R$ vertex]{
\centering
\begin{tikzpicture}
\node[vertex,label=left:{$R$}] (o) at (0,0) {};
\node[coordinate, label=below:{$i_1$}] at (30:1) {$i_1$}
edge[fermion] (o);
\node[coordinate, label=above:{$i_2$}] at (60:1) {$i_2$}
edge[fermion] (o);
\node[coordinate, label=below:{$i_k$}] at (145:1) {$i_k$}
edge[fermion] (o);
\draw[dotted] (90:0.5) arc (90:130:0.5);
\node[coordinate, label=below:{$j$}] (j1) at (-90:1) {};
\draw[fermion] (o) -- (j1);
\node[coordinate,label=right:{$\leadsto\quad Y^{j}_{i;i_1\ldots i_k}(x)\dd x^i$}] at (2,0) {};
\end{tikzpicture}
}

\hspace{2cm}
\subfigure[Residual fields]{
\centering
\begin{tikzpicture}
\node[circle,draw,inner sep=1pt] (x) at (0,1) {$\mathsf{x}^i$};
\node[coordinate, label=below:{$i$}] (x2) at (30:2) {$i_1$}
edge[fermion] (x);
\node[coordinate, label=below:{$j$}] (e2)at (-30:2) {$i_1$};
\node[circle,draw,inner sep=1pt] (e) at (0,-1) {$\mathsf{e}_j$}
edge[fermion] (e2);
\node[coordinate, label={${}$}] at (4,0) {};
\end{tikzpicture}
}
\hspace{3cm}
\subfigure[Boundary vertices]{
\begin{tikzpicture}
\draw (-1,1) -- (1,1);
\node[vertex, label=above:{$\mathbb{X}$}] (x) at (0,1) {};
\node[coordinate] (b1) at (0,0) {}; 
\draw[fermion] (x) -- (b1);
\draw (1,-1) -- (3,-1);
\node[vertex, label=below:{$\mathbb{E}$}] (e) at (2,-1) {};
\node[coordinate] (b2) at (2,0) {}; 
\draw[fermion] (b2) -- (e);

\end{tikzpicture}
\hspace{2cm}
}
\caption{Summary of Feynman graphs and rules}\label{fig:FeynmanRules}
\end{figure}

\subsubsection{The full covariant state}
\label{covariant state}
As we have seen in Section \ref{sec:BVBFV}, we need to deal with composite fields in order to regularize higher functional derivatives, hence we also need a covariant version of the full state.

\begin{defn}[Full covariant quantum state]
\label{full_state_2}
We define the \textsf{full covariant quantum state} $\btpsi_{\Sigma,x}$ as in Definition \ref{full_covariant_state1}, 
using the Feynman rules in Figure \ref{fig:FeynmanRules} coming from the formal global action $\Tilde{\calS}_{\Sigma,x}$ and additionally with the rules for the boundary vertices as in Figure \ref{fig:composite_field_vertices}.
\end{defn}

\section{The modified differential Quantum Master Equation (mdQME)}
\label{sec_mdQME}
The mQME, as a condition to hold in the BV-BFV formalism on manifolds with boundary, needs to be modified for a globalized AKSZ theory. The more general condition is called the \textsf{modified differential Quantum Master Equation (mdQME)}. The classical and quantum aspects of this modification are discussed in \cite{BCM,CMR1}, and first discussed for manifolds with boundary in \cite{CMW2}. We want to think of the  operator 
\[\qtconn :=\left(\dr_x -\I\hbar \Delta_{\calV_{\Sigma,x}} + \frac{\I}{\hbar} \boldsymbol{\Omega}_{\de \Sigma}\right)  \] 
as a connection on the total bundle of spaces of states over (a part of) the moduli space of classical solutions of the theory. We call this operator the \textsf{quantum Grothendieck BFV (GBFV) operator}. 

\begin{rem}
As already mentioned, the quantum GBFV operator $\nabla_\mathsf{G}$ is an operator on forms valued in sections of the total bundle of states, which is a graded vector space and it is an operator of total degree $1$, but not of form degree $1$. We will call it an operator instead of a connection, since it can be misleading to think of $-\I\hbar\Delta_{\calV_{\Sigma,x}}+\frac{\I}{\hbar}\boldsymbol{\Omega}_{\de\Sigma}$ as a connection $1$-form. Rather, it defines a Maurer--Cartan element in the dg Lie algebra of differential forms with values in sections of the endomorphism bundle of the total state space.
\end{rem}

The goal of this section can be rephrased as showing that the state gives a well-defined $\qtconn$-cohomology class. For this we have to show that: 
\begin{enumerate}
\item The state defines a closed section with respect to $\nabla_\mathsf{G}$ (the mdQME), 
\item The operator $\qtconn$ is a coboundary operator, i.e. $\qtconn^2 = 0$, 
\item The cohomology class of $\btpsi_{\Sigma,x}$ is independent of the choices made, i.e. if we alter any of these choices, the state changes in a controlled way.
\end{enumerate}
This will be the program of this section. Heuristically, this result can be interpreted as saying that the state comes from a well-defined function on (a part of) the moduli space of classical solutions of the theory. 

\subsection{Assumptions on the Theory}
The proof for the program of this section depends on two important assumptions on our theory, which we will discuss in this section.

\subsubsection{No hidden faces anomalies}
Let $\Gamma$ be a Feynman graph and denote by $V(\Gamma)$ the set of its vertices; it decomposes into bulk vertices $V_B(\Gamma)$ and boundary vertices $V^{\de}(\Gamma)$.  The boundary of the configuration space is a union of several faces. We will denote by $F_{ij}$ the faces where two bulk vertices $i,j \in V_B(\Gamma)$ collapse in the bulk. By $F_{\geq 3}$ we denote the union of the faces where at least three bulk vertices collapse in the bulk, usually called ``hidden faces''. By $F^{\de}_{i_1,\ldots,i_k,j_1,\ldots,j_l}$ we denote faces where the bulk vertices $i_1,\ldots ,i_k \in V_B(\Gamma)$ and the boundary vertices $j_1,\ldots, j_l \in V^{\de}(\Gamma)$ collapse at a point in the boundary; the union of all these faces is denoted by $F^{\de}_{\Gamma}$. 

\begin{defn} 
\label{hidden_faces}
We say that a theory is \textsf{(hidden faces) anomaly free} if for every graph $\Gamma$ we have that 
\begin{equation}
\int_{F_{\geq 3}} \omega_{\Gamma} = 0, 
\end{equation}
i.e. all possible contributions of hidden faces vanish.
\end{defn}

\begin{rem}
A theory that is famously not anomaly free is Chern--Simons theory, see \cite{AS,AS2} and \cite{BC}, where the first ansatz for the quantum theory depends on the choice of gauge fixing. In this case one can get away of the anomaly with introducing a framing and a framing-dependent counterterm for the dependence on the gauge fixing. On the other hand, there are many examples of anomaly free theories. In particular, Kontsevich's result \cite{K} implies that any 2-dimensional theory is anomaly-free, e.g. the Poisson Sigma Model (\cite{I,SS1,SS2,CF1}).
\end{rem}

\begin{rem}[Counterterms]\label{rem:counterterms}
A general ansatz to deal with theories with anomalies is the addition of counterterms to the action. If the differential form which results from integrating over a hidden face is exact, one can add the corresponding primitive to the action, thus producing new vertices cancelling the anomaly. In Chern--Simons theory, this produces the ``framing'' anomaly, since the only hidden faces contribution comes from faces where all vertices in a graph collapse. The resulting differential form is a representative of the relative Pontryagin class of $M \times I$, whose primitive is the Chern--Simons form of the flat connection used to construct the propagator.
\end{rem} 

\subsubsection{Unimodularity}
In the quantization of general AKSZ theories one can have tadpoles, also called short loops, i.e. arrows starting and ending at the same vertex. They need to be treated seperately and can in principle spoil the mdQME. The best way to get around them is to assume that the theory satisfies a ``unimodularity'' condition.

\begin{defn}[Unimodularity]
We say that a given theory is \textsf{unimodular} if any
 contraction of the vertex tensor $\Theta$ with itself is zero.
\end{defn}

\subsection{The modified differential Quantum Master Equation} 
One of the main results, and the first point of the program is the following theorem:

\begin{thm}[mdQME for split AKSZ theories]
\label{thm:mdQME}
Consider the full covariant perturbative state $\btpsi_{\Sigma,x}$ as a quantization of an anomaly free and unimodular split AKSZ theory with target $T^*[d-1]M$, where $M$ is a graded manifold. Then
\begin{equation}
\label{AKSZ_mdQME}
\left(\dr_x -\I\hbar \Delta_{\calV_{\Sigma,x}} + \frac{\I}{\hbar} \boldsymbol{\Omega}_{\de \Sigma}\right) \btpsi_{\Sigma,x}=0,
\end{equation}
where we denote by $\dr_x$ the de Rham differential on $\Bar{M}$, the body of the graded manifold $M$.
\end{thm}We will prove this by considering the Feynman graphs of the theory analogously to the proof of the mQME in \cite{CMR2}.

\begin{proof}

\label{setup}
For the following computation we consider Feynman graphs which also have vertices, of any possible valency, on the boundary deriving the functions attached there. Let $\mathscr{G}$ denote the set of Feynman graphs of the theory. Then $\btpsi_{\Sigma,x}$ can be written as 
\begin{equation}
\btpsi_{\Sigma,x}=T_\Sigma\sum_{\Gamma \in \mathscr{G}}\int_{\mathsf{C}_{\Gamma}}\omega_{\Gamma}, \label{eq:Defbtpsi}
\end{equation}
where we include the combinatorial prefactor $\frac{(-\I\hbar)^{\text{loops($\Gamma$)}}}{\vert \Aut(\Gamma)\vert}$ in $\omega_{\Gamma}$ (here loops($\Gamma$) denotes the number of loops of a graph $\Gamma$). Moreover, we denote the configuration space $\mathsf{C}_\Gamma(\Sigma)$ by $\mathsf{C}_\Gamma$ for simplicity. Note that $\omega_{\Gamma}$ is a ($\calV_{\Sigma,x}$-dependent) differential form on $\mathsf{C}_{\Gamma} \times \Bar{M}$. Now recall Stokes' theorem for integration along a compact fiber with corners:
\begin{equation}
\dr \pi_* = \pi_*\dr - \pi^{\de}_* .\label{eq:fiberstokes}
\end{equation} The integrals in \eqref{eq:Defbtpsi} are fiber integrals, hence we can apply \eqref{eq:fiberstokes} to yield 
\begin{equation}
\dr_x \int_{\mathsf{C}_{\Gamma}}\omega_{\Gamma} = \int_{\mathsf{C}_{\Gamma}}\dr\omega_{\Gamma} - \int_{\de \mathsf{C}_{\Gamma}} \omega_{\Gamma}. \label{eq:dxintomega}
\end{equation}
Here $\dr$ inside the integral is the total differential on $\Bar{M} \times \mathsf{C}_{\Gamma}$, and thus we can split it as 
\begin{equation}
\dr = \dr_x + \dr_{1} + \dr_{2} \label{eq:diffsplit}
\end{equation}
Here $\dr_1$ denotes the part of the de Rham differential acting on the propagators in $\omega_{\Gamma}$, and $\dr_2$ the part acting on $\mathbb{X}$ and $\E$ fields. Let us introduce some more notation: The set of edges of $\Gamma$ will be denoted by $E(\Gamma)$. We denote by $E_k(\Gamma)$ the set of edges $e$ whose removal increases the number of connected components by $k$. Clearly $E(\Gamma) = E_0(\Gamma) \sqcup E_1(\Gamma)$, and $e \in E_1(\Gamma)$ if and only if $e$ is not part of a loop in $\Gamma$.


\begin{prop} 
\label{prop1}
The following hold:
\begin{enumerate}[$(i)$]
\item The action of the BV Laplacian on the state is given by 
\begin{equation}
\I\hbar\Delta_{\calV_{\Sigma,x}} \btpsi_{\Sigma,x} = T_{\Sigma}\sum_{\Gamma \in \mathscr{G}} \int_{\mathsf{C}_{\Gamma}}\dr_1\omega_{\Gamma}  \label{eq:Deltaaction}
\end{equation}
\item The action of $\Omega_0$ on the state is given by 
\begin{equation}
-\frac{\I}{\hbar}\Omega_0\btpsi_{\Sigma,x} = T_{\Sigma}\sum_{\Gamma \in \mathscr{G}} \int_{\mathsf{C}_{\Gamma}}\dr_2\omega_{\Gamma}  \label{eq:Omega0action}
\end{equation}
\item For the boundary contributions we have 
\begin{align}
 \sum_{\Gamma \in \mathscr{G}}\sum_{i,j \in V_B(\Gamma)} \int_{F_{ij}}\omega_{\Gamma} &= \sum_{\Gamma \in \mathscr{G}} \int_{\mathsf{C}_{\Gamma}}\dr_x \omega_{\Gamma}, \label{eq:dxaction}
\end{align}
where we denote by $F_{ij}$ the boundary faces where two bulk vertices $i,j \in V_B(\Gamma)$ collapse in the bulk, and furthermore 
\begin{equation}
 T_{\Sigma}\sum_{\Gamma \in \mathscr{G}}\int_{F^{\de}_{\Gamma}}\omega_{\Gamma} = \frac{\I}{\hbar}\boldsymbol{\Omega}_{\textnormal{pert}}\btpsi_{\Sigma,x},\label{eq:def_Omega_pert}
\end{equation}
where $F_\Gamma^\de$ is the union of the $F^{\de}_{i_1,\ldots,i_k,j_1,\ldots,j_l}$, which denote the boundary faces where the bulk vertices $i_1,\ldots ,i_k \in V_B(\Gamma)$ and the boundary vertices $j_1,\ldots, j_l \in V^{\de}(\Gamma)$ collapse at a point in the boundary.
\end{enumerate}
\end{prop} 

Proposition \ref{prop1} immediately implies the mdQME for anomaly free theories by the following simple computation:
\begin{align*}
\dr_x\btpsi_{\Sigma,x}&= T_{\Sigma}\sum_{\Gamma \in \mathscr{G}}\dr_x \int_{\mathsf{C}_{\Gamma}}\omega_{\Gamma} = T_{\Sigma} \sum_{\Gamma \in\mathscr{G}} \left(\int_{\mathsf{C}_{\Gamma}}\dr\omega_{\Gamma} - \int_{\de \mathsf{C}_{\Gamma}} \omega_{\Gamma}\right) \\ 
&= T_{\Sigma} \sum_{\Gamma \in \mathscr{G}} \left(\int_{\mathsf{C}_{\Gamma}}\dr_x\omega_{\Gamma} + \dr_1\omega_{\Gamma} + \dr_{2}\omega_{\Gamma} - \int_{\de \mathsf{C}_{\Gamma}} \omega_{\Gamma}\right) \\ 
&= \I\hbar \Delta_{\calV_{\Sigma,x}} \btpsi_{\Sigma,x} - \frac{\I}{\hbar}\Omega_0\btpsi_{\Sigma,x} - \frac{\I}{\hbar}\boldsymbol{\Omega}_{\textnormal{pert}}\btpsi_{\Sigma,x}
\end{align*}
\end{proof}

\subsection{Proof of Proposition \ref{prop1}}
The proof of Theorem \ref{thm:mdQME} is relied on the proof of Proposition \ref{prop1}. We split the proof into four parts. Namely, we show the Equations \eqref{eq:Deltaaction}, \eqref{eq:Omega0action}, \eqref{eq:dxaction} and \eqref{eq:def_Omega_pert} seperately and conclude.

\subsubsection{Proof of Equation \eqref{eq:Deltaaction}}
Consider a propagator\footnote{This means that we want $\zeta$ to be given by $\pm\frac{1}{T_\Sigma}\frac{1}{\I\hbar}\int_\calL \ee^{\frac{\I}{\hbar}\mathscr{S}_\Sigma}\mathscr{X}\land\mathscr{E}$, where $T_\Sigma=\int_\calL \ee^{\frac{\I}{\hbar}\mathscr{S}_\Sigma}$ and $\mathscr{S}_\Sigma=\int_\Sigma \mathscr{E}\land \dd\mathscr{X}$. Here the Lagrangian $\calL$ is given by the direct sum of the image of the Hodge-theoretic chain contraction and the image of its dual (with the correct shift).} $\zeta$ as in \cite{CMR2} and denote by $\zeta_{ij}:=\zeta(u_i,u_j)$ for $u_i,u_j\in\Sigma$. Moreover, let $\chi_i$ denote a representative for the basis of $\calV_{\Sigma,x}$ and denote by $\chi^{i}$ a representative of its dual basis. Hence, in local coordinates, we can write the residual fields, $\sfx$ and $\sfe$ as $\sfx=\sum_iz^{i}\chi_i$ and $\sfe=\sum_iz^+_i\chi^{i}$.
Then we have the identity 
\begin{equation}
\dr\zeta_{12} = \pm\sum_{k}\pi_1^*\chi_k\pi^*_2\chi^k = \pm\Delta_{\calV_{\Sigma,x}}(\sfx_1\sfe_2),
\end{equation} 
where $\pi_1$ and $\pi_2$ denote the projections to the first and second factor of $\mathsf{C}_2(\Sigma)$ respectively and where $\sfx_i:=\sfx(u_i)$ and $\sfe_j:=\sfe(u_j)$.
Recall also that the BV Laplacian is given by 
$$\Delta_{\calV_{\Sigma,x}} =\pm \sum_k \frac{\de^2}{\de z^{k}\de z_k^+} $$ 
where $\deg \frac{\de}{\de z_k^+} = - \deg z_k^+ = \deg z^k -1 = 1 - \deg \chi_k - 1 = - \deg \chi_k$. Since $\deg \sfx = 1 $ we get 
$$ \Delta_{\calV_{\Sigma,x}}(\pi^*_1\sfx\pi^*_2\sfe) = \pm\sum_k\pi^*_1\chi_k\pi^*_2\chi^k.$$   

Let us introduce some more notation for certain operations on graphs. For any graph $\Gamma$, let $\bullet$ denote either an edge $e = (i,j)$ or a pair of residual fields $\sfx_i,\sfe_j$\footnote{Note that an edge denotes a contracted $\mathscr{X}$-$\mathscr{E}$-pair, so $\bullet$ denotes either an $\mathscr{X}$-$\mathscr{E}$ or an $\sfx$-$\sfe$-pair.}. Denote by $\Gamma_\bullet'$ the graph resulting from removing the component labeled by $\bullet$ and replacing it with a diagonal class between points $i$ and $j$, i.e.  the sum $\sum_k \pm \pi^*_i\chi_k\pi_j^*\chi^k$ (see also Figure \ref{graphs_resid}). 

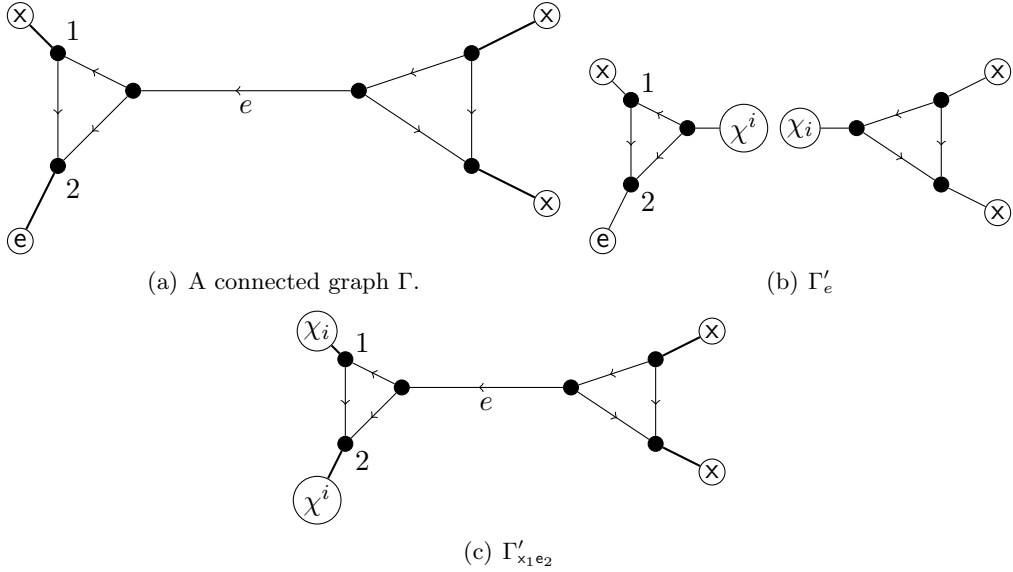
\begin{figure}[h!]
\subfigure[A connected graph $\Gamma$.]{
\begin{tikzpicture}
\node [internal] (v1) at (-1.5,0.5) {};
\node [internal] (v4) at (1.5,0.5) {};
\node [internal] (v6) at (3,-0.5) {};
\node [internal] (v5) at (3,1) {};
\node [internal] (v2) at (-2.5,-0.5) {};
\node [internal] (v3) at (-2.5,1) {};
\node [] (11) at (-2.3,1.3) {$1$};
\node [] (22) at (-2.3,-0.8) {$2$};
\draw [fermion] (v1)--(v2);
\draw [fermion] (v1)--(v3);
\draw [fermion] (v4)--node[below] {$e$} (v1);
\draw [fermion] (v5)--node[above] {}(v4);
\draw [fermion] (v4)--(v6);
\node [residual ] (v8) at (-3,1.5) {$\mathsf{x}$};
\node [residual ] (v7) at (-3,-1.5) {$\mathsf{e}$};
\node [residual ] (v9) at (4,1.5) {$\mathsf{x}$};
\node [residual ] (v10) at (4,-1) {$\mathsf{x}$};
\draw [thick, ] (v2) edge (v7);
\draw [thick, ] (v8) edge (v3);
\draw [thick, ] (v5) edge (v9);
\draw [thick, ] (v6) edge (v10);
\draw [fermion] (v3)--(v2);
\draw [fermion] (v5)--(v6);
\end{tikzpicture}
}
\subfigure[$\Gamma'_{e}$]{
\begin{tikzpicture}[scale=0.75]
\node [internal] (v1) at (-1.5,0.5) {};
\node [internal] (v4) at (1.5,0.5) {};
\node [internal] (v6) at (3,-0.5) {};
\node [internal] (v5) at (3,1) {};
\node [internal] (v2) at (-2.5,-0.5) {};
\node [internal] (v3) at (-2.5,1) {};
\node [] (11) at (-2.2,1.3) {$1$};
\node [] (22) at (-2.2,-0.8) {$2$};
\draw [fermion] (v1)--(v2);
\draw [fermion] (v1)--(v3);
\draw [fermion] (v5)--node[above] {}(v4);
\draw [fermion] (v4)--(v6);
\draw [fermion] (v3)--(v2);
\draw [fermion] (v5)--(v6);
\node [residual ] (v8) at (-3,1.5) {$\mathsf{x}$};
\node [residual ] (v7) at (-3,-1.5) {$\mathsf{e}$};
\node [residual ] (v9) at (4,1.5) {$\mathsf{x}$};
\node [residual ] (v10) at (4,-1) {$\mathsf{x}$};
\node [residual] (v11) at (-0.5,0.5) {$\chi^i$} edge (v1);
\node [residual] (v12) at (0.5,0.5) {$\chi_i$} edge (v4);
\draw (v2) edge (v7);
\draw (v8) edge (v3);
\draw (v5) edge (v9);
\draw (v6) edge (v10);
\end{tikzpicture}
}
\subfigure[$\Gamma'_{\sfx_1\sfe_2}$]{
\begin{tikzpicture}[scale=0.75]
\node [internal] (v1) at (-1.5,0.5) {};
\node [internal] (v4) at (1.5,0.5) {};
\node [internal] (v6) at (3,-0.5) {};
\node [internal] (v5) at (3,1) {};
\node [internal] (v2) at (-2.5,-0.5) {};
\node [internal] (v3) at (-2.5,1) {};
\node [] (11) at (-2.2,1.3) {$1$};
\node [] (22) at (-2.2,-0.8) {$2$};
\draw [fermion] (v1)--(v2);
\draw [fermion] (v1)--(v3);
\draw [fermion] (v4)--node[below] {$e$} (v1);
\draw [fermion] (v5)--node[above] {}(v4);
\draw [fermion] (v4)--(v6);
\node [residual ] (v8) at (-3,1.5) {$\chi_i$};
\node [residual ] (v7) at (-3,-1.5) {$\chi^i$};
\node [residual ] (v9) at (4,1.5) {$\mathsf{x}$};
\node [residual ] (v10) at (4,-1) {$\mathsf{x}$};
\draw [thick, ] (v2) edge (v7);
\draw [thick, ] (v8) edge (v3);
\draw [thick, ] (v5) edge (v9);
\draw [thick, ] (v6) edge (v10);
\draw [fermion] (v3)--(v2);
\draw [fermion] (v5)--(v6);
\end{tikzpicture}
}
\caption{Explanation of the operation $\Gamma'_\bullet$.}
\label{graphs_resid}
\end{figure}

Clearly, we have 
\begin{equation}
\sum_{\Gamma \in \mathscr{G}} \int_{\mathsf{C}_{\Gamma}}\dr_1\omega_{\Gamma}  = \sum_{\Gamma \in \mathscr{G}}\sum_{e \in E(\Gamma)} \int_{\mathsf{C}_{\Gamma}}\omega_{\Gamma_e'}\label{eq:domega}
\end{equation}
On the other hand, the properties of the BV Laplacian imply 
$$\Delta_{\calV_{\Sigma,x}} \omega_{\Gamma} = \sum_{\text{pairs of} \atop \text{residual fields }(\sfx_i,\sfe_j) \text{ in } \Gamma} \omega_{\Gamma_{\sfx_i\sfe_j}'},$$
which we can interpret as a first order differential operator on a product $z_i^+z^i$.  
By construction we get that if the edge $e$ starts at $i$ and ends at $j$, we have 
\begin{equation}\omega_{\Gamma_e'} = (\I\hbar)\omega_{\Gamma_{\sfx_i\sfe_j}'}, 
 \label{omegaequal} \end{equation} \
since each edge comes with a factor of $(-\I\hbar)$. 
Now consider the action of $\Delta_{\calV_{\Sigma,x}}$ on $\btpsi_{\Sigma,x}$, and note that 
\begin{equation}\Delta_{\calV_{\Sigma,x}}\btpsi_{\Sigma,x} =T_{\Sigma} \sum_{\Gamma \in \mathscr{G}} \int_{\mathsf{C}_{\Gamma}}\Delta_{\calV_{\Sigma,x}}\omega_{\Gamma} = T_{\Sigma}\sum_{\Gamma \in \mathscr{G}} \int_{\mathsf{C}_{\Gamma}}\sum_{\text{pairs of} \atop \text{residual fields }(\sfx_i,\sfe_j) \text{ in } \Gamma} \omega_{\Gamma_{\sfx_i\sfe_j}'}
\label{eq:Deltapsiproof}
\end{equation}
The sum in Equation \eqref{eq:domega} above can be seen as summing over all graphs with one egde marked - we will denote this set by $\mathscr{G}^{E}$. In the sum in Equation \eqref{eq:Deltapsiproof} above we sum over all  graphs with one pair of residual fields marked - we will denote this set by $\mathscr{G}^{\text{pair}}$. Now define a map 
\begin{align*}
\mathscr{G}^{E} &\to \mathscr{G}^{\text{pair}} \\
\Gamma &\mapsto \widetilde{\Gamma} 
\end{align*} which exchanges the marked edge for a marked pair of residual fields. Clearly this map is invertible and its inverse exchanges the marked pair of residual fields for a marked edge. The contributions to sum labeled by $\Gamma$ and $\widetilde{\Gamma}$ agree up to a factor by Equation \eqref{omegaequal}. We conclude the proof of \eqref{eq:Deltaaction}.

\subsubsection{Proof of Equation \eqref{eq:Omega0action}}
Recall from Section \ref{sec:BVBFV} that $\Omega_0$ is given by 
$$\Omega_0=(-1)^{\dim\Sigma}\I\hbar\left(\int_{\de_1\Sigma}\dd \mathbb{X}^{i}\frac{\delta}{\delta\mathbb{X}^{i}}+\int_{\de_2\Sigma}\dd \E_i\frac{\delta}{\delta\E_i}\right).$$
Hence Equation \eqref{eq:Omega0action} follows immediatly from the definition of the de Rham differential on the $\mathbb{X}$ and $\E$ fields. Moreover, note that $\omega_\Gamma$ is given as a product of propagators, residual fields, boundary fields and vertex tensors as in Section \ref{sec:BVBFV} Equation \eqref{eq:def_state_pert}.

\subsubsection{Proof of Equation \eqref{eq:dxaction}}\label{sec:bdry_terms}
First notice that all the $x$-dependence of $\omega_{\Gamma}$ lies in the bulk vertex tensors. There are two types of bulk vertex tensors, arising from $\mathscr{V}_{\Sigma,x}(\hatX,\hateta):=\mathsf{T}\Tilde{\varphi}^*\Theta(\mathsf{X},\boldsymbol{\eta})$ and $\calS_{\Sigma,x,R}$ respectively, we will call them type $\romI$  and type $\romII$ vertices. Let us analyse them in more detail. \\ 
First, recall that 
\begin{equation}
\mathscr{V}_{\Sigma,x}(\hatX,\hateta) = \sum_{k,l=0}^{\infty}\Theta^{i_1\ldots i_k}_{j_1\ldots j_l}(x)\hatX^{i_1}\cdots \hatX^{i_k}\hateta_{j_1} \cdots \hateta_{j_l}
\end{equation}
(the $\Theta$'s are exactly one set of vertex tensors in the Feynman graphs). 
The fact that $(\mathscr{V}_{\Sigma,x},\mathscr{V}_{\Sigma,x}) = 0$ is equivalent to $\sum_r\pm\frac{\delta}{\delta \hatX^r}\mathscr{V}_{\Sigma,x}(\hatX,\hateta)\frac{\delta}{\delta \hateta_r}\mathscr{V}_{\Sigma,x}(\hatX,\hateta) = 0$. In terms of the vertex tensors it reads as 
\begin{equation}
\sum_{\substack{k'+k''=k\\l'+l'' =l}}\sum_{\substack{1 \leq s' \leq k' \\ 1 \leq s'' \leq k''}}\sum_r \delta^r_{i'_{s'}}\delta_r^{j''_{s''}}\Theta^{i'_1\ldots i'_{k'}}_{j'_1\ldots j'_{l'}}(x)\Theta^{i''_1\ldots i''_{k''}}_{j'_1\ldots j'_{l''}}(x)  = 0 \label{eq:crazysums}
\end{equation}
for every $k,l \geq 0$. This can be understood as follows: From a $(k',l')$-tensor $\Theta'$ and a $(k'',l'')$-tensor $\Theta''$  we can form a $(k,l)=(k'+k''-1,l'+l''-1)$-tensor $\Theta$ by contracting exactly one index. We will say $\Theta$ has been merged from $\Theta'$ and $\Theta''$.  If we sum over all possibilities of constructing $(k,l)$-tensors this way, the result vanishes. Now, suppose we have a graph $\Gamma$ with an edge $e$ between two type $\romI$ vertices $v'$ and $v''$, with vertex tensors $\Theta'$ and $\Theta''$, respectively. The boundary of $\mathsf{C}_{\Gamma}$ contains a face where these two vertices collapse; by normalization of the propagator, the integral over the corresponding fiber yields $\pm 1$. We are left with a new graph $\Gamma/e$ where the edge has been collapsed into a new marked vertex. The vertex tensor at this new vertex  has been merged from  $\Theta'$, $\Theta''$. Now, sum over all graphs, and the corresponding boundary contributions of edges between type $\romI$ vertices. Then we will sum over all ways of merging a vertex in $\Gamma/e$. Hence these contributions vanish by \eqref{eq:crazysums}.  Similarly, one can argue for edges between type $\romII$ vertices, since also  $(\calS_{\Sigma,x,R},\calS_{\Sigma,x,R}) = 0$. For edges between type $\romI$ and type $\romII$ vertices, the relation $\dd_x\mathscr{V}_{\Sigma,x} = (\calS_{\Sigma,x,R},\mathscr{V}_{\Sigma,x})$ implies \eqref{eq:dxaction}.

\subsubsection{Proof of Equation \eqref{eq:def_Omega_pert}}
Recall that we have $\boldsymbol{\Omega}_{\text{pert}} = \boldsymbol{\Omega}^{\bbX}_{\text{pert}} +  \boldsymbol{\Omega}^{\E}_{\text{pert}}$, where $\boldsymbol{\Omega}^{\bbX}_{\text{pert}}$ is constructed as follows. Denote by $\Gamma$ a Feynman graph of the theory, and let $\Gamma'$ be a subgraph of $\Gamma$ (we use the notation $\Gamma' \leq \Gamma$) which contains only bulk vertices and vertices on the boundary component, say, where we work in the $\mathbb{X}$-representation.  Then there is a corresponding contribution $\boldsymbol{\Omega}_{\Gamma'\leq \Gamma}$ to $\boldsymbol{\Omega}_{\de\Sigma}$ given as follows. If $\Gamma'$ has inward leaves (i.e. there is an arrow from some vertex in $\Gamma \setminus \Gamma'$ to a vertex in $\Gamma'$) then $\boldsymbol{\Omega}_{\Gamma' \leq \Gamma}$ vanishes. Suppose the $l$ outward leaves are labeled by $j_1, \ldots,j_l$ and suppose $\Gamma'$ has $k$ boundary vertices with boundary fields $[\bbX^{I_j}], j = 1,\ldots,k$. Then 
\begin{equation}
\boldsymbol{\Omega}_{\Gamma' \leq \Gamma} = \frac{(-\I\hbar)^{\text{loops}(\Gamma')}}{\vert\operatorname{Aut}(\Gamma')\vert}\int_{\de_1 \Sigma} \left(\sigma_{\Gamma'}\right)^J_{I^1 \ldots I^k} [\bbX^{I_1}]\cdots[\bbX^{I_k}] \frac{\delta}{\delta [\bbX^J]}
\end{equation}
where $\sigma_{\Gamma'}$ is the differential form on $\de_1\Sigma$ whose value at $x \in \de_1\Sigma$ is given by integrating the limiting propagators over the compactified configuration space $\Tilde{\mathsf{C}}_{\Gamma'}(\mathbb{H})$ in the upper half-space (see Definition \ref{full_BFV}), which we denote simply by $\Tilde{\mathsf{C}}_{\Gamma'}$ for simplicity, as in Appendix \ref{app:Conf}. Recall that in $\Tilde{\mathsf{C}}_{\Gamma'}$ we take the quotient by translation and scaling. Put differently, there is a boundary face $\de_{1,\Gamma'}\mathsf{C}_{\Gamma}$ of  $\mathsf{C}_{\Gamma}$ corresponding to the collapse of $\Gamma'$ at $\de_1\Sigma$, that face is given by 
$$\de_{1,\Gamma'}\mathsf{C}_{\Gamma} \cong \Tilde{\mathsf{C}}_{\Gamma'} \times \mathsf{C}_{\Gamma/\Gamma' }, $$
where we denote by $\Gamma/\Gamma'$ the collapse of $\Gamma'$ in $\Gamma$ to a new boundary vertex in $\de_1\Sigma$. Let $\Gamma'_{\text{amp}}$  be the ``amputated'' graph $\Gamma'$ where we cut off all the outward leaves and the ones containing residual fields. 
Then $\sigma_{\Gamma'}$ is given as follows. Let $\Tilde{\sigma}_{\Gamma'}$ be the pushforward of $\omega_{\Gamma'_{\text{amp}}}$ along the map $\pi \colon   \Tilde{\mathsf{C}}_{\Gamma'} \times \mathsf{C}_{\Gamma/\Gamma' } \to \mathsf{C}_{\Gamma/\Gamma'}$. Since we take the amputated $\Gamma'$, this pushforward is a basic form in $p\colon \mathsf{C}_{\Gamma/\Gamma'} \to \de_1\Sigma$, and the corresponding form on $\de_1\Sigma$ is $\sigma_{\Gamma'}$, i.e. $\Tilde{\sigma}_{\Gamma'} = p^*\sigma_{\Gamma'}$. Then 
\begin{equation}
\boldsymbol{\Omega}^{\bbX}_{\text{pert}} = \sum_{\Gamma' \leq \Gamma} \boldsymbol{\Omega}_{\Gamma' \leq \Gamma},
\end{equation}
where the sum runs over all Feynman graphs $\Gamma$ of the theory and all their subgraphs $\Gamma'$. 
\begin{figure}[h!]

\begin{tikzpicture}

\draw (-3,0) -- (3,0); 
\node[vertex] (o) at (0,0) {};
\node[coordinate, label=below:{$y$}] at (o.south) {};
\node[vertex] (bdry1) at (-1,0) {};
\node[coordinate, label=below:{$[\mathbb{X}^{i}]$}] at (bdry1.south) {};
\node[vertex] (bdry2) at (1,0) {};
\node[coordinate, label=below:{$[\mathbb{X}^{j}]$}] at (bdry2.south) {};


\node[vertex] (bulk1) at (30:1) {}; 
\node[vertex] (bulk2) at (60:1) {};
\node[vertex] (bulk3) at (130:1) {};


\node[coordinate, label=below:{$i_1$}] (b1) at (30:2) {$i_1$};
\node[coordinate, label=above:{$i_2$}] (b2) at (60:2) {$i_2$};
\node[coordinate, label=below:{$i_3$}] (b3) at (145:2) {$i_3$};
\draw[] (0:1.3) arc (0:180:1.3); 
\draw[fermion] (bulk1) -- (b1);
\draw[fermion] (bulk2) -- (b2);
\draw[fermion] (bulk3) -- (b3);
\draw[fermion] (bulk1) -- (bulk2);   
\draw[fermion] (bulk2) -- (bulk3); 
\draw[fermion] (bulk3) -- (bulk1);
\draw[fermion] (bdry1) -- (bulk3); 
\draw[fermion] (bdry2) -- (bulk1);
\node[coordinate,label=right:{$\leadsto\int_{\de_1\Sigma}\sigma_{\Gamma'}[\bbX^{i}][\bbX^{j}]\frac{\delta}{\delta[\bbX^{i_1}\bbX^{i_2}\bbX^{i_3}]}$}] at (3,1) {};
\end{tikzpicture} 
\caption{An example of a term in $\boldsymbol{\Omega}_{\de\Sigma}$.}
\end{figure}

We can see that 
$$\int_{\de_{\Gamma'}\mathsf{C}_\Gamma}\omega_\Gamma=\boldsymbol{\Omega}_{\Gamma'\leq \Gamma}\int_{\mathsf{C}_{\Gamma/\Gamma'}}\omega_{\Gamma/\Gamma'},$$
and hence we conclude Equation \eqref{eq:def_Omega_pert} by summing over all graphs. One can construct $\boldsymbol{\Omega}^\E_{\text{pert}}$ analogously.

\subsection{Flatness of the quantum GBFV operator}
\label{flat}
We have the following theorem:
\begin{thm}
\label{thm:flatness}
The quantum GBFV operator $\nabla_\mathsf{G}$ squares to zero, i.e.
\begin{equation}
\label{eq:flatness}
(\nabla_\mathsf{G})^2\equiv 0.
\end{equation}
\end{thm}

\begin{proof}
Note that condition \eqref{eq:flatness} is the same as saying that
\begin{equation}
\label{eq:flat}\I\hbar \Delta_{\calV_{\Sigma,x}}\left(\dd_x\boldsymbol{\boldsymbol{\Omega}}_{\de\Sigma}+\boldsymbol{\Omega}_{\de\Sigma}\dd_x\right)=\boldsymbol{\Omega}_{\de\Sigma}^2
\end{equation}
since $\Delta_{\calV_{\Sigma,x}}\boldsymbol{\Omega}_{\de\Sigma}+\boldsymbol{\Omega}_{\de\Sigma}\Delta_{\calV_{\Sigma,x}}=\dd_x\Delta_{\calV_{\Sigma,x}}+\Delta_{\calV_{\Sigma,x}}\dd_x=0$. Here we interpret $\dd_x$ and $\boldsymbol{\Omega}_{\de\Sigma}$ as operators on $\calH_{tot}$-valued differential forms. Equivalently, we can interpret $\boldsymbol{\Omega}_{\de\Sigma}$ as an element of the differential graded Lie algebra of sections of $\bigwedge^{\bullet}T^*M\otimes \mathrm{End}(\calH_{tot})$ and rewrite equation \eqref{eq:flat} in the following intriguing fashion: 
\begin{equation}
\label{eq:OmegaMC}
\I\hbar\dd_x\boldsymbol{\Omega}_{\de\Sigma} - \frac{1}{2}\left[\boldsymbol{\Omega}_{\de\Sigma},\boldsymbol{\Omega}_{\de\Sigma}\right] = 0.
\end{equation}
Equation \eqref{eq:OmegaMC} shows that $\frac{\I}{\hbar}\boldsymbol{\Omega}_{\de\Sigma}$ is a Maurer--Cartan element in this dg-Lie algebra. One can prove this equation by using Stokes' theorem for the definition with Feynman diagrams, similarly as in the proof of flatness in the mQME section of \cite{CMR2}. The crucial point is the following lemma. 
\begin{lem} \label{lem:Omegasquare} We have that 
\begin{equation}
\left(\boldsymbol{\Omega}^{\bbX}_{\textnormal{pert}}\right)^2 = \sum_{\Gamma'' \leq \Gamma' \leq \Gamma} \int_{\de_1 \Sigma} \sigma_{\Gamma''} \sigma_{\Gamma'/\Gamma''} [\bbX^{I_1}]\cdots [\bbX^{I^k}]\frac{\delta}{\delta[\bbX^J]}, 
\end{equation}
where $I_1,\ldots,I_k$ are the boundary vertices of $\Gamma'$ and the outward leaves of $\Gamma'$ are labeled by $J = {j_1, \ldots, j_l}$. An analogous statement holds in the $\E$-representation. 
\end{lem}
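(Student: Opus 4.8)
The plan is to compute the operator square directly from the definition $\Omega^{\bbX}_{\text{pert}} = \sum_{\Gamma' \le \Gamma}\Omega_{\Gamma' \le \Gamma}$ and to match the result, term by term, with the configuration-space picture of codimension-two collapses at the boundary. First I would expand
\begin{equation*}
\left(\Omega^{\bbX}_{\text{pert}}\right)^2 = \sum_{\gamma_1}\sum_{\gamma_2}\Omega_{\gamma_1}\,\Omega_{\gamma_2},
\end{equation*}
where each factor is the first-order operator $\Omega_{\gamma}=\frac{(-\I\hbar)^{\text{loops}(\gamma)}}{|\Aut(\gamma)|}\int_{\de_1\Sigma}(\sigma_{\gamma})^J_{I^1\dotsm I^k}[\bbX^{I_1}]\dotsm[\bbX^{I_k}]\frac{\delta}{\delta[\bbX^J]}$ attached to a connected boundary graph $\gamma$. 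Composing two such operators, the annihilation part $\frac{\delta}{\delta[\bbX^{J_1}]}$ of the left factor $\Omega_{\gamma_1}$ must be commuted past the creation operators $[\bbX^{K_i}]$ produced by the right factor $\Omega_{\gamma_2}$. This splits the composition into a \emph{contraction} part, in which $\frac{\delta}{\delta[\bbX^{J_1}]}$ hits one composite field $[\bbX^{K_i}]$, and a \emph{normal-ordered} part, in which it passes through to the external state.

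The heart of the argument is identifying the contraction terms with nested subgraphs. When $\frac{\delta}{\delta[\bbX^{J_1}]}$ contracts with $[\bbX^{K_i}]$, the outward leaf of $\gamma_1$ is glued to the $i$-th boundary vertex of $\gamma_2$, turning that leg into an internal edge of a combined boundary graph $\Gamma'$. Reading off the surviving operators, the resulting term carries exactly the creation operators indexed by the boundary vertices $I_1,\dotsc,I_k$ of $\Gamma'$, the single derivative $\frac{\delta}{\delta[\bbX^J]}$ associated to the outward leaf $J$ of $\Gamma'$ (inherited from $\gamma_2$), and the coefficient $\sigma_{\gamma_1}\sigma_{\gamma_2}$. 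Setting $\Gamma'' := \gamma_1$ and $\Gamma'/\Gamma'' := \gamma_2$, this is precisely one summand of the claimed right-hand side; summing over the contraction position $i$ and over all $\gamma_1,\gamma_2$ then reproduces the sum over nested pairs $\Gamma'' \le \Gamma' \le \Gamma$. The same statement can be read off geometrically from Stokes' theorem: the relevant codimension-two face of $C_\Gamma$ factorizes as $\underline{C}_{\Gamma''}\times\underline{C}_{\Gamma'/\Gamma''}\times C_{\Gamma/\Gamma'\sqcup\{*\}}$, and integrating the limiting propagators over the first two factors yields $\sigma_{\Gamma''}\,\sigma_{\Gamma'/\Gamma''}$.

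Two bookkeeping issues remain. Since gluing $\gamma_1$ to $\gamma_2$ along a single leaf creates no new loop, one has $\text{loops}(\Gamma') = \text{loops}(\Gamma'') + \text{loops}(\Gamma'/\Gamma'')$, so the powers of $-\I\hbar$ combine correctly, while the prefactor $\frac{1}{|\Aut(\gamma_1)|\,|\Aut(\gamma_2)|}$ together with the number of inequivalent gluings must collapse to $\frac{1}{|\Aut(\Gamma')|}$ by the standard orbit-counting identity for symmetry factors. I expect this symmetry-factor reshuffling, together with the careful sign tracking that comes from commuting the Grassmann-odd pieces past one another, to be the main technical obstacle. The normal-ordered terms, in which neither derivative contracts a composite field, are genuinely second order but symmetric under exchanging $\gamma_1 \leftrightarrow \gamma_2$; since $\frac{\I}{\hbar}\Omega^{\bbX}_{\text{pert}}$ is an odd element of the dg-Lie algebra of Subsection~\ref{flat}, its square equals one half of its graded self-commutator, in which these symmetric contributions cancel, leaving exactly the single-contraction terms recorded in the lemma. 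The analogous statement in the $\E$-representation follows by exchanging the roles of creation and annihilation operators and replacing $\de_1\Sigma$ by $\de_2\Sigma$.
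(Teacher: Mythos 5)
Your proposal is correct and follows essentially the same route as the paper: both write the square as (half) the graded self-commutator of the odd, first-order operators $\Omega_{\Gamma'\leq\Gamma}$, observe that the genuinely second-order terms cancel so that only the terms where one derivative hits the other operator's coefficients survive, and then identify each surviving contraction — the outward leaves of one subgraph matching the composite field at a vertex of the other — with a nested pair $\Gamma''\leq\Gamma'\leq\Gamma$ via the gluing/blow-up of that vertex. The extra bookkeeping you flag (loop counting, automorphism factors, signs) is not treated explicitly in the paper's proof either, so it does not constitute a deviation.
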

\begin{proof}[Proof of Lemma \ref{lem:Omegasquare}]
Since $\boldsymbol{\Omega}^{\bbX}_{\text{pert}}$ has degree 1 we can write 
$$\left(\boldsymbol{\Omega}^{\bbX}_{\text{pert}}\right)^2 = \frac{1}{2}\left[\boldsymbol{\Omega}^{\bbX}_{\text{pert}},\boldsymbol{\Omega}^{\bbX}_{\text{pert}}\right] = \frac{1}{2}\sum_{\Gamma_1' \leq \Gamma_1, \Gamma_2'\leq \Gamma_2}\left[\boldsymbol{\Omega}_{\Gamma'_1 \leq \Gamma_1}, \boldsymbol{\Omega}_{\Gamma_2'\leq \Gamma_2}\right].$$
By definition,  $\boldsymbol{\Omega}_{\de\Sigma}$ contains only first order derivatives (with respect to composite fields). Hence in the commutator the quadratic terms cancel and we are left with the terms where the derivatives act on the coefficients. The bracket 
$$\left[\boldsymbol{\Omega}_{\Gamma'_1 \leq \Gamma_1}, \boldsymbol{\Omega}_{\Gamma_2'\leq \Gamma_2}\right]$$ 
is nonzero if and only if the outward leaves of $\Gamma_1'$ exactly match the composite field at one of the vertices of $\Gamma_2'$ (or vice versa). In the first case, the corresponding contribution is (for simplicity we assume that the corresponding vertex is labeled by 1 in $\Gamma_2'$)  
$$\int_{\de_1 \Sigma}\sigma_{\Gamma_1'}\sigma_{\Gamma_2'}[\bbX^{I^1_1}]\cdots[\bbX^{I^1_{k_1}}][\bbX^{I^2_2}]\cdots [\bbX^{I^2_{k_2}}]\frac{\delta}{\delta [\bbX^J]},$$
where the composite fields at the vertices of $\Gamma_i'$ are labeled by $I^i_j, 1 \leq j \leq k_i$, and the outward leaves of $\Gamma_2'$ are labeled by $J$. ``Blowing up'' the corresponding vertex $i$ (we denote this operation by $\circ_i$) by replacing it by $\Gamma_1'$, from $\Gamma_2$ we obtain a new graph $\Gamma$, and from $\Gamma_2'$ a subgraph $\Gamma'$ of $\Gamma$. Denoting the subgraph $\Gamma_1' \leq \Gamma' \leq \Gamma$ by $\Gamma''$, we obtain that $\Gamma_2' = \Gamma'/\Gamma''$. In this way we obtain all possible graphs $\Gamma$ with all possible combinations of subgraphs $\Gamma'' \leq \Gamma' \leq \Gamma$. See also Figure \ref{fig:Omegasquare}. 
\begin{figure}[h!]

\begin{tikzpicture}
\draw (-5,0) -- (-3,0);
\node[vertex] (bdry11) at (-4,0) {};
\node[coordinate, label=below:{$[\mathbb{X}^{a}]$}] at (bdry11.south) {};
\node[coordinate, label=below:{$\Gamma_1'$}] at (-4,-1) {};
\node[vertex] (bulk11) at (-4,0.5) {}; 
\node[coordinate, label=below:{$j_2$}] (b11) at (-3,1.5) {$i_1$};
\node[coordinate, label=above:{$j_1$}] (b12) at (-5,1.5) {$i_2$};
\draw[domain=0:180] plot ({-4+0.75*cos(\x)},{0.75*sin(\x)}); 
\draw[fermion] (bdry11) -- (bulk11) -- (b11);
\draw[fermion] (bulk11) -- (b12);

\node[coordinate, label=below:{$\circ_2$}] at (-2.5,1) {}; 
\node[coordinate, label=below:{$\Gamma_2'$}] at (0,-1) {};

\draw (-2,0) -- (2,0); 
\node[vertex] (o) at (0,0) {};
\node[coordinate, label=below:{$[\bbX^{j_1}\bbX^{j_2}]$}] at (o.south) {};
\node[vertex] (bdry1) at (-1,0) {};
\node[coordinate, label=below:{$[\mathbb{X}^{i}]$}] at (bdry1.south) {};
\node[vertex] (bdry2) at (1,0) {};
\node[coordinate, label=below:{$[\mathbb{X}^{k}]$}] at (bdry2.south) {};


\node[vertex] (bulk1) at (30:1) {}; 
\node[vertex] (bulk2) at (60:1) {};
\node[vertex] (bulk3) at (130:1) {};


\node[coordinate, label=below:{$l_1$}] (b1) at (30:2) {$i_1$};
\node[coordinate, label=above:{$l_2$}] (b2) at (60:2) {$i_2$};
\node[coordinate, label=below:{$l_3$}] (b3) at (145:2) {$i_3$};
\draw[] (0:1.3) arc (0:180:1.3); 
\draw[fermion] (o) -- (bulk1);
\draw[fermion] (o) -- (bulk3);
\draw[fermion] (bulk1) -- (b1);
\draw[fermion] (bulk2) -- (b2);
\draw[fermion] (bulk3) -- (b3);
\draw[fermion] (bulk1) -- (bulk2);   
\draw[fermion] (bulk2) -- (bulk3); 
\draw[fermion] (bulk3) -- (bulk1);
\draw[fermion] (bdry1) -- (bulk3); 
\draw[fermion] (bdry2) -- (bulk1);

\node[coordinate,label=below:{$=$}] at (2.5,1) {};
 \node[coordinate, label=below:{$\Gamma'$}] at (5,-1) {};


\draw (3,0) -- (7,0); 
\node[vertex] (o2) at (5,0) {};
\node[coordinate, label=below:{$[\bbX^{a}]$}] at (o2.south) {};
\node[vertex] (bdry21) at (4,0) {};
\node[coordinate, label=below:{$[\mathbb{X}^{i}]$}] at (bdry21.south) {};
\node[vertex] (bdry22) at (6,0) {};
\node[coordinate, label=below:{$[\mathbb{X}^{k}]$}] at (bdry22.south) {};


\node[vertex] (bulk21) at ([shift={(5,0)}]30:1) {}; 
\node[vertex] (bulk22) at ([shift={(5,0)}]60:1) {};
\node[vertex] (bulk23) at ([shift={(5,0)}]130:1) {};
\node[vertex] (bulk24) at ([shift={(5,0)}]90:0.25) {};


\node[coordinate, label=below:{$l_1$}] (b21) at ([shift={(5,0)}]30:2) {$i_1$};
\node[coordinate, label=above:{$l_2$}] (b22) at ([shift={(5,0)}]60:2) {$i_2$};
\node[coordinate, label=below:{$l_3$}] (b23) at ([shift={(5,0)}]145:2) {$i_3$};
\draw[shift={(5,0)}] (0:1.3) arc (0:180:1.3); 
\draw[shift={(5,0)}] (0:0.4) arc (0:180:0.4); 

\draw[fermion] (o2) -- (bulk24);
\draw[fermion] (bulk21) -- (b21);
\draw[fermion] (bulk22) -- (b22);
\draw[fermion] (bulk23) -- (b23);
\draw[fermion] (bulk24) -- (bulk22);   
\draw[fermion] (bulk24) -- (bulk23);   

\draw[fermion] (bulk21) -- (bulk22);   
\draw[fermion] (bulk22) -- (bulk23); 
\draw[fermion] (bulk23) -- (bulk21);
\draw[fermion] (bdry21) -- (bulk23); 
\draw[fermion] (bdry22) -- (bulk21);
\end{tikzpicture} 
\caption{An example of a term in $\boldsymbol{\Omega}^2$.}\label{fig:Omegasquare}
\end{figure}
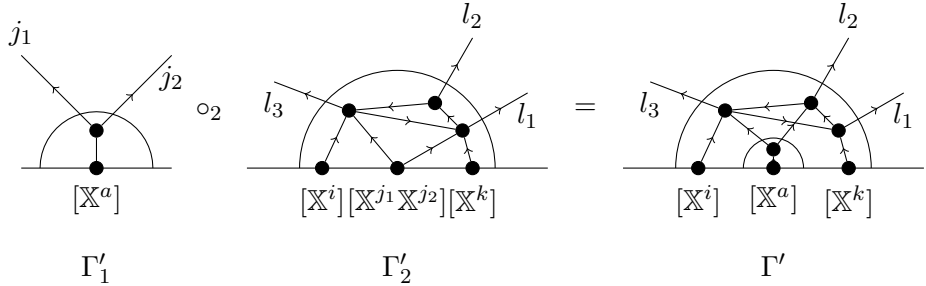
\end{proof} 
Now the proof that $\boldsymbol{\Omega}_{\de\Sigma}$ satisfies the Maurer--Cartan equation can be done very similarly to the original one in \cite{CMR2}.
\begin{proof}[Proof of \eqref{eq:OmegaMC}]
We prove the equation for $\boldsymbol{\Omega}^{\bbX}$, but the proof for $\boldsymbol{\Omega}^{\bbE}$ is analogous and then the claim follows because $\boldsymbol{\Omega}^{\bbX}$ and $\boldsymbol{\Omega}^{\bbE}$ anticommute.  We use again Stokes' theorem (twice). Suppose we apply $\dd_x$ to a summand $\boldsymbol{\Omega}_{\Gamma'\leq\Gamma} = \int_{\de_1 \Sigma}\sigma_{\Gamma'}[\bbX^{I_1}]\cdots[\bbX^{I_k}]\frac{\delta}{\delta[\bbX^J]}$. Then, applying Stokes' theorem we find 
\[\dd_x\boldsymbol{\Omega}_{\Gamma'\leq\Gamma} = \int_{\de_1 \Sigma}(\dd\sigma_{\Gamma'})[\bbX^{I_1}]\cdots[\bbX^{I_k}]\frac{\delta}{\delta[\bbX^J]} + \left[\boldsymbol{\Omega}_0,\boldsymbol{\Omega}_{\Gamma'\leq\Gamma}\right]\] 
(the second term is produced when $\dd_x$ acts on the $\bbX$ fields). Now, we have
\[\dd_x\sigma_{\Gamma'} = \dd_x\int_{\Tilde{\mathsf{C}}_{\Gamma'}}\omega_{\Gamma'} = \int_{\Tilde{\mathsf{C}}_{\Gamma'}}\dd\omega_{\Gamma'} \pm \int_{\de\Tilde{\mathsf{C}}_{\Gamma'}}\omega_{\Gamma'}. \] 
Since the limiting propagator on $\Tilde{\mathsf{C}}_{\Gamma'}$ is closed we have $\dd\omega = \dd_x\omega$. In the boundary integral, we have again three classes of faces. The faces where two bulk points collapse cancel out with $\dd_x\omega$ by the mCME. The terms where more than two bulk points collapse vanish by our assumption that the theory is anomaly free. The terms where a subgraph of $\Gamma'$ collapses at the boundary produce exactly $\frac{1}{2}\left[\boldsymbol{\Omega}^{\bbX}_{\textnormal{pert}},\boldsymbol{\Omega}^{\bbX}_{\textnormal{pert}}\right] $ by Lemma \ref{lem:Omegasquare}. \end{proof}

Now since we have shown that \eqref{eq:OmegaMC} holds, we can conclude that $\nabla_\mathsf{G}$ squares to zero.
\end{proof}

\section{Dependence on choices} 
\label{sec:dep_choices}
\subsection{Covariant gauge transformation} The definition of the state depends on the choices of 
\begin{itemize}
\item the propagator,
\item the residual fields, 
\item the formal exponential map.
\end{itemize}
In this section we will explicitly show how the state and the BFV boundary operator transform under a change in any of these choices. Similarly to Definition \ref{change_of_data} we have the following theorem:
\begin{thm}[Covariant change of data]
\label{thm:dep_choices}
Let $\boldsymbol{\Omega}_t$ be defined as in Definition \ref{full_BFV} and let $\btpsi_t$ be defined as in \ref{full_state_2} for all $t\in[0,1]$. Then we have 
\begin{align}
\frac{\dd}{\dd t}\Big\vert_{t=0}\boldsymbol{\Omega}_t &= \dd_x\tau+[\boldsymbol{\Omega}_{t=0},\tau] \\
\frac{\dd}{\dd t}\Big\vert_{t=0}\btpsi_t &= \qtconn (\btpsi_{t=0} \bullet \varrho) - \tau\btpsi_{t=0} 
\end{align}
for some operator $\tau\in\Gamma(\End(\calH_{tot}))$ and a section $\varrho \in \gm(\calH_{tot})$. Recall that $\bullet$ is the product constructed as in \eqref{bullet_prod}
\end{thm}

\begin{rem}In particular if $\tau$ is zero, the operator $\nabla_\textsf{G}$ does not change and the state changes by a $\nabla_\textsf{G}$-exact term. Theorem \ref{thm:dep_choices} shall be seen as the behaviour of the full covariant state and the full BFV boundary operator under infinitesimal gauge transformation.

\end{rem}

\subsubsection{Possible choices} We have three different choices of how we can mark the graphs according to the change of the state. One possibility is to mark the leaves of a graph $\Gamma$, which corresponds to the change of residual fields and the propagator, another one is to mark the edges which corresponds to the change of the propagator and the last choice is to mark the vertices, which corresponds to the change of the formal exponential map.

\subsection{Changing the residual fields} We have the following proposition:
\begin{prop}[Change of data: residual fields]
Fix some representatives $\chi_i$ and $\chi^{i}$ and consider their change by exact forms as 
\begin{align*}
\dot{\chi}_i&=\dd\sigma_i,\hspace{1cm}\sigma_i\in\Omega_{\text{D}_1}^{\deg \chi_i-1}(\Sigma)\\
\dot{\chi}^{i}&=\dd \sigma^{i},\hspace{1cm}\sigma^{i}\in\Omega_{\text{D}_2}^{\deg \chi^i-1}(\Sigma),
\end{align*}
where $\Omega^\bullet_{\text{D}_i}(\Sigma)=\{\gamma\in \Omega^\bullet(\Sigma)\mid \iota_i^*\gamma=0\}$ with $\iota_i$ the inclusion map from $\partial_i\Sigma$ into $\Sigma$ for $i\in\{1,2\}$ (D stands for Dirichlet). 
Then the family of states $(\btpsi_t)$ changes by 
\begin{equation}
\frac{\dd}{\dd t}\Big\vert_{t=0}\btpsi_t = \qtconn (\btpsi_{t=0} \bullet \varrho) - \tau\btpsi_{t=0} \label{eq:change_of_btpsi},
\end{equation}
where $\tau=0$ and
\begin{equation}
\varrho=\sum_{\text{$\Gamma^{\mathsf{l}}_m$ marked} \atop\text{and connected graph}}\varrho_{\Gamma^{\mathsf{l}}_m}=\sum_{\text{$\Gamma^{\mathsf{l}}_m$ marked} \atop \text{and connected graph}}\int_{\mathsf{C}_{\Gamma^{\mathsf{l}}_m}}\omega'_{\Gamma^{\mathsf{l}}_m}\in\Gamma(\mathcal{H}_{tot}),
\end{equation}
where $\Gamma^\mathsf{l}_m$ denotes a marked connnected graph with $\mathsf{l}\in\{\sfx,\sfe\}$, i.e. a graph with a labeled $\mathsf{l}$ leaf and $\omega'_{\Gamma^\textsf{l}_m}$ is the form constructed with the usual Feynman rules where we place $z^+_{i}\sigma^{i}$ at the marked leaf.
\end{prop}

\begin{proof}
The propagator changes by 
\begin{equation}
\label{change_prop}
\dot{\zeta}= \sum_i\pm \pi_1^*\sigma_i\pi_2^*\chi^{i}\pm \sum_i\pm \pi_1^*\chi_i\pi_2^*\sigma^{i}.
\end{equation}
Let $\btpsi=\sum_{\Gamma}\int_{C_\Gamma}\omega_\Gamma$. Then 
\[
\frac{\dd}{\dd t}\btpsi=\frac{\dd}{\dd t}\sum_{\Gamma}\int_{\mathsf{C}_\Gamma}\omega_\Gamma=\sum_{\Gamma}\int_{\mathsf{C}_\Gamma}\dot{\omega}_\Gamma
=\sum_{\Gamma_m^{\textsf{x}}}\omega_{\Gamma_m^{\textsf{x}}}+\sum_{\Gamma_m^\textsf{e}}\omega_{\Gamma_m^\textsf{e}}
+\sum_{\Gamma^e_m}\omega_{\Gamma^e_m}^{e_{\sigma\chi}}+\omega_{\Gamma^e_m}^{e_{\chi\sigma}},
\]
where by $\Gamma_m^{\textsf{x}}$ we mean graphs with a labeled $\mathsf{x}$ leaf and $\omega_{\Gamma_m^{\textsf{x}}}$ is the usual $\omega_{\Gamma}$ but with $z^i\dot{\chi_i}$ placed at the marked leaf. Similarly, by $\Gamma_m^{\textsf{e}}$ we mean graphs with a labeled $\mathsf{e}$ leaf and by $\omega_{\Gamma_m^{\textsf{e}}}$ the usual $\omega_{\Gamma}$ but with $z_i^+\dot{\chi}^{i}$ placed at the marked leaf. These terms arise when the time derivative hits a leaf. Finally, $\Gamma^e_m$ denotes graphs with a marked edge $e \in E(\Gamma)$, and $\omega_{\Gamma_m^e}^{e_{\sigma\chi}}$ denotes the usual $\omega_{\Gamma}$, where at the marked edge we place $\sum \pm \pi_1^*\sigma_i\pi_2^*\chi^i$.  These terms arise when the time derivative hits a propagator. We call them ``edge splits'' since the corresponding propagator in $\omega$ is split, see Figure \ref{fig:graphs_resid}. 
Define $$\varrho:=\sum_{\text{$\Gamma^{\mathsf{l}}_m$ marked} \atop\text{and connected graph}}\varrho_{\Gamma^{\mathsf{l}}_m}=\sum_{\text{$\Gamma^{\mathsf{l}}_m$ marked} \atop \text{and connected graph}}\int_{\mathsf{C}_{\Gamma^{\mathsf{l}}_m}}\omega'_{\Gamma^{\mathsf{l}}_m}\in\Gamma(\mathcal{H}_{tot}).$$
Here $\mathsf{l}  \in \{\mathsf{x},\mathsf{e}\}$. Again $\Gamma^{\mathsf{l}}_m$ denotes a graph with a marked $\mathsf{l}$ leaf. In $\omega'_{\Gamma^{\mathsf{l}}_m}$ however we replace the marked leaf by $z^i\sigma_i$ or $z_i^+\sigma^i$ respectively. 

\begin{figure}[h!]
\subfigure{
\begin{tikzpicture}
\node [] (dt) at (-4,0.5){\Large $\frac{\dd}{\dd t}$};
\node [internal] (v1) at (-1.5,0.5) {};
\node [internal] (v4) at (2,0.5) {};
\node [internal] (v6) at (1,-0.5) {};
\node [internal] (v5) at (1,1) {};
\node [internal] (v2) at (-2.5,-0.5) {};
\node [internal] (v3) at (-2.5,1) {};
\node [internal] (v11) at (6,0.5) {};
\node [internal] (v12) at (5,1) {};
\node [internal] (v13) at (5,-0.5) {};
\node [internal] (v14) at (10,0.5) {};
\node [internal] (v15) at (9,1) {};
\node [internal] (v16) at (9,-0.5) {};
\draw [fermion] (v1)--(v2);
\draw [fermion] (v3)--(v1);
\draw [fermion] (v5)--(v4);
\draw [fermion] (v4)--(v6);
\draw [fermion] (v2)--(v3);
\draw [fermion] (v6)--(v5);
\draw [fermion] (v12)--(v11);
\draw [fermion] (v13)--(v12);
\draw [fermion] (v11)--(v13);
\draw [fermion] (v15)--(v14);
\draw [fermion] (v16)--(v15);
\draw [dashed] (v14) edge (v16);
\node[] (eq) at (0,0.5){$=$};
\node[] (plus1) at (3.5,0.5){$+$};
\node[] (plus2) at (7.5,0.5){$+$}; 
\node [residual ] (v8) at (-3,1.5){$\mathsf{x}$};
\node [residual ] (v7) at (-3,-1.5) {$\mathsf{e}$};
\node [residual ] (v9) at (0,2) {$\dot{\chi}_i=\dd\sigma_i$};
\node [residual ] (v10) at (0.5,-1.5) {$\mathsf{e}$};
\node [residual ] (res1) at (4,1.5) {$\mathsf{x}$};
\node [residual ] (res2) at (4.5,-1.5) {$\dot{\chi}^{i}=\dd\sigma^{i}$};
\node [residual ] (res3) at (8.5,-1.5) {$\mathsf{e}$};
\node [residual ] (res4) at (8,1.5) {$\mathsf{x}$};
\node [residual ] (sp1) at (10,-1.5) {$\sigma^{i}$};
\node [residual ] (sp2) at (10,1.5) {$\chi_i$};e
\draw [thick, ] (v2) edge (v7);
\draw [thick, ] (v8) edge (v3);
\draw [thick, ] (v5) edge (v9);
\draw [thick, ] (v6) edge (v10);
\draw [thick, ] (v13) edge (res2);
\draw [thick, ] (v12) edge (res1);
\draw [thick, ] (res3) edge (v16);
\draw [thick, ] (res4) edge (v15);
\draw [thick, ] (sp1) edge (v16);
\draw [thick, ] (sp2) edge (v14);
\end{tikzpicture}
}
\subfigure{
\begin{tikzpicture}
\node[] (plus) at (-3,0.5){$+$};
\node [] (v1) at (3.5,0.5) {$+$ \hspace{1cm}Other edge splits};
\node [internal] (v1) at (-0.5,0.5) {};
\node [internal] (v2) at (-1.5,-0.5) {};
\node [internal] (v3) at (-1.5,1) {};
\draw [dashed] (v2)--(v1);
\draw [fermion] (v3)--(v2);
\draw [fermion] (v1)--(v3);
\node [residual ] (res1) at (-2,1.5) {$\mathsf{x}$};
\node [residual ] (res2) at (-2,-1.5) {$\mathsf{e}$};
\node [residual ] (sp1) at (-0.5,-1.5) {$\chi^{i}$};
\node [residual ] (sp2) at (-0.5,1.5) {$\sigma_i$};
\draw[thick] (res1) edge (v3);
\draw[thick] (res2) edge (v2);
\draw[thick] (sp1) edge (v2);
\draw[thick] (sp2) edge (v1);
\end{tikzpicture}
}
\caption{Graphical illustration of the time derivative, the first two terms come from time derivatives of leaves, all other terms - ``edge splits'' - from time derivatives of the propagator. }
\label{fig:graphs_resid}
\end{figure}
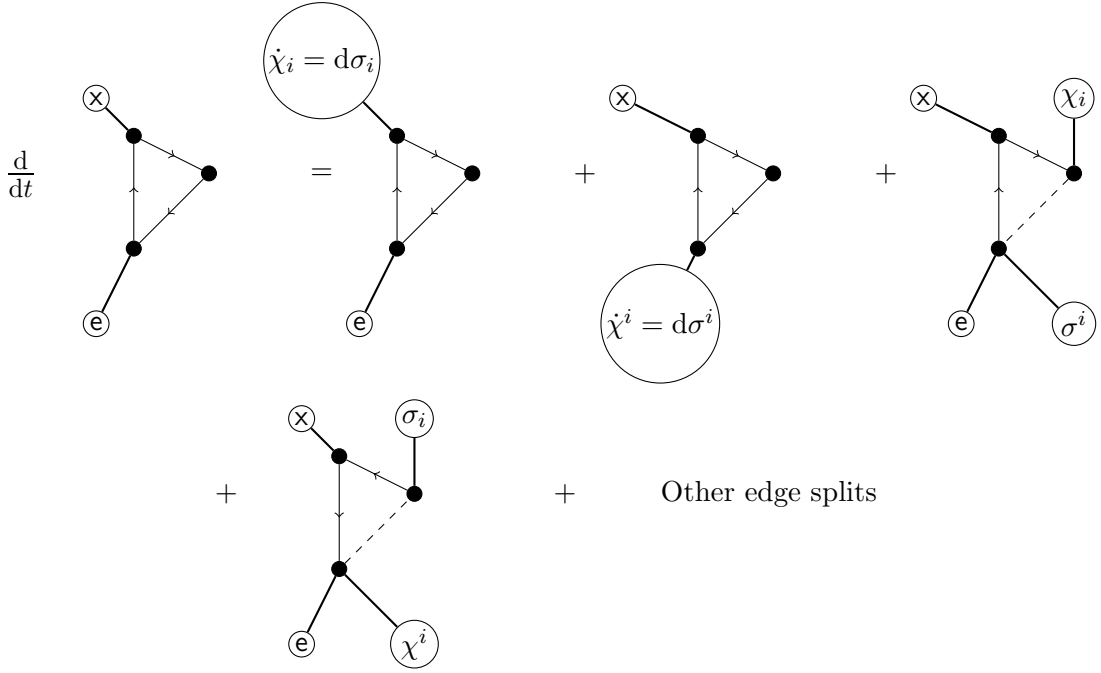

Now we show that 
\begin{equation}
\label{change_principal}
\frac{\dd}{\dd t}\btpsi=\left(\dd_x-\I\hbar\Delta_{\calV_\Sigma}+\frac{\I}{\hbar}\boldsymbol{\Omega}_{\de\Sigma}\right)(\btpsi \bullet \varrho).
\end{equation}
The easiest way to see this is by considering graphs. We can expand 
\begin{equation}
\btpsi \bullet \varrho = \sum_{\Gamma^\mathsf{l}_m}\int_{\mathsf{C}_{\Gamma^\mathsf{l}_m}}\omega'_{\Gamma^\mathsf{l}_m}
\end{equation} 
where now the sum runs over all graphs, either connected or not. Now, we again consider Stokes' theorem for integrals along the fiber 
\begin{equation}
\label{eq:stokes_bullet}
\dd_x(\btpsi \bullet \varrho) = \sum_{\Gamma^\mathsf{l}_m}\int_{\mathsf{C}_{\Gamma^\mathsf{l}_m}}
\dd\omega'_{\Gamma^\mathsf{l}_m} \pm \int_{\de \mathsf{C}_{\Gamma^\mathsf{l}_m}}\omega'_{\Gamma^\mathsf{l}_m}
\end{equation}   One can check that the edge split in $\frac{\dd}{\dd t}\btpsi$ corresponds to $\Delta_{\calV_\Sigma}$ applied to a graph (up to some signs) having the same amount of leaves and carrying the same residual fields with the difference that for the two leafs splitting an edge, where one leaf carries a residual field $\mathsf{x}=z^i\chi_{i}$ ($\mathsf{e}=z^+_i \chi^i$) and the other one a primitve field $z^+_i\sigma^{i}$ ($z^i\sigma_{i}$). Moreover, one can check that the graphs with the $\dd\sigma_i$ ($\dd\sigma^{i}$) leaves on the left hand side get produced by applying the de Rham differential on the configuration space to graphs with a $\sigma_i$ ($\sigma^{i}$) leaf. The rest of terms cancels out as in the proof of the mdQME. 
\end{proof}

\subsection{Changing the propagator} We have the following proposition: 
\begin{prop}[Change of data: propagator]
 Suppose we change the propagator $\zeta$ by an exact form $\lambda \in \Omega^{n-2}(\mathsf{C}_2(\Sigma))$ with the appropriate boundary conditions (but keep the residual fields fixed), 
\begin{equation}
\dot{\zeta} = \dd \lambda.  
\end{equation}
Then the family of states $(\btpsi_t)$ and the familiy of BFV boundary operators $(\boldsymbol{\Omega}_t)$ change by 
\begin{align}
\frac{\dd}{\dd t}\Big\vert_{t=0}\boldsymbol{\Omega}_t &= \dd_x\tau+[\boldsymbol{\Omega}_{t=0},\tau] \label{eq:changeprop1}\\
\frac{\dd}{\dd t}\Big\vert_{t=0}\btpsi_t &= \qtconn (\btpsi_{t=0} \bullet \varrho) - \tau\btpsi_{t=0} \label{eq:changeprop2}
\end{align}
where $\tau=\tau^\mathbb{X}+\tau^\E$ with
\begin{align}
\tau^\mathbb{X}&= \sum_{n,k\geq 0}\sum_{\Gamma_m^{e'}}\frac{(\I\hbar)^{\textnormal{loops}(\Gamma_m^{e'})}}{\vert \textnormal{Aut}(\Gamma_m^{e'})\vert}\int_{\de_1\Sigma}\left(\sigma_{\Gamma_m^{e'}}\right)_{I_1....I_n}^{J_1...J_k}\land\left[\mathbb{X}^{I_1}\right]\land \dotsm \land\left[\mathbb{X}^{I_n}\right] \left((-1)^{kd}(\I\hbar)^k\frac{\delta^{\vert J_1\vert+\dotsm +\vert J_k\vert}}{\delta\left[\mathbb{X}^{J_1}\dotsm \mathbb{X}^{J_k}\right]}\right),\\
\tau^\E&=\sum_{n,k\geq 0}\sum_{\Gamma_m^{e'}}\frac{(\I\hbar)^{\textnormal{loops}(\Gamma_{m}^{e'})}}{\vert \textnormal{Aut}(\Gamma_m^{e'})\vert}\int_{\de_2\Sigma}\left(\sigma_{\Gamma_{m}^{e'}}\right)^{I_1....I_n}_{J_1...J_k}\land \left[\mathbb{E}_{I_1}\right]\land\dotsm\land \left[\mathbb{E}_{I_n}\right] \left((-1)^{kd}(\I\hbar)^k\frac{\delta^{\vert J_1\vert+\dotsm +\vert J_k\vert}}{\delta\left[\mathbb{E}_{J_1}\dotsm \mathbb{E}_{J_k}\right]}\right),
\end{align}
where $\sigma_{\Gamma_m^{e'}}$ is given similarly as in Definition \ref{full_BFV}, with the difference that we place $\lambda$ at the marked edge, and
\begin{equation}
\varrho=\sum_{\text{$\Gamma^{e}_m$ marked} \atop\text{and connected graph}}\varrho_{\Gamma^{e}_m}=\sum_{\text{$\Gamma^{e}_m$ marked} \atop \text{and connected graph}}\int_{\mathsf{C}_{\Gamma^{e}_m}}\omega^{e}_{\Gamma^{e}_m}\in\Gamma(\mathcal{H}_{tot}),
\end{equation}
where $\Gamma^e_m$ denotes a marked connnected graph with edge $e$ labeled by $\lambda$. and $\omega^{e}_{\Gamma^e_m}$ is the form constructed with the usual Feynman rules where we place $\lambda$ at the marked edge $e$.
\end{prop}

\begin{proof}
Let us consider $\boldsymbol{\Omega}$ first. Let $\zeta_t$ be a family of propagators with $\dot{\zeta} = \frac{\dd}{\dd t} \vert_{t=0} \zeta = \dd\lambda$ and $\boldsymbol{\Omega}_t$ corresponding family of BFV boundary operators, which we loosely write as $\boldsymbol{\Omega}_t = \sum_{\Gamma'}\int_{\de\Sigma}\Omega_{\Gamma',t}$. (Here the prime on $\Gamma'$ should remember us that we are taking ``boundary graphs''.)  $\Omega_{\Gamma'}$ is constructed from the propagators, boundary composite fields and derivatives with respect to composite fields. The only thing that depends on $t$ is the propagator, hence the time derivative satisfies 
$$\ddt\boldsymbol{\Omega}_t = \sum_{\Gamma^{e'}_m}\int_{\de\Sigma}\Omega_{\Gamma^{e'}_m,t}.$$ Here $\Gamma^{e'}_m$ is a graph with a marked edge $e$ and in $\Omega_{\Gamma^{e'}}$ we evaluate the marked edge to $\dd\lambda$. Using Stokes' theorem for fiber integration $\dd_x\int = \int \dd \pm \int_{\de}$, we can pull out the de Rham differential on $\lambda$ of the integration. This gives the term $\dd_x\tau$. The other terms from Stokes' theorem are of three kinds: The first kind are terms where the de Rham differential hits a propagator. These vanishe, because the limiting propagator is closed. The second kind are terms where the de Rham differential hits a boundary field, this corresponds to $[\tau,\boldsymbol{\Omega}_0]$. Finally, the boundary terms assemble to  $[\tau,\boldsymbol{\Omega}_{\textnormal{pert},t=0}]$, similar to the proof of Lemma \ref{lem:Omegasquare}. This proves Equation \eqref{eq:changeprop1}.\\
The proof for the derivative of the state works in a similar way. In this case $\varrho$ is given by the sum of all connected Feynman graphs with one marked edge, evaluated using the usual Feynman rules, but placing $\lambda$ at the marked edge. Now, observe that
$$\frac{\dd}{\dd t} \omega_{\Gamma} = \sum_{e \in E(\Gamma)}\omega_{\Gamma}^e,$$
where $\omega_{\Gamma}^e$ is the form obtained by placing $\dd \lambda$ at the edge $e$. Again, we integrate by parts using Stokes' theorem. We get eight different types of terms. 
\begin{enumerate}
\item First, $\dd$ can come out of the integral. This corresponds to $\dd_x(\btpsi \bullet \rho)$ ($(\btpsi \bullet \rho)$ is precisely given by summing over all graphs (not necessarily connected) with a single marked edge that evaluates to $\lambda$). 
\item Terms where $\dd$ hits a propagator correspond to the action of $-\ii\hbar\Delta_{\calV_\Sigma}$ on $(\btpsi \bullet \rho)$. 
\item Terms where $\dd$ hits a boundary field correspond to the action of $\boldsymbol{\Omega}_0$ on $(\btpsi \bullet \rho)$. 
\item Terms where $\dd$ hits a vertex, they cancel with boundary terms just below:
\item Boundary terms corresponding to the collapse of two vertices in the bulk with a single edge between them. If this edge is marked, the pushforward over the boundary sphere vanishes\footnote{The propagators $\zeta_t$ are normalised to integrate to $\pm 1$ over this sphere, hence the integral of $\lambda$ must vanish. }. The terms without marked edges cancel out with the terms where $\dd$ hits a vertex. 
\item Boundary terms where a subgraph with more that two vertices collapses in the bulk, these vanish by assumption. 
\item Boundary terms where a subgraph \emph{without} a marked edge collapses on the boundary. These correspond to the action of $\boldsymbol{\Omega}_{\textnormal{pert},t=0}$ on $(\btpsi \bullet \rho)$. 
\item Finally, in this case we can have graphs with a marked edge collapsing at the boundary, which correspond to the action of $\tau$ to $\btpsi$. 
\end{enumerate}
This completes the proof of Equation \eqref{eq:changeprop2}.  
\end{proof}

\subsection{Changing the formal exponential map}
We can also change the connection on the graded manifold $M$ used to construct the formal exponential map, which is described in Appendix \ref{app:change_of_phi} and \ref{app:formalexpo_graded}. From a multivector field $Y$ on $M$ we can construct a functional 
\begin{equation}
\label{general_action}
\mathcal{S}_{\Sigma,Y}=\frac{1}{k!}\int_\Sigma Y^{i_1,...,i_k}(\sfX)\boldsymbol{\eta}_{i_1}\land\dotsm\land\boldsymbol{\eta}_{i_k}.
\end{equation}
We can do the same construction pointwise for formal vertical multivector fields $\Hat{Y}$, yielding 
\begin{equation}
\calS_{\Sigma,\Hat{Y}} = \frac{1}{k!}\int_{\Sigma}\Hat{Y}^{i_1,\ldots,i_k}(x;\Hat{\sfX})\Hat{\boldsymbol{\eta}}_{i_1}\wedge \cdots \wedge \Hat{\boldsymbol{\eta}}_{i_k}. 
\end{equation} Moreover, writing $\btpsi=\int_\mathcal{L}\ee^{\frac{\I}{\hbar}\Tilde{\mathcal{S}}_{\Sigma,x}}$, for some Lagrangian submanifold $\calL$ of the space of fields, one formally obtains (\cite{BCM}) that
\begin{equation}
\label{eq:pathintegral_change}
\frac{\dd}{\dd t}\btpsi=(\dd_x-\I\hbar\Delta_{\calV_{\Sigma,x}})\int_{\mathcal{L}}\ee^{\frac{\I}{\hbar}\Tilde{S}_{\Sigma,x}}\frac{\I}{\hbar}\mathcal{S}_{\Sigma,C}
\end{equation}
if $\Sigma$ is a closed manifold, where $C \in \Gamma(M,TM \otimes \Hat{S}T^*M)$ is a generator of the gauge transformation applied to the formal exponential map; see Appendix \ref{app:change_of_phi} and \ref{app:formalexpo_graded}.  Formula \eqref{eq:pathintegral_change} motivates to introduce graphs with one marked vertex, labeled by $C$, with vertex tensor coming from the formal Taylor expansion of 
\begin{equation}
\calS_{\Sigma,C}(\sfX,\Hat{\boldsymbol\eta}) = \int_{\Sigma}C^k(x;\sfX)\Hat{\boldsymbol\eta}_k = \int_{\Sigma} C^k_{i_1\ldots i_k}(x)\Hat{\sfX}^{i_1}\cdots\Hat{\sfX}^{i_k}\Hat{\boldsymbol\eta}_k.
\end{equation}

\begin{prop}[Change of data: formal exponential map]
Let $C$ be the generator of a gauge transformation of the Grothendieck connection as in Appendix \ref{app:formal_geometry}.
Then the family of states $(\btpsi_t)$ and the familiy of BFV boundary operators $(\boldsymbol{\Omega}_t)$ change by 
\begin{align}
\frac{\dd}{\dd t}\Big\vert_{t=0}\boldsymbol{\Omega}_t &= \dd_x\tau+[\boldsymbol{\Omega}_{t=0},\tau] \label{eq:expchange1} \\
\frac{\dd}{\dd t}\Big\vert_{t=0}\btpsi_t &= \qtconn (\btpsi_{t=0} \bullet \varrho) - \tau\btpsi_{t=0}  \label{eq:expchange2}
\end{align}
where $\tau=\tau^\mathbb{X}+\tau^\E$ with
\begin{align}
\tau^\mathbb{X}&= \sum_{n,k\geq 0}\sum_{\Gamma_m^{v'}}\frac{(\I\hbar)^{\textnormal{loops}(\Gamma_m^{v'})}}{\vert \textnormal{Aut}(\Gamma_m^{v'})\vert}\int_{\de_1\Sigma}\left(\sigma_{\Gamma_m^{v'}}\right)_{I_1....I_n}^{J_1...J_k}\land\left[\mathbb{X}^{I_1}\right]\land \dotsm \land\left[\mathbb{X}^{I_n}\right] \left((-1)^{kd}(\I\hbar)^k\frac{\delta^{\vert J_1\vert+\dotsm +\vert J_k\vert}}{\delta\left[\mathbb{X}^{J_1}\dotsm \mathbb{X}^{J_k}\right]}\right),\\
\tau^\E&=\sum_{n,k\geq 0}\sum_{\Gamma_m^{v'}}\frac{(\I\hbar)^{\textnormal{loops}(\Gamma_{m}^{v'})}}{\vert \textnormal{Aut}(\Gamma_m^{v'})\vert}\int_{\de_2\Sigma}\left(\sigma_{\Gamma_{m}^{v'}}\right)^{I_1....I_n}_{J_1...J_k}\land \left[\mathbb{E}_{I_1}\right]\land\dotsm\land \left[\mathbb{E}_{I_n}\right] \left((-1)^{kd}(\I\hbar)^k\frac{\delta^{\vert J_1\vert+\dotsm +\vert J_k\vert}}{\delta\left[\mathbb{E}_{J_1}\dotsm \mathbb{E}_{J_k}\right]}\right),
\end{align}
where $\sigma_{\Gamma_m^{v'}}$ is given similarly as in Definition \ref{full_BFV}, with the difference that we place $C$ at the marked vertex, and
\begin{equation}
\label{rho_residual}
\varrho=\sum_{\text{$\Gamma^{v}_m$ marked} \atop\text{and connected graph}}\varrho_{\Gamma^{v}_m}=\sum_{\text{$\Gamma^{v}_m$ marked} \atop \text{and connected graph}}\int_{\mathsf{C}_{\Gamma^{v}_m}}\omega'_{\Gamma^{v}_m}\in\Gamma(\mathcal{H}_{tot}),
\end{equation}
where $\Gamma^v_m$ denotes a marked connnected graph with vertex $v$ labeled by $C$, i.e. a graph and $\omega^{v}_{\Gamma^v_m}$ is the form constructed with the usual Feynman rules where we place $C$ at the marked vertex $v$.
\end{prop}

\begin{proof}
If we vary the formal exponential map, the vertex tensors at interaction and $R$ vertices change according to the formulas \begin{align}
 \dot{\mathcal{S}}_{\Sigma,x} &=-L_C\mathcal{S}_{\Sigma,x} \\ 
 \dot{R} &= \dd_xC + [R,C] \label{eq:R_change}
\end{align}
Since we have $-L_CS_{\Sigma,x} = (S_{\Sigma,C},S_{\Sigma,x})$ and $S_{[R,C]}= (S_{\Sigma,R},S_{\Sigma,C})$, in terms of Taylor expansions the time derivatives are obtained by contracting the terms in the Taylor expansion $(\Theta^{i_1\ldots i_k}_{j_1 \ldots j_l}$ or $Y^{i}_{j,i_1\ldots i_k}$) with $C^{i}_{j,i_1\ldots i_k}$ in all possible ways (plus taking the differential in the case of $Y$'s).
Keeping this in mind, the proof proceeds completely analogously to the proofs before: In   the term $\qtconn (\btpsi \bullet \varrho)$ on left hand side of \eqref{eq:expchange2}, the usual cancellations apply. Terms which survive are: 
\begin{enumerate}
\item Terms where $\dd_x$ hits the $C$ vertex. 
\item Boundary terms corresponding to the collapse of a single edge with the $C$ vertex at one endpoint, and an $R$ vertex at the other endpoint.
\item Boundary terms corresponding to the collapse of a single edge with the $C$ vertex at one endpoint, and an  interaction ($\Theta$) vertex at the other endpoint.
\item Boundary terms where a subgraph containing the $C$ vertex collapses. 
\end{enumerate}
The first and the second type of terms yield the time derivative of $R$ vertices. The third type of terms yield time the derivative of an interaction vertex. Finally, the last type of terms yield the action of $\tau$ on $\btpsi$. This completes the proof of \eqref{eq:expchange2}. The proof of \eqref{eq:expchange1} is entirely the same, using the method of the proof of \eqref{eq:changeprop1}. 
\end{proof}

\begin{rem}
In this paper we only considered free boundaries. We could also consider a boundary component $\de_{\text{fix}}\Sigma$ where we put boundary condition. As explained in \cite{CF4} boundary conditions compatible with the BV formalism are $Q$-invariant Lagrangian submanifolds of the boundary space of fields. As we prove the dQME (and the mdQME) by Stokes' theorem, we have also to take boundary contributions on $\de_\text{fix}\Sigma$ into account. The classical boundary conditions mentioned above make the contributions corresponding to a single bulk point approaching $\de_\text{fix}\Sigma$ vanish. If the terms where two or more bulk points collapsing at $\de_\text{fix}\Sigma$ do not vanish, the theory needs quantum boundary corrections (similarly to what happens in the Landau--Ginzburg model \cite{KL,BHLS,Laz}). We will consider this in the case of the Poisson Sigma Model in \cite{CMW3}.
\end{rem}

\begin{appendix}

\section{Configuration spaces and their compactifications}\label{app:Conf}

To define the quantum state, we need to recall the notion of configuration spaces and their compactification as in \cite{AS2,FulMacPh} due to Fulton--MacPherson and Axelrod--Singer. 
\subsection{FMAS compactification}
We start with the definition of the configuration space. 
\begin{defn}
Let $M$ be a manifold and $S$ a finite set. The \textsf{open configuration space} of $S$ in $M$ is defined as
\begin{equation}
\mathsf{Conf}_S(M) := \{\iota\colon S \hookrightarrow M |\iota \hspace{0.2cm}\textnormal{injection}\}
\end{equation}
\end{defn}
Elements of $\textsf{Conf}_{S}(M)$ are called $S$-configurations. To give an explicit definition of the compactification that can be extended to manifolds with boundaries and corners, we introduce the concept of \emph{collapsed configurations}. Intuitively, a collapsed $S$-configuration is the result of a collapse of a subset of the points in the $S$-configuration. However, we remember the relative configuration of the points before the collapse by directions in the tangent space. This is a configuration in the tangent space that is well-defined only up to translations and scaling. The difficulty is that one can imagine a limiting configuration where two points collapse first together and then with a third (see Figure \ref{fig:collapsing_conf}).
This explains the recursive nature of the following definition. Recall that if $X$ is a vector space, then $X\times \R_{>0}$ acts on $X$ by translations and scaling. 
\begin{defn}[Collapsed configuration in $M$]
Let $M$ be a manifold, $S$ a finite set and $\mathfrak{P} = \{S_1,\ldots,S_k\}$ be a partition of $S$. A \emph{$\mathfrak{P}$-collapsed configuration in $M$} is a $k$-tuple $(p_{\sigma},c_{\sigma})$ such that 
$((p_{\sigma},c_{\sigma}))_{\sigma = 1}^k$ satisfies 
\begin{enumerate}
\item $p_{\sigma} \in M$ and $p_{\sigma} \neq p_{\sigma'}$, for $\sigma \neq \sigma'$, 
\item $c_{\sigma} \in \Tilde{\mathsf{C}}_{S_{\sigma}}(T_{p_{\sigma}}M)$, where for $|S| = 1$, 
$\Tilde{\mathsf{C}}_S(X) := \{pt\}$  and for $|S| \geq 2$
\begin{equation}
\Tilde{\mathsf{C}}_S(X) := \coprod_{\substack{\mathfrak{P}=\{S_1,\ldots,S_k\} \\ S = \sqcup_\sigma S_\sigma, k\geq 2}}\left\lbrace \left(x_\sigma, c_\sigma )\right)_{1\leq \sigma \leq k}\ \bigg|\ (x_\sigma, c_\sigma)\ \mathfrak{P}\text{-collapsed $S$-configuration in $X$}\right\rbrace \bigg/(X \times \R_{>0})
\end{equation}
\end{enumerate}

Here, $\varphi \in X \times \R_{>0}$ acts on $(x_\sigma,c_\sigma)$ by $(x_\sigma,c_\sigma) \mapsto (\varphi(x_\sigma), \dr\varphi_{x_\sigma}c_\sigma)$. 
\end{defn}
Intuitively, given a partition $\mathfrak{P}= \{S_1,\ldots,S_k\}$, a $k$-tuple $(p_\sigma,c_\sigma)$ describes the collapse of the points in $S_\sigma$ to $p_\sigma$. $c_\sigma$ remembers the relative configuration of the collapsing points. This relative configuration can itself be the result of a collapse of some points. See Figure \ref{fig:collapsing_conf}. 
\begin{figure}[h!]
\begingroup%
  \makeatletter%
  \providecommand\color[2][]{%
    \errmessage{(Inkscape) Color is used for the text in Inkscape, but the package 'color.sty' is not loaded}%
    \renewcommand\color[2][]{}%
  }%
  \providecommand\transparent[1]{%
    \errmessage{(Inkscape) Transparency is used (non-zero) for the text in Inkscape, but the package 'transparent.sty' is not loaded}%
    \renewcommand\transparent[1]{}%
  }%
  \providecommand\rotatebox[2]{#2}%
  \ifx\svgwidth\undefined%
    \setlength{\unitlength}{177.26514562bp}%
    \ifx\svgscale\undefined%
      \relax%
    \else%
      \setlength{\unitlength}{\unitlength * \real{\svgscale}}%
    \fi%
  \else%
    \setlength{\unitlength}{\svgwidth}%
  \fi%
  \global\let\svgwidth\undefined%
  \global\let\svgscale\undefined%
  \makeatother%
  \begin{picture}(1,1.22602258)%
    \put(0,0){\includegraphics[width=\unitlength]{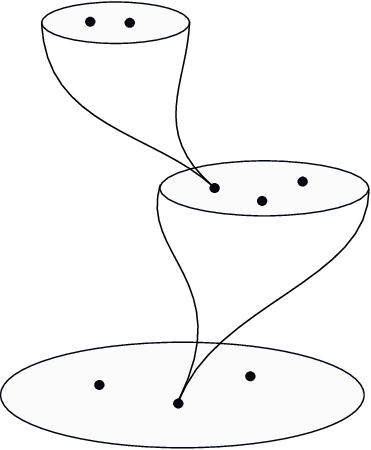}}%
    \put(0.28504216,0.22615565){\color[rgb]{0,0,0}\makebox(0,0)[lb]{\smash{$p_1$}}}%
    \put(7.45141911,-5.32081804){\color[rgb]{0,0,0}\makebox(0,0)[lt]{\begin{minipage}{2.35581904\unitlength}\raggedright \end{minipage}}}%
    \put(0.69011264,0.72199957){\color[rgb]{0,0,0}\makebox(0,0)[lb]{\smash{$p_3$}}}%
    \put(0.7144357,0.19405763){\color[rgb]{0,0,0}\makebox(0,0)[lb]{\smash{$p_2$}}}%
    \put(0.8659863,0.0058981){\color[rgb]{0,0,0}\makebox(0,0)[lb]{\smash{$M$}}}%
    \put(0.84598042,0.72266449){\color[rgb]{0,0,0}\makebox(0,0)[lb]{\smash{$p_4$}}}%
    \put(0.37765289,1.15553808){\color[rgb]{0,0,0}\makebox(0,0)[lb]{\smash{$p_5$}}}%
    \put(0.17364444,1.16429723){\color[rgb]{0,0,0}\makebox(0,0)[lb]{\smash{$p_6$}}}%
  \end{picture}%
\endgroup%

\caption{An element of $\mathsf{C}_S(M)$.}\label{fig:collapsing_conf}
\end{figure}
\begin{defn}[FMAS compactification]
The \textsf{compactified configuration space} $\mathsf{C}_S(M)$ of $S$ in $M$ is given by 
\begin{equation}
\mathsf{C}_S(M) := \coprod_{\substack{S_1,\ldots,S_k \\ S = \sqcup_\sigma S_\sigma}}\left\lbrace (p_\sigma, c_\sigma )_{1\leq \sigma \leq k}\ \bigg|\ (p_\sigma, c_\sigma)\ \mathfrak{P}\text{-collapsed $S$-configuration in $M$}\right\rbrace.
\end{equation}
%
\end{defn}

\subsection{Boundary strata}
A precise description of the combinatorics of the stratification can be found in \cite{FulMacPh}, where it is also shown that $\mathsf{C}_S(M)$ is a manifold with corners and is compact if $M$ is compact. For us, only strata in low codimensions are interesting. Let $S=\{s_1,\ldots,s_k\}$. 
The stratum of codimension 0 corresponds to the partition $\mathfrak{P}= \{\{s_1\},\ldots,\{s_k\}\}$. Strata of codimension 1 correspond to the collapse of exactly one subset $S'=\{s_1,\ldots,s_\ell\} \subset S$ with no further collapses, i.e a partition $\mathfrak{P}=\{\{s_1,\ldots,s_\ell\},\{s_{\ell+1}\},\ldots,\{s_k\}\}$ and configuration $(p_\sigma,c_\sigma)$ with $c_\sigma$ in the component of $\Tilde{\mathsf{C}}_{S'}(X)$ given by the partition $\mathfrak{P} =  \{\{s_1\},\ldots,\{s_\ell\}\}$. This boundary stratum will be denoted by $\de_{S'}\mathsf{C}_S(M)$, in particular, we have
\begin{equation}
\de\mathsf{C}_S(M) = \coprod_{S' \subset S}\de_{S'}\mathsf{C}_S(M).
\end{equation} 
There is a natural fibration $\de_{S'}\mathsf{C}_S(M) \to \mathsf{C}_{S \setminus S'\cup \{pt\}}(M)$ whose fiber is $\Tilde{\mathsf{C}}_S(\R^{\dim M})$. Finally, we note that if $|S| = 2$, then $\mathsf{C}_{S}(M) \cong Bl_{\overline{\Delta}}(M \times M)$, the \textsf{differential-geometric blow-up} of the diagonal $\overline{\Delta} \subset M \times M$, and $\Tilde{\mathsf{C}}_S(X) \cong S^{\dim X-1}$.
\\

\subsection{Configuration spaces on manifolds with boundary}
We proceed to recall the definition of a compactified configuration space on manifolds with boundary. Let $M$ be a compact manifold with boundary $\de M$. Recall that for a manifold $M$ with boundary $\de M$, at points $p \in \de M$ there is a well-defined notion of inward and outward half-space in $T_pM$. If $H \subset X$ is a half-space, then $\de H \subset X$ is a hyperplane. $\de H \times \R_{>0}$ acts on $H$ by translations and scaling. 
%
%
\begin{defn}[Configuration spaces on manifolds with boundary]
Let $M$ be a manifold with boundary $\de M$. For $S,T$ finite sets, we define the \textsf{open configuration space} by
\begin{equation}
\mathsf{Conf}_{S,T}(M,\de M) := \{(\iota,\iota')\colon S \times T \hookrightarrow M \times \de M\}
\end{equation}
\end{defn}
\begin{defn}[Collapsed configuration on manifolds with boundary]
Let $(M, \de M)$ be a manifold with boundary. Let $S, T$ be finite sets and $\mathfrak{P}=\{S_1, \ldots, S_k\}$ a partition of $S \sqcup T$.  Then, a $\mathfrak{P}$-collapsed $(S,T)$-configuration in $M$ is a $k$-tuple of pairs $(p_\sigma,c_\sigma)$ such that
\begin{enumerate}
\item $p_\sigma \in M$ and $p_\sigma\neq p_{\sigma'}$, for all $\sigma\neq \sigma'$, 
\item $S_\sigma \cap T \neq \varnothing \Rightarrow p_\sigma \in \de M$,
\item $$c_\sigma \in \begin{cases} 
\Tilde{\mathsf{C}}_{S_\sigma}(T_{p_\sigma}M)  & p_\sigma \in M \setminus \de M \\  \Tilde{\mathsf{C}}_{S \cap S_\sigma, T \cap S_\sigma}(\mathbb{H}(T_{p_\sigma}M)) & p_\sigma \in \de M\end{cases}$$
\end{enumerate}
where $\mathbb{H}(T_{p_\sigma}M) \subset T_{p_\sigma}M$ denotes the inward half-space in $T_{p_\sigma}M$. Here, for a vector space $X$ and a half-space $H \subset X$, $\Tilde{\mathsf{C}}_{\varnothing,\{pt\}}(H) := \Tilde{\mathsf{C}}_{\{pt\},\varnothing}(H) := \{pt\}$, and for $S \sqcup T| \geq 2$, 
$$\Tilde{\mathsf{C}}_{S,T}(H) := \coprod_{\substack{\mathfrak{P}=\{S_1,\ldots,S_k\}\\ S\sqcup T = \sqcup_\sigma S_\sigma, k \geq 2 }}\left\lbrace (v_\sigma,c_\sigma)\ \bigg|\ (v_\sigma, c_\sigma)\ \mathfrak{P}\text{-collapsed $(S,T)$-configuration in $H$} \right\rbrace \bigg/ (\de H \times \R_{>0}) $$

\end{defn}
\begin{defn}[FMAS compactification for manifolds with boundary]
We define the  \textsf{compactification} $\mathsf{C}_{S,T}(M,\de M)$ of $\mathsf{Conf}_{S,T}(M,\de M)$ by 
\begin{equation}
\mathsf{C}_{S,T}(M,\de M) = 
\coprod_{\substack{ \mathfrak{P} =\{S_1,\ldots,S_k\} \\ S\sqcup T= \sqcup_\sigma S_\sigma}}\left\lbrace \left(p_\sigma , c_\sigma\right)_{1\leq \sigma \leq k}\ \bigg|\ (p_\sigma,c_\sigma)\  \mathfrak{P}\text{-collapsed $(S,T)$-configuration}\right\rbrace
\end{equation}

\end{defn}
 Again, this is a manifold with corners and is compact if $M$ is compact. We proceed to describe the strata of low codimension. Let $U= \{u_1,\ldots,u_k\}, V = \{v_1,\ldots,v_k\}.$ The codimension 0 stratum again is given by the partition $\mathfrak{P} = \{\{u_1\},\ldots,\{u_k\},\{v_1\},\ldots, \{v_\ell\}\}.$ Let us describe the strata of codimension 1. We denote by $\de^\romI_S\mathsf{C}_{U,V}(M,\de M)$ a boundary stratum where a subset $S \subset U$ collapses in the bulk, described in the same way as above.  On manifolds with boundary, there are new boundary strata in the compactified configuration space given by the collapse of a subset of points to a point in the boundary. Concretely, given a subset $S=\{u_1,\ldots,u_{k'},v_1,\ldots,v_{\ell'}\} \subset U \sqcup V$, there is a boundary stratum $\de^{\romII}_S\mathsf{C}_{U,V}(M,\de M)$ corresponding to the partition $\mathfrak{P}=\{S,\{u_{k'+1}\},\ldots, \{u_k\},\{v_{\ell'+1}\},\ldots,\{v_\ell\}\}$ and collapsed configurations $(p_\sigma,c_\sigma)$ with $p_{\sigma} \in \de M$ and $c_\sigma$ corresponding to the partition $\mathfrak{P'} = \{\{u_1\},\ldots,\{u_k\},\{v_1\},\ldots,\{v_\ell\}\}$. The boundary decomposes as 
\begin{equation}
\de \mathsf{C}_{U,V}(M,\de M) = \coprod_{S \subseteq U} \de^\romI_S\mathsf{C}_{U,V}(M,\de M) \amalg\coprod_{S \subseteq U \sqcup V} \de^{\romII}_S\mathsf{C}_{U,V}(M,\de M)
\end{equation}

\section{Formal geometry}\label{app:formal_geometry}
We are interested in how perturbative expansions change if one changes the point of expansion. The language of formal geometry (\cite{GK,B}) provides adequate tools to study how the coefficients of Taylor expansions change if one changes coordinates. In this appendix we recollect some notions of formal geometry. We follow the expositions of \cite{CF3} and \cite{BCM}, and refer to these papers for proofs of the statemets. Another good reference is \cite{D}. 
\subsection{Formal power series on vector spaces} 
We begin with a very short review of formal power series on vector spaces. 
If $V$ is a finite-dimensional vector space, the polynomial algebra on $V$ is 
the symmetric algebra of the dual vector space \[S^{\bullet}V^* = \bigoplus_{k=0}^{\infty} S^kV^*.\] If $e_1,\ldots,e_n$ is a basis of $V$, with dual basis $y^1,\ldots,y^n$, then elements $f \in S^{\bullet}V^*$ can be represented by 
\[ f(y) = \sum_{i_1,\ldots,i_n=1}^{\infty} f_{i_1,\ldots,i_k}y_1^{i_1}\cdots y_n^{i_n} = \sum_{I}f_Iy^I,\] 
with only finitely many non-vanishing $f_I$. Here $I = \{i_1,\ldots,i_k\}$ is a multi-index and we understand $y^I = y^{i_1}\cdots y^{i_k}, y^\varnothing:=1$.
\\ We can complete this algebra to the algebra of formal power series $\Hat{S}V^*$, where infinitely many coefficents $f_I$ can be nonzero. Both $S^{\bullet}V^*$ and $\Hat{S}V^*$ are commutative algebras with the multiplication of polynomials or formal power series respectively, generated by $V^*$. Derivations of these algebras are specified by their value on these generators, hence the map 
\begin{align}
V \otimes S^{\bullet}V^* &\to \Der(S^{\bullet}V^*) \nonumber \\ 
v \otimes f &\mapsto \left(V^* \ni \alpha \mapsto \alpha(v) \cdot f\right) \label{eq:derivationiso} 
\end{align}
is an isomorphism with inverse 
\begin{align*}
 \Der(S^{\bullet}V^*)  &\to V \otimes S^{\bullet}V^*  \\ 
D &\mapsto \sum_{i=1}^n e_i \otimes D(y^i) \label{eq:derivationiso} 
\end{align*}
 In coordinates, it simply amounts to sending $e^i \mapsto \frac{\partial}{\partial y^i}$. 

\subsection{Formal exponential maps}
\label{App:formal_exp}
Let $M$ be a smooth manifold. Let $\varphi \colon U \to M$ where $U \subset TM$ is a open neighbourhood of the zero section. For $x \in M, y \in T_xM \cap U$ we write $\varphi(x,y) = \varphi_x(y)$. We say that $\varphi$ is a \textsf{generalized exponential map} if for all $x \in M$ we have that $\varphi_x(0) = x, \dr\varphi_x(0) = \mathrm{id}_{T_xM}$. In local coordinates we can write 
\[
\varphi_x^{i}(y)=x^{i}+y^{i}+\frac{1}{2}\varphi_{x,jk}^{i}y^jy^k+\frac{1}{3!}\varphi^{i}_{x,jk\ell}y^jy^ky^\ell+\dotsm
\]
where the $x^i$ are coordinates on the base and the $y^i$ are coordinates on the fibers. 
We identify two generalised exponential maps if their jets agree to all orders. A \textsf{formal exponential map} is an equivalence class of generalised exponential maps. It is completely specified by the sequence of functions $\left(\varphi^i_{x,i_1\ldots i_k}\right)_{k=0}^{\infty}$. By abuse of notation, we will denote equivalence classes and their representatives by $\varphi$. From a formal exponential map $\varphi$ and a function $f \in C^{\infty}(M)$, we can produce a section $\sigma \in \Gamma(\Hat{S}T^*M)$ by defining $\sigma_x = \T\varphi_x^*f$, where $\T$ denotes the Taylor expansion in the fiber coordinates around $y=0$ and we use any representative of $\varphi$ to define the pullback. We denote this section by $\T\varphi^*f$, it is independent of the choice of representative, since it only depends on the jets  of the representative. 

\subsection{Grothendieck connection}\label{sec_GrConn}
One can define a flat connection $D_\textsf{G}$ on $\Hat{S}T^*M$ with the property that $D_{\textsf{G}}\sigma = 0$ if and only if $\sigma = \T\varphi^*f$ for some $f\in C^\infty(M)$. Namely, $D_\textsf{G} = \dd + R$ where $R \in \Gamma(T^*M \otimes TM \otimes \Hat{S}T^*M)$ is a one-form with values in derivations of $\Hat{S}T^*M$, which we identify with $\Gamma(TM \otimes \Hat{S}T^*M)$ using the isomorphism \eqref{eq:derivationiso}\footnote{This is slightly confusing, since the basis of $T_xM$ is usually denoted $\frac{\partial}{\partial x^i}$, which under this isomorphism gets sent to $\frac{\partial}{\partial y^i}$. }. $R$ can be defined in local coordinates by $R = R_i\dd x^i$ and 
\begin{equation}
R_i(x;y) = \left(\left(\frac{\partial \varphi_x}{\partial y}\right)^{-1}\right)^k_j\frac{\partial\varphi_x^{j}}{\partial x^{i}}\frac{\de }{\de y^k} =: Y^k_i(x;y)\frac{\de}{\de y^k}
\end{equation}
so that 
\[ R(x;y) = R_i(x;y)dx^i = Y_i^k\frac{\partial}{\partial y^k}dx^i,\]
where we use the Einstein summation convention. For $\sigma \in \Gamma(\Hat{S}T^*M)$, $R(\sigma)$ is given by the Taylor expansion (in the $y$ coordinates) of $$-\dd_y\sigma \circ (\dd_y\varphi)^{-1} \circ \dd_x\varphi \colon \Gamma(TM) \to \Gamma(\Hat{S}T^*M),$$
this shows that $R$ does not depend on the choice of coordinates. For a vector field $\xi = \xi \frac{\partial}{\partial x^i}$ we get 
\begin{equation} 
\label{AppA:Aconnection_G}
D_\textsf{G}^{\xi} = \xi + \Hat{\xi},
\end{equation}
where 
\begin{equation} 
\label{AppA:formal_vector}
\Hat{\xi}(x;y) = \iota_{\xi}R(x;y) = \xi^i(x)Y_i^k(x;y)\frac{\partial}{\partial y^k}. 
\end{equation}
The connection $D_{\textsf{G}}$ is called a \textsf{Grothendieck connection}. Its flatness can be expressed by 
\[\dd_xR + \frac{1}{2}[R,R] = 0.\] 
It can be shown that its cohomology is concentrated in degree 0 and is given by $$H^0_{D_\mathsf{G}}(\Gamma(\Hat{S}T^*M))=\mathsf{T}\varphi^*C^\infty(M)\cong C^{\infty}(M).$$
\subsection{Formal vertical tensor fields}
Now, let $E \to M$ be any tensorial\footnote{I.e. any bundle which is a tensor product or antisymmetric or symmetric product of the tangent or cotangent bundle, or a direct sum thereof.} bundle, e.g. $E = \bigwedge^kTM$. Its sections are called \textsf{tensor fields of type $E$}. Then its associated \textsf{formal vertical bundle} is  $\Hat{E}:= E\otimes \Hat{S}T^*M$, and sections of this bundle are called \textsf{formal vertical tensors of type $E$}. One can think of these bundles as tensors of the same type on $TM$ where the dependence on fiber directions is formal. The formal exponential map defines an injective map $\T\varphi^* \colon E \to \Hat{E}$ by taking the Taylor expansion of a tensor field pulled back to $U$ by $\varphi$\footnote{Since $\varphi$ is a local diffeomorphism we can define the pullback of contravariant tensors as the pushforward of the inverse.}.We can let $R$ act on formal vertical tensors by Lie derivative. Thus we get a Grothendieck connection $D_\textsf{G} = \dd + R$ on any formal vertical tensor bundle. Again, it is flat, and the flat sections are precisely the ones in the image of $\T\varphi^*$. Moreover the cohomology is concentrated in degree 0 and given by the flat sections, i.e. $\Hat{E}$-valued 0-forms. 

\subsection{Changing the formal exponential map}\label{app:change_of_phi}
Let $\phi$ be a family of formal exponential maps depending on a parameter $t$ belonging to an open interval $I$. Then we can associate this family a formal exponential map $\psi$ for the manifold $M\times I$ by $\psi(x,t,y,\tau):=((\phi)_{x,t}(y),t+\tau)$, where $\tau$ denotes the tangent variable to $t$. We want to define the associated connection $\Tilde{R}$: on a section $\Tilde{\sigma}$ of $\Hat{S}T^*(M\times I)$ we have, by definition 
\begin{equation}
\label{change_R}
\Tilde{R}(\Tilde{\sigma})=-(\dd_y\Tilde{\sigma},\dd_\tau\Tilde{\sigma})\circ \begin{pmatrix}(\dd_y\phi)^{-1}& 0\\ 0 &1\end{pmatrix}\circ\begin{pmatrix}\dd_x\phi&\dot{\phi}\\0&1\end{pmatrix}.
\end{equation}
So we can write $\Tilde{R}=R+C\dd t+T$ with $R$ defined as in Appendix \ref{sec_GrConn} (but now $t$-dependent),
\[
C(\Tilde{\sigma})=-\dd_y\Tilde{\sigma}\circ(\dd_y\phi)^{-1}\circ\dot{\phi},
\]
and $T=-\dd t\frac{\partial}{\partial\tau}$. We can formulate the MC equation for $\Tilde{R}$ observing that $\dd_xT=\dd_tT=0$ and that $T$ commutes with both $R$ and $C$. The $(2,0)$-form component over $M\times I$ yields again the MC equation for $R$, whereas the $(1,1)$-component reads
\[
\dot{R}=\dd_xC+[R,C].
\]
Hence, under a change of formal exponential map, $R$ changes by a gauge transformation with generator the section $C$ of $\Hat{\mathfrak{X}}(TM):=TM\otimes \Hat{S}T^*M$. Finally, if $\sigma$ is a section in the image of $\T\phi^*$, then by a simple computation one gets 
\[
\dot{\sigma}=-L_C\sigma,
\]
which can be interpreted as the associated gauge transformation for sections. 

\subsection{Extension to graded manifolds} \label{app:formalexpo_graded}
The results of the previous subsections can be generalised to the category of graded manifolds using the algebraic reformulation of formal exponential maps developed in \cite{LS}. \\ 
Namely, given a formal exponential map $\varphi$ on a smooth manifold $\Bar{M}$, one can construct a map 
\begin{equation}
\operatorname{pbw} \colon \Gamma(\Hat{S}T\Bar{M}) \to \mathcal{D}(\Bar{M})
\end{equation} 
from sections of the completed symmetric algebra of the tangent bundle to the algebra of differential operators $\mathcal{D}(\Bar{M})$ by defining 
\begin{equation}
\operatorname{pbw}(X_1 \odot \cdots \odot X_n)(f) = \frac{\dd}{\dd t_1}\bigg|_{t_1 = 0}\cdots \frac{\dd}{\dd t_n}\bigg|_{t_n = 0}f(\varphi(t_1X_1 + \ldots + t_nX_n)).
\end{equation}
This map can be defined also in the category of graded manifolds by choosing a torsion-free connection $\nabla$ on the tangent bundle of a graded manifold $M$. In particular, there still exists an element $R^{\nabla} \in \Omega^1(M,  TM \otimes \Hat{S}T^*M)$ with the property that 
$D_\textsf{G} = \dd_{M} + R^{\nabla}$ is a flat connection on $\Hat{S}T^*M$, namely 
\begin{equation}
R^{\nabla} = -\delta + \Gamma + A^{\nabla},
\end{equation}
where  in local coordinates $x^i$ on $M$ and corresponding coordinates $y^i$ on $TM$ we have (see \cite{D,LS}) 
\begin{align}
\delta &= \dd x^i \frac{\partial}{\partial y^i}, \\
\Gamma &= -\dd x^i \Gamma^k_{ij}(x)y^j\frac{\partial}{\partial y^k}, \\
A^{\nabla} &= \dd x^i \sum_{|J| \geq 2} A_{i,J}^k(x) y^J \frac{\partial}{\partial y^k}.
\end{align}
Here $\Gamma^k_{ij}$ are the Christoffel symbols of $\nabla$. We define $R_i \in \Gamma(M, \Hat{S}T^*M \otimes TM)$ and $Y_i^k \in \Gamma(M,\Hat{S}T^*M)$ by  
\begin{equation}
R^{\nabla} = R_i(x;y) \dd x^i = Y_i^k(x;y) \dd x^i \frac{\partial}{\partial y^k}. \label{eq:def_graded_Y}
\end{equation} $D_\textsf{G}$ extends to a differential on $\Omega^{\bullet}(M,\Hat{S}T^*M)$. The ``Taylor expansion'' of a function $f \in C^{\infty}(M)$ can be defined as (\cite{LS})
\begin{equation}
\T\varphi^*f := \sum_I \frac{1}{I!}y^I\operatorname{pbw}\left(\underleftarrow{\partial_x^I}\right)(f) \label{eq:graded_Taylor}
\end{equation}
where 
$$\underleftarrow{\partial_x^I} = \underbrace{\partial_{x_1} \odot \cdots \odot \partial_{x_1}}_{i_1} \odot \dotsm \odot \underbrace{\partial_{x_n} \odot \cdots \odot \partial_{x_n}}_{i_n}.$$ 
One can prove that \eqref{eq:graded_Taylor} still has the same properties, namely, the image of $\T\varphi^*$ consists precisely of the $D_\textsf{G}$-closed sections of $\Hat{S}T^*M$. \\ 
One can describe how the formal exponential map varies under the choice of connection mimicking the construction for the smooth case described in \ref{app:change_of_phi}. Namely, assume we have a smooth family $\nabla^t$ of connections on $TM$, then we can associate to that family a connection $\widetilde{\nabla}$ on $M \times I$. The corresponding $R^{\widetilde{\nabla}}$ still can be split as 
$$ R^{\widetilde{\nabla}} = R^{\nabla^t} + C^{\nabla^t} \dd t + T $$
as in subsection \ref{app:change_of_phi}, where $C \in \Gamma(M,\Hat{S}T^*M)$. The fact that $(D_\textsf{G})^2 = 0$ translates into the same equations as before, namely, we have 
\begin{equation}
\dot{R^{\nabla^t}} = \dd_M R^{\nabla^t} + \left[C^{\nabla^t},R^{\nabla^t}\right]
\end{equation}
and for any section $\sigma$ in the image of $\T\varphi^*$ we have  
\begin{equation} 
\dot{\sigma} = -L_{C^{\nabla^t}}\sigma.
\end{equation}
\end{appendix}

\printbibliography

\end{document}